\newcommand\remove[1]{}
\spnewtheorem{observation}{Observation}{\bfseries}{\itshape}
\title{Improved Approximation Algorithms for Projection Games}
\author{Pasin Manurangsi\inst{1}\thanks{Part of this work was completed while the author was at Massachusetts Institute of Technology.} \and Dana Moshkovitz\inst{2} \thanks{This material is based upon work supported by the National Science Foundation under Grant numbers 1218547 and 1452302.}}
\date{\today}
\institute{University of California, Berkeley, CA 94720, USA \\
        \email{pasin@berkeley.edu}
    	\and
        Massachusetts Institute of Technology, Cambridge MA 02139, USA\\
  \email{dmoshkov@mit.edu}}
\begin{document}
\maketitle

\begin{abstract}
The projection games (aka \textsc{Label Cover}) problem is of great importance to the field of approximation algorithms, since most of the NP-hardness of approximation results we know today are reductions from \textsc{Label Cover}.
In this paper we design several approximation algorithms for projection games:
\begin{enumerate}
\item	A polynomial-time approximation algorithm that improves on the previous best approximation by Charikar, Hajiaghayi and Karloff~\cite{CHK}.

\item	A sub-exponential time algorithm with much tighter approximation for the case of smooth projection games.

\item A polynomial-time approximation scheme (PTAS) for projection games on planar graphs and a tight running time lower bound for such approximation schemes.\footnote{The conference version of this paper had only the PTAS but not the running time lower bound.} \\
\end{enumerate}

{\bf Keywords:} \textsc{Label Cover}, projection games
\end{abstract}

\section{Introduction}

The projection games problem (also known as \textsc{Label Cover}) can be defined as follows.

\textsc{Input:}
A bipartite graph $G = (A, B, E)$, two finite sets of alphabets (aka labels) $\Sigma_A,\Sigma_B$, and, for each edge $e = (a, b) \in E$, a ``projection'' $\pi_e : \Sigma_A \to \Sigma_B$.

\textsc{Goal:}
Find an assignment to the vertices $\varphi_A : A \to \Sigma_A$ and $\varphi_B : B \to \Sigma_B$ that maximizes the number of edges $e = (a, b)$ that are ``satisfied'', i.e., $\pi_e(\varphi_A(a)) = \varphi_B(b)$.\\

An instance is said to be ``satisfiable'' or ``feasible'' or have ``perfect completeness'' if there exists an assignment that satisfies all edges. An instance is said to be ``$\delta$-nearly satisfiable'' or ``$\delta$-nearly feasible'' if there exists an assignment that satisfies $(1-\delta)$ fraction of the edges. In this work, we focus on satisfiable instances of projection games.


\textsc{Label Cover} has gained much significance for approximation algorithms because of the following PCP Theorem, establishing that it is NP-hard, given a satisfiable projection game instance, to satisfy even an $\varepsilon$ fraction of the edges:
\begin{theorem}[Strong PCP Theorem]\label{t:strong-PCP}
For every $n$, $\varepsilon = \varepsilon(n)$, there is $k = k(\varepsilon)$, such that deciding \textsc{Sat} on inputs of size $n$ can be reduced to finding, given a satisfiable projection game on alphabets of size $k$, an assignment that satisfies more than an $\varepsilon$ fraction of the edges.
\end{theorem}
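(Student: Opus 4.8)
The plan is to obtain Theorem~\ref{t:strong-PCP} by composing two classical results: a ``weak'' PCP theorem producing a projection game with a \emph{constant} soundness gap, and the parallel repetition theorem, which drives the gap down to an arbitrarily small $\varepsilon$ at the price of enlarging the alphabet by an amount depending on $\varepsilon$ only.

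\textbf{Step 1: a constant-gap projection game.} I would start from the PCP theorem of Arora--Safra and Arora--Lund--Motwani--Sudan--Szegedy in its \textsc{Gap-3Sat} form: there is a polynomial-time reduction from \textsc{Sat} on inputs of size $n$ to \textsc{3Sat} formulas $\psi$ with $m = \mathrm{poly}(n)$ clauses such that satisfiable inputs map to satisfiable $\psi$, while unsatisfiable inputs map to $\psi$ in which no assignment satisfies more than a $(1-\varepsilon_0)$ fraction of the clauses, for a universal constant $\varepsilon_0 > 0$. From $\psi$ build the clause--variable game $G_0 = (A,B,E)$: $A$ is the set of clauses, $B$ the set of variables, $E$ contains $(C,x)$ whenever $x$ occurs in $C$, $\Sigma_A$ lists the at most $7$ satisfying partial assignments to the three variables of a clause, $\Sigma_B = \{0,1\}$, and $\pi_{(C,x)}$ outputs the value that the partial assignment gives to $x$. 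A satisfying assignment of $\psi$ satisfies every edge, so $G_0$ has perfect completeness. Conversely, an assignment $(\varphi_A,\varphi_B)$ satisfying a $(1-\delta)$ fraction of the $3m$ edges leaves at most $3\delta m$ clauses incident to an unsatisfied edge, and every other clause $C$ is satisfied by $\varphi_B$ itself (all three edges at $C$ being satisfied forces $\varphi_B$ to agree on the three variables of $C$ with the satisfying partial assignment $\varphi_A(C)$); hence $\psi$ has an assignment satisfying a $(1-3\delta)$ fraction of its clauses. So on unsatisfiable inputs $\mathrm{val}(G_0) \le 1 - \varepsilon_0/3 =: s_0 < 1$, while $|\Sigma_A|, |\Sigma_B| \le 7$.

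\textbf{Step 2: parallel repetition.} Apply the $t$-fold parallel repetition to $G_0$: the vertex sets become $A^t, B^t$, the edge set $E^t$, the alphabets $\Sigma_A^t, \Sigma_B^t$, and the projection on a product edge is the coordinate-wise product of projections, so $G_0^{\otimes t}$ is again a projection game, and perfect completeness is preserved coordinate-wise. By Raz's parallel repetition theorem (or, since $G_0$ is a projection game, the cleaner bound of Rao) there is a universal constant $c = c(s_0) < 1$ with $\mathrm{val}(G_0^{\otimes t}) \le c^{\,t}$ -- crucially, the base of the exponent is controlled by $s_0$ and the answer length of $G_0$, both constants, and \emph{not} by the repeated alphabet. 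Choose $t = t(\varepsilon) = \lceil \log_{1/c}(1/\varepsilon) \rceil + 1$, so that $\mathrm{val}(G_0^{\otimes t}) < \varepsilon$; the alphabet size is then $k = k(\varepsilon) = 7^{\,t}$, a function of $\varepsilon$ alone. Composing the two steps: a satisfiable \textsc{Sat} instance yields a satisfiable projection game (so an assignment satisfying more than an $\varepsilon$ fraction exists), an unsatisfiable one yields a game of value $< \varepsilon$ (so no assignment satisfies even an $\varepsilon$ fraction); checking any claimed assignment therefore decides \textsc{Sat}. The reduction runs in time $\mathrm{poly}(n)^{\,t} = n^{O(\log(1/\varepsilon))}$, polynomial whenever $\varepsilon$ is constant.

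\textbf{Main obstacle.} Each of the two steps is a black-box appeal to a deep theorem, so the only real care needed in combining them is the alphabet bookkeeping in Step~2: one must invoke a version of parallel repetition whose decay rate in $t$ is governed by the parameters of the \emph{base} game rather than the repeated one -- Raz's theorem is stated this way, so this works -- and then verify that taking $t$ large enough to beat $\varepsilon$ still leaves $k$ independent of $n$. The genuinely hard ingredient is the parallel repetition theorem itself, which I would cite rather than reprove; as an alternative that sidesteps it, one could invoke the near-linear-size two-query PCP of Moshkovitz--Raz, which directly outputs a satisfiable projection game of soundness $\varepsilon$ with alphabet size depending on $\varepsilon$ only, plugging in for Step~2.
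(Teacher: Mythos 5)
The paper does not prove Theorem~\ref{t:strong-PCP}; it is a cited background result, and the paragraphs immediately after the statement describe exactly the two known proof routes: parallel repetition applied to the basic PCP theorem~\cite{Raz,BFL,BFLS,AS,ALMSS}, and the PCP-composition route of~\cite{MR08,DS}. Your proposal is a correct write-up of the first route (with a correct clause--variable gadget, correct soundness bookkeeping giving $s_0 = 1-\varepsilon_0/3$, and the right invocation of Raz/Rao so that $k=7^t$ depends only on $\varepsilon$), and you even flag the second route as the alternative at the end --- so this is essentially the same approach the paper appeals to, just spelled out.
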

This theorem is the starting point of the extremely successful long-code based framework for achieving hardness of approximation results~\cite{BGS,Has97}, as well as of other optimal hardness of approximation results, e.g., for \textsc{Set Cover}~\cite{Feige,M,DS}.

We know several proofs of the strong PCP theorem that yield different parameters in Theorem~\ref{t:strong-PCP}. The parallel repetition theorem~\cite{Raz}, applied on the basic PCP Theorem~\cite{BFL,BFLS,AS,ALMSS}, yields  $k(\varepsilon) = (1/\varepsilon)^{O(1)}$. Alas, it reduces exact \textsc{Sat} on input size $n$ to \textsc{Label Cover} on input size $n^{O(\log 1/\varepsilon)}$. Hence, a lower bound of $2^{\Omega(n)}$ for the time required for solving \textsc{Sat} on inputs of size $n$ only implies a lower bound of $2^{n^{\Omega(1/\log 1/\varepsilon)}}$ for \textsc{Label Cover} via this theorem. This bound deteriorates as $\varepsilon$ approaches zero; for instance, if $\varepsilon = (1/n)^{O(1)}$, then the bound is $\Omega(1)$, which gives us no information at all.

A different proof is based on PCP composition~\cite{MR08,DS}. It has smaller blow up but larger alphabet size. Specifically, it shows a reduction from exact \textsc{Sat} with input size $n$ to \textsc{Label Cover} with input size $n^{1+o(1)}poly(1/\varepsilon)$ and alphabet size $\exp(1/\varepsilon)$.

One is tempted to conjecture that a PCP theorem with both a blow-up of $n^{1+o(1)}poly(1/\varepsilon)$ and an alphabet size $(1/\varepsilon)^{O(1)}$ holds. See~\cite{M} for a discussion of potential applications of this ``Projection Games Conjecture''.

Finding algorithms for projection games is therefore both a natural pursuit in combinatorial optimization, and also a way to advance our understanding of the main paradigm for settling the approximability of optimization problems.
Specifically, our algorithms help make progress towards the following questions:
\begin{enumerate}
\item Is the "Projection Games Conjecture" true? What is the tradeoff between the alphabet size, the blow-up and the approximation factor?
\item What about even stronger versions of the strong PCP theorem?
E.g., Khot introduced ``smooth'' projection games~\cite{Khot-coloring} (see discussion below for the definition). What kind of lower bounds can we expect to get via such a theorem?
\item	Does a strong PCP theorem hold for graphs of special forms, e.g., on planar graphs?
\end{enumerate}

\section{Our Results}

\subsection{Better Approximation in Polynomial Time}

In 2009, Charikar, Hajiaghayi and Karloff presented a polynomial-time $O((nk)^{1/3})$-approximation algorithm for \textsc{Label Cover} on graphs with $n$ vertices and alphabets of size $k$~\cite{CHK}.\footnote{Recall that, for $\alpha \geq 1$, an $\alpha$-approximation algorithm for a maximization problem is an algorithm that, for every input, outputs a solution of value at least $1/\alpha$ times the value of the optimal solution. $\alpha$ is called the ``approximation ratio'' of the algorithm.} This improved on Peleg's $O((nk)^{1/2})$-approximation algorithm~\cite{Pel07}. Both Peleg's and the CHK algorithms worked in the more general setting of arbitrary (not necessarily projections) constraints on the edges and possibly unsatisfiable instances.
We show a polynomial-time algorithm that achieves a better approximation for satisfiable projection games:
\begin{theorem}\label{t:approx}
There is a polynomial-time algorithm that, given a satisfiable instance of projection games on a graph of size $n$ and alphabets of size $k$, finds an assignment that satisfies $\Omega(1/(nk)^{1/4})$ fraction of the edges.
\end{theorem}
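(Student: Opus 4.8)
The plan is to exploit that on a satisfiable instance $\mathrm{OPT}=|E|$, so it suffices to exhibit in polynomial time an assignment satisfying $\Omega\!\left(|E|/(nk)^{1/4}\right)$ edges. I would run several independent polynomial-time subroutines, each a ``propagation'' or ``best-response'' heuristic, output whichever assignment is best, and then prove that for every value of $n$, $|E|$, the alphabet size $k$ and the degree sequence, at least one subroutine meets the bound. After discarding isolated vertices we may assume $d_A(a),d_B(b)\ge 1$ for all vertices; note that $A,B$ are disjoint, so $|A||B|\le (|A|+|B|)^2/4\le n^2/4$.

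The three ``cheap'' subroutines are: \emph{(Star)} pick $a^\star\in A$ of maximum degree, try all $k$ labels, propagate each forced choice to $N(a^\star)$, keep the best; taking $\varphi^\star(a^\star)$ satisfies all $d_A(a^\star)\ge |E|/|A|\ge |E|/n$ edges at $a^\star$ (and symmetrically for a $B$-vertex). \emph{(Plurality)} fix $\varphi_A$ arbitrarily and let each $b\in B$ take the most frequent projected label $\pi_{(a,b)}(\varphi_A(a))$ over its neighbors; since $|\Sigma_B|\le k$, pigeonhole gives at least $d_B(b)/k$ satisfied edges at $b$, hence $|E|/k$ in total. \emph{(Forest)} take a spanning forest $F$ of $G$ and solve projection games on $F$ optimally by a leaf-to-root dynamic program over labels (the $B$-side of a tree edge is forced by the $A$-side; descending from a $B$-vertex to an $A$-child we branch over the preimage of the projection); since $G$ has no isolated vertices, $F$ has at least $n/2$ edges, all of which get satisfied.

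The workhorse is a \emph{neighborhood-propagation} subroutine: choose $a_0\in A$ maximizing $\sum_{b\sim a_0} d_B(b)$, and for each of the $k$ labels of $a_0$ force the induced label on every $b\in N(a_0)$ and then let each other $a'\in A$ take the label maximizing its number of satisfied edges into the already-labeled set $N(a_0)$; keep the best. When $a_0$ is labeled $\varphi^\star(a_0)$, every $b\in N(a_0)$ carries its correct label, so for each $a'$ the choice $\varphi^\star(a')$ already satisfies \emph{all} edges from $a'$ to $N(a_0)$; hence this subroutine satisfies at least $\sum_{b\sim a_0}d_B(b)\ge \frac1{|A|}\sum_b d_B(b)^2\ge |E|^2/(|A||B|)\ge 4|E|^2/n^2$ edges. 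A useful strengthening, since we can afford $k^{\lfloor\log_k n\rfloor}\le n$ exhaustive guesses, is to exhaustively label the $\lfloor\log_k n\rfloor$ $A$-vertices chosen greedily to maximize $\sum_{b\in N(S)}d_B(b)$ before best-responding; this helps especially for small $k$.

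Finally I would verify by case analysis that the ranges covered by these guarantees tile all of parameter space with the claimed margin: the Forest subroutine wins when $|E|$ is small (roughly $|E|\lesssim n^{5/4}k^{1/4}$), the neighborhood-propagation subroutine wins when $|E|$ is large (roughly $|E|\gtrsim n^{7/4}/k^{1/4}$), Plurality wins when $k\lesssim n^{1/3}$, and a Star subroutine wins whenever some degree is large — all combined with $|A||B|\le n^2/4$ and AM--GM. I expect the main obstacle to be precisely the leftover ``medium'' regime, moderate density together with a moderate alphabet (roughly $n^{1/3}\lesssim k\lesssim n$ with $|E|$ around $n^{3/2}$), where each cheap subroutine is individually a factor $n^{\Omega(1)}$ short; closing this regime is the technical heart of the theorem, and is where a more careful greedy seeding of the neighborhood-propagation subroutine (or, alternatively, rounding a linear-programming relaxation in the spirit of Charikar--Hajiaghayi--Karloff~\cite{CHK} but exploiting the projection structure) has to be brought in. Once that regime is handled, outputting the best of all subroutines yields the claimed polynomial-time $O((nk)^{1/4})$-approximation.
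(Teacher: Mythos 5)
There is a genuine gap, and you have actually located it yourself: the ``medium regime'' is not a small leftover case but the entire content of the improvement over Charikar--Hajiaghayi--Karloff. Your subroutines, after discarding the Star subroutine (which is always dominated by the Forest bound), give per-edge guarantees of roughly $n/2$ (Forest), $|E|/k$ (Plurality), and $|E|^2/n^2$ (Neighborhood-propagation); the geometric mean of the corresponding approximation ratios $2|E|/n$, $k$, and $n^2/|E|$ is $\Theta\bigl((nk)^{1/3}\bigr)$, which is exactly the CHK bound, not the claimed $(nk)^{1/4}$. No case analysis over $|E|$, $k$ and degrees will push these three guarantees to $1/4$; one needs a fourth algorithm whose guarantee is \emph{complementary} so that the product of all four becomes $\Theta(n_A|\Sigma_A|)$ with the other parameters cancelling.

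The paper achieves this cancellation by introducing two parameters that your proposal never tracks. The first is $\overline{p}$, the average preimage size $|\pi_e^{-1}(\sigma_b^{OPT})|$; a greedy $B$-then-$A$ assignment already satisfies an $\overline{p}/|\Sigma_A|$ fraction of the edges, which is the quantitatively correct refinement of your Plurality bound. The second is $h_{\max}=\max_a |E(N_2(a))|$, the maximum number of edges touching a two-hop neighborhood. The paper's propagation algorithm goes one hop deeper than yours: after guessing $a_0$'s label and forcing labels on $N(a_0)$, each $a'\in N_2(a_0)$ retains at most $\overline{p}$ candidate labels, and a best-response step then satisfies a $1/\overline{p}$ fraction of the edges incident to $N_2(a_0)$, i.e.\ at least $h(a_0)/\overline{p}$ edges. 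Your version stops at $N(a_0)$ and so only collects $|E(N(a_0))|$ edges, which is too weak. Finally, the paper's Divide-and-Conquer algorithm is absent from your proposal entirely: when $h_{\max}$ is small, one can repeatedly extract pieces of the form $N(a)\cup N_2(a)$ that still contain $\Omega(|E|^2/(n_A n_B))$ fresh edges, and solve each piece exactly; this yields $\Omega(|E|^3/(n_A n_B h_{\max}))$ satisfied edges. The four resulting ratios
\[
\frac{|E|}{n_B},\qquad \frac{|\Sigma_A|}{\overline{p}},\qquad \frac{|E|\,\overline{p}}{h_{\max}},\qquad \frac{n_A n_B h_{\max}}{|E|^2}
\]
are designed so that $|E|$, $n_B$, $\overline{p}$ and $h_{\max}$ all cancel in the product, leaving $n_A|\Sigma_A|$ and hence the $(n_A|\Sigma_A|)^{1/4}$ bound. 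Without some pair of algorithms whose guarantees interlock through a shared auxiliary parameter in this way, the geometric-mean trick is stuck at the cube root, so the missing ``medium regime'' is not a technicality to be patched but the theorem itself.

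Two smaller remarks. First, the Star subroutine's guarantee $d_A(a^\star)\ge |E|/|A|$ gives an approximation ratio of roughly $n$, which never beats the Forest bound, so it adds nothing. Second, your log-depth exhaustive seeding idea ($k^{\lfloor\log_k n\rfloor}\le n$ guesses) is a reasonable instinct but you would still need to explain why the post-seeding best-response satisfies enough edges in the problematic regime; the paper instead uses a single guess at $a_0$ but reasons about how that guess constrains the \emph{two}-hop neighborhood, which is where the extra factor comes from.
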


\subsection{Algorithms for Smooth Projection Games}

Khot introduced ``smooth'' projection games in order to obtain new hardness of approximation results, e.g., for coloring problems~\cite{Khot-coloring}. In a smooth projection game, for every vertex $a\in A$, the assignments projected to $a$'s neighborhood by the different possible assignments $\sigma_a\in\Sigma_A$ to $a$, differ a lot from one another (alternatively, form an error correcting code with high relative distance). More formally:
  \begin{definition} \label{d:smooth}
    A projection game instance is $\mu$-smooth if for every $a \in A$ and any distinct assignments $\sigma_a, \sigma'_a \in\Sigma_A$, we have $$Pr_{b \in N(a)}[\pi_{(a, b)}(\sigma_a) = \pi_{(a, b)}(\sigma'_a)] \leq \mu$$
  \end{definition}
where $N(a)$ is the set of neighbors of $a$.

Intuitively, smoothness makes the projection games problem easier, since knowing only a small fraction of the assignment to a neighborhood of a vertex $a\in A$  determines the assignment to $a$.

Smoothness can be seen as an intermediate property between projection and uniqueness, with uniqueness being $0$-smoothness. Khot's Unique Games Conjecture~\cite{Khot} is that the Strong PCP Theorem holds for unique games on nearly satisfiable instances for any constant $\varepsilon >0$.

The Strong PCP Theorem (Theorem~\ref{t:strong-PCP}) is known to hold for $\mu$-smooth projection games with $\mu > 0$. However, the known reductions transform \textsc{Sat} instances of size $n$ to instances of smooth \textsc{Label Cover} of size at least $n^{O((1/\mu)\log (1/\varepsilon))}$~\cite{Khot-coloring,HK}. Hence, a lower bound of $2^{\Omega(n)}$ for \textsc{Sat} only translates into a lower bound of $2^{n^{\Omega(\mu/\log (1/\varepsilon))}}$ for $\mu$-smooth projection games.

Interestingly, the efficient reduction of Moshkovitz and Raz~\cite{MR08} inherently generates instances that are not smooth. Moreover, for unique games it is known that if they admit a reduction from \textsc{Sat} of size $n$, then the reduction must incur a blow-up of at least $n^{1/\delta^{\Omega(1)}}$ for $\delta$-almost satisfiable instances. This follows from the sub-exponential time algorithm of Arora, Barak and Steurer~\cite{ABS}.

Given this state of affairs, one wonders whether a large blow-up is necessary for smooth projection games.
We make progress toward settling this question by showing:
\begin{theorem}\label{t:smooth}
For any constant $c\geq 1$, the following holds: there is a randomized algorithm that given a $\mu$-smooth satisfiable projection game in which all vertices in $A$ have degrees at least $\frac{c\log |A|}{\mu}$, finds an optimal assignment in time $\exp(O(\mu |B|\log |\Sigma_B|))poly(|A|,|\Sigma_A|)$ with probability $1/2$.

Moreover, there is a deterministic $O(1)$-approximation algorithm for $\mu$-smooth satisfiable projection games of any degree. The deterministic algorithm runs in time $\exp(O(\mu |B|\log |\Sigma_B|))poly(|A|,|\Sigma_A|)$ as well.
\end{theorem}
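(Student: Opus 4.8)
The key structural fact behind everything is that $\mu$-smoothness makes a tiny piece of the $B$-side assignment carry all the information. Call $T\subseteq B$ a \emph{separating set} if for every $a\in A$ and every pair of distinct labels $\sigma_a,\sigma_a'\in\Sigma_A$ there is some $b\in T\cap N(a)$ with $\pi_{(a,b)}(\sigma_a)\neq\pi_{(a,b)}(\sigma_a')$. If $T$ separates, then the restriction $\psi=\varphi_B|_T$ of any assignment determines $\varphi_A$ (each $a$ has at most one label consistent with $\psi$ on $T\cap N(a)$), and thus a candidate full assignment by pushing the labels of $A$ forward along the edges. So the plan for the randomized algorithm is: pick $T$; enumerate all $|\Sigma_B|^{|T|}$ maps $\psi\colon T\to\Sigma_B$; for each $\psi$ and each $a$ form $C_a=\{\sigma_a:\pi_{(a,b)}(\sigma_a)=\psi(b)\text{ for all }b\in T\cap N(a)\}$; if every $C_a$ is a singleton let $\varphi_A$ be the induced assignment, propagate to $\varphi_B$, and test whether all edges are satisfied; output the first $\psi$ that passes. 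Since the instance is satisfiable, the branch $\psi=\varphi_B^{\star}|_T$ coming from a satisfying assignment $\varphi^{\star}$ passes the test, so correctness is immediate once $T$ is separating.

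To build $T$ I would include each $b\in B$ independently with probability $p=\Theta(\mu)$ (equivalently, take a uniformly random subset of size $\Theta(\mu|B|)$). Then $\mathbb{E}|T|=\Theta(\mu|B|)$, and since $d_a\ge \frac{c\log|A|}{\mu}$ we get $\mathbb{E}|T\cap N(a)|=p\,d_a=\Omega(c\log|A|)$, so by a Chernoff bound $|T\cap N(a)|\ge s$ holds for all $a\in A$ simultaneously with high probability, where $s=\Theta\!\big(\log(|A|\,|\Sigma_A|)/\log(1/\mu)\big)$ and the implied constant in $p$ is chosen large enough (depending on $c$ and, harmlessly, on how $|\Sigma_A|$ compares to $|A|$) to make this threshold fit under $p\,d_a$. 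Conditioned on that, fix a triple $(a,\sigma_a,\sigma_a')$: since $\mu$-smoothness gives $|\{b\in N(a):\pi_{(a,b)}(\sigma_a)=\pi_{(a,b)}(\sigma_a')\}|\le\mu d_a$, the probability that $T\cap N(a)$ lands entirely inside this agreement set is at most $\mu^{s}$, and a union bound over the at most $|A|\,|\Sigma_A|^2$ triples shows $T$ is separating with probability close to $1$. Intersecting these events with $|T|\le C\mu|B|$ gives success probability at least $1/2$, and the total work is $|\Sigma_B|^{|T|}\cdot\mathrm{poly}=\exp(O(\mu|B|\log|\Sigma_B|))\,\mathrm{poly}(|A|,|\Sigma_A|)$.

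For the ``moreover'' clause I would first derandomize the choice of $T$: the bad events ``$|T\cap N(a)|<s$'', ``$T\cap N(a)$ lies in the agreement set of some $(\sigma_a,\sigma_a')$'', and ``$|T|$ too large'' depend only on the instance, not on any hidden assignment, so the standard method of conditional expectations (or a pessimistic-estimator argument) lets one fix the membership of the vertices of $B$ one at a time while keeping the expected number of violated events below $1$, producing a deterministic separating set of size $O(\mu|B|)$; everything downstream is already deterministic. It then remains to drop the degree hypothesis at the price of an $O(1)$-approximation. Split $A$ into $A_{\mathrm{high}}=\{a:d_a\ge \frac{c\log|A|}{\mu}\}$ and $A_{\mathrm{low}}$. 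If at least half the edges meet $A_{\mathrm{high}}$, run the deterministic procedure on the sub-instance induced by $A_{\mathrm{high}}$ — still satisfiable, still $\mu$-smooth, and with the required degrees — to satisfy all its edges, hence a $\tfrac12$ fraction of $E$. Otherwise most edges meet $A_{\mathrm{low}}$, and here I would use that an $a$ with $d_a<1/\mu$ has \emph{empty} agreement set for every pair of labels, so any single neighbor already separates it; combining this with a covering step that picks a small set of $B$-vertices dominating the tiny-degree part of $A_{\mathrm{low}}$, and a propagation step that pins the remaining ``medium-degree'' vertices once their already-determined neighbors are known, should let one satisfy a constant fraction of the edges incident to $A_{\mathrm{low}}$.

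The crux is ensuring $|T|=O(\mu|B|)$. A separating set must contain on the order of $\log(|A|\,|\Sigma_A|)/\log(1/\mu)$ neighbors of \emph{each} $a\in A$, so if some $a$ has small degree no size-$O(\mu|B|)$ set — random or greedy — can be relied on to meet $N(a)$ often enough; this is precisely why the clean randomized statement needs $d_a\ge \frac{c\log|A|}{\mu}$, and why the unconditional deterministic statement must retreat to $O(1)$-approximation and treat low-degree vertices by the separate, more delicate argument above. The secondary, bookkeeping-heavy point is calibrating $p=\Theta(\mu)$, the threshold $s$, and the constant $c$ against $|A|$ and $|\Sigma_A|$ so that all the probabilities are simultaneously small; in the usual setting where $|\Sigma_A|$ is polynomially bounded in $|A|$ this is routine.
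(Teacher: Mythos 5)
Your randomized algorithm is built on the right idea (enumerate assignments to a small $T\subseteq B$, use smoothness to pin down $A$), but the analysis has a real gap, and the ``moreover'' clause is only sketched.

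\textbf{Gap in the randomized part.} You define $T$ to be separating and verify it by a union bound over all $|A|\,|\Sigma_A|^2$ triples $(a,\sigma_a,\sigma_a')$, with per-triple failure probability $\mu^s$ once $|T\cap N(a)|\ge s$. Making this union bound work forces
\[
s \;\gtrsim\; \frac{\log\bigl(|A|\,|\Sigma_A|^2\bigr)}{\log(1/\mu)},
\]
and Chernoff then requires $\mathbb{E}|T\cap N(a)|=p\,d_a$ to exceed $s$ with room to spare, i.e.\ $\mu d_a\gtrsim \log(|A|\,|\Sigma_A|)/\log(1/\mu)$. The theorem only promises $\mu d_a\ge c\log|A|$, so whenever $\log|\Sigma_A|$ is large compared with $\log|A|\cdot\log(1/\mu)$ (for instance $|\Sigma_A|$ superpolynomial in $|A|$, or $\mu$ close to a constant) your calibration fails. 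You call the dependence of $p$ on ``how $|\Sigma_A|$ compares to $|A|$'' harmless, but it is not: inflating $p$ to absorb it inflates $|T|$ beyond $O(\mu|B|)$, destroying the claimed $\exp(O(\mu|B|\log|\Sigma_B|))$ running time. The fix is to drop the separating-set framework and aim directly for the event $|T\cap N(a)|>\mu d_a$ for every $a$. This single Chernoff bound, with a union only over the $|A|$ vertices of $A$, already implies that the label consistent with $\psi=\varphi_B^{\mathrm{OPT}}|_T$ on $T\cap N(a)$ is unique and optimal: if two distinct $\sigma_a,\sigma_a'$ were both consistent, they would agree on more than $\mu d_a$ neighbors of $a$, contradicting $\mu$-smoothness. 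No union bound over label pairs, no dependence on $|\Sigma_A|$. This is exactly the pigeonhole observation the paper isolates (Observation~\ref{o:smooth}) and the source of the simpler and stronger bound.

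\textbf{The deterministic clause.} The paper does not derandomize the random selection. It constructs $B^*$ by a greedy rule: iteratively add to $B^*$ the vertex of $B$ that currently ``hits'' the most unsaturated $a\in A$ (where $a$ is saturated once $|N(a)\cap B^*|>\mu d_a$), stopping once the saturated vertices are incident to a constant fraction of $E$. A counting argument bounds the total number of hits by $\mu|E|+n_A$, shows each greedy pick gains $\Omega(|E|/n_B)$, and concludes $|B^*|=O(\mu n_B)$; exhausting $\Sigma_B^{B^*}$ and solving the saturated part then yields an $O(1)$ approximation. Your proposal instead derandomizes by conditional expectations and splits $A$ into $A_{\mathrm{high}}$ and $A_{\mathrm{low}}$. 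The derandomization is plausible for $A_{\mathrm{high}}$ (modulo the calibration gap above), but the treatment of $A_{\mathrm{low}}$ is not a proof: for vertices with $1/\mu\le d_a<c\log|A|/\mu$ you invoke a ``covering step'' and a ``propagation step'' without specifying how to ensure that the set of $B$-vertices you must enumerate over stays of size $O(\mu|B|)$ while still saturating a constant fraction of these medium-degree vertices. This is precisely the nontrivial content, and the paper's greedy plus counting argument is what delivers it; without an equivalent the running time bound in the ``moreover'' clause is unsupported.

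Your observation that $d_a<1/\mu$ collapses smoothness to uniqueness is correct and the paper uses it too, but it only handles the tiny-degree part of $A_{\mathrm{low}}$, not the medium-degree part.
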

The algorithms work by finding a subset of fraction $\mu$ in $B$ that is connected to all, or most, of the vertices in $A$ and going over all possible assignments to it.

Theorem~\ref{t:smooth} essentially implies that a blow-up of $n/\mu$ is necessary for any reduction from \textsc{Sat} to $\mu$-smooth \textsc{Label Cover}, no matter what is the approximation factor $\varepsilon$.

\subsection{PTAS For Planar Graphs}

As the strong PCP Theorem (Theorem~\ref{t:strong-PCP}) shows, \textsc{Label Cover} is NP-hard to approximate even to within subconstant $\varepsilon$ fraction. Does \textsc{Label Cover} remain as hard when we consider special kinds of graphs?

In recent years there has been much interest in optimization problems over {\em planar graphs}.
These are graphs that can be embedded in the plane without edges crossing each other. Many optimization problems have very efficient algorithms on planar graphs.

We show that while projection games remain NP-hard to solve {\em exactly} on planar graphs, when it comes to approximation, they admit a PTAS:
\begin{theorem}\label{t:planar} The following hold:
\begin{enumerate}
\item
Given a satisfiable instance of projection games on a planar graph, it is NP-hard to find a satisfying assignment.
\item
There is a polynomial time approximation scheme for projection games on planar graphs that runs in time $(nk)^{O(1/\varepsilon)}$. Moreover, this running time is essentially tight: there is no PTAS for projection games on planar graphs running in time $2^{O(1/\varepsilon)^\gamma}(nk)^{O(1/\varepsilon)^{1-\delta}}$ for any constants $\gamma, \delta > 0$ unless the exponential time hypothesis (ETH) \footnote{Note that the exponential time hypothesis states that {\sc 3-SAT} cannot be solved in sub-exponential time.} fails.
\end{enumerate}
\end{theorem}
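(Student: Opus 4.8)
The plan is to treat the three assertions in turn. For the NP-hardness of exact solvability, I would reduce from \textsc{Planar 3-Colorability}, which is NP-hard. Given a planar graph $H=(V_H,E_H)$, place the vertices of $H$ on the $B$-side with $\Sigma_B=\{1,2,3\}$, and place one vertex $a_e$ for each edge $e=\{u,v\}\in E_H$ on the $A$-side, with $\Sigma_A$ the six ordered pairs of distinct colours; join $a_e$ to $u$ and to $v$ by edges whose projections read off, respectively, the first and the second coordinate of the label of $a_e$. The underlying bipartite graph is exactly the edge-subdivision of $H$, hence planar, and an assignment satisfies every edge iff the induced colouring of $V_H$ is proper; so deciding feasibility of planar projection games is NP-hard. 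This part carries no real difficulty.

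For the PTAS I would use Baker's layering technique. Set $\ell=\Theta(1/\varepsilon)$, run a BFS from an arbitrary vertex, and let $L_0,L_1,\dots$ be the distance layers. For each shift $j\in\{0,\dots,\ell-1\}$, delete all vertices in layers $L_i$ with $i\equiv j\pmod{\ell}$; the remaining graph $G_j$ is $(\ell-1)$-outerplanar and hence has treewidth $O(\ell)=O(1/\varepsilon)$. Each edge of $G$ is destroyed for at most two values of $j$, so some shift $j^\ast$ removes at most $2|E|/\ell\le\varepsilon|E|$ edges. On $G_{j^\ast}$ I would solve the game optimally by the standard dynamic program over a tree decomposition in time $n\cdot k^{O(1/\varepsilon)}$; since the restriction of a feasible instance to a subgraph is still feasible, this yields an assignment satisfying all of $E(G_{j^\ast})$, i.e.\ at least $(1-\varepsilon)|E|$ edges of $G$ after extending arbitrarily to the deleted vertices. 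Iterating over the $\ell$ shifts gives total running time $(nk)^{O(1/\varepsilon)}$.

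The substance is the running-time lower bound. I would reduce from \textsc{3-Sat}: apply the sparsification lemma so that the formula $\psi$ has $m$ variables and $O(m)$ clauses yet still needs $2^{\Omega(m)}$ time under ETH, then use the usual clause--variable encoding to obtain a projection game of size $O(m)$ that is feasible iff $\psi$ is satisfiable and, when $\psi$ is unsatisfiable, has at least one violated edge under every assignment. This game is not planar, so I would planarise its drawing by inserting an $O(1)$-size crossover gadget --- a small projection game that transmits a label faithfully along a wire and lets two wires cross --- at each of the $O(m^2)$ crossings. Feasibility is preserved, and in the unsatisfiable case at least one gadget edge is still violated, so the optimal value drops below $1-1/n$, where $n$ is the size of the resulting \emph{planar} game. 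Taking $\varepsilon$ of order $1/n$, a PTAS running in time $2^{O(1/\varepsilon)^{\gamma}}(nk)^{O(1/\varepsilon)^{1-\delta}}$ would distinguish the feasible case from the unsatisfiable case, hence solve $\psi$, in time $2^{O(n^{\gamma})}(nk)^{O(n^{1-\delta})}$; comparing this with the $2^{\Omega(\sqrt{n})}$ lower bound ETH gives for planar instances of size $n$ rules out such a PTAS, and a sharper choice of $\varepsilon$ together with a leaner planarisation is what pushes the impossibility up to the claimed exponent $(1/\varepsilon)^{1-\delta}$. (The NP-hardness of exact solvability is morally the degenerate case $\varepsilon<1/n$ of the same construction.)

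The routine ingredients are Baker's technique and the colouring reduction; the step I expect to be the main obstacle is the lower bound, where one must simultaneously (i) arrange the planarisation so that the soundness gap $\varepsilon$ is large enough relative to the final size $n$ for raising $(nk)$ to the power $(1/\varepsilon)^{1-\delta}$ to actually beat the exponential-time bound inherited from \textsc{3-Sat}, and (ii) design the crossover gadget so that it is a genuine projection game and does not itself inflate the gap.
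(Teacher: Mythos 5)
Your first two parts match the paper. For the NP-hardness you reduce from \textsc{Planar 3-Colorability} with exactly the construction the paper uses: $B$ is the vertex set of the planar graph with $\Sigma_B$ the three colours, $A$ is the edge set with $\Sigma_A$ the six ordered pairs of distinct colours, and the two projections read off the two coordinates. For the PTAS you use Baker's shifted-layers decomposition into bounded-treewidth subgraphs followed by the standard tree-decomposition dynamic program, again exactly as in the paper; the running time bookkeeping matches. Both of these are correct and are essentially the paper's arguments.

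The running-time lower bound is where you diverge, and where a genuine gap remains. You propose to reduce from sparsified \textsc{3-Sat} and then planarise by inserting crossover gadgets at each of the $O(m^2)$ crossings. The problem you already sense is real and, as sketched, fatal: after planarisation the instance has size $n = \Theta(m^2)$, the unsatisfiable case has completeness only $1 - \Theta(1/n)$, so you are forced to take $\varepsilon = \Theta(1/n)$. Then the hypothetical PTAS runs in time $(nk)^{O(n^{1-\delta})}$, which is $2^{O(n^{1-\delta}\log n)}$ for constant $k$, while ETH only gives a $2^{\Omega(m)} = 2^{\Omega(\sqrt n)}$ lower bound for the planar instance you built. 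These are only in conflict when $1-\delta < 1/2$, so you rule out the PTAS only for $\delta > 1/2$, not for all $\delta > 0$ as the theorem demands. The phrase ``a sharper choice of $\varepsilon$ together with a leaner planarisation'' is precisely what is missing: to get the full statement one needs a family of planar instances in which the gap $\varepsilon'$ is a \emph{tunable} parameter, decoupled from the instance size, so that for every target $(\gamma, \delta)$ one can pick the regime of $\varepsilon'$ that yields a contradiction.

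The paper supplies exactly this by reducing from Marx's \textsc{Matrix Tiling} rather than from \textsc{3-Sat}. Marx's Theorem~2.3 already gives, for every $\varepsilon'>0$, an ETH-based lower bound of the precise form $2^{O(1/\varepsilon')^{\gamma}}(\tilde k \tilde n)^{O(1/\varepsilon')^{1-\delta}}$ for distinguishing optimum $\tilde k^2$ from optimum $\tilde k^2/(1+\varepsilon')$, so the tunable-$\varepsilon'$ family is handed to you. Moreover \textsc{Matrix Tiling} is defined on a $\tilde k \times \tilde k$ grid, so the reduction to a projection game (one $A$-vertex per cell, one $B$-vertex per adjacent pair of cells, projections enforcing row and column consistency) produces a grid graph, which is planar with no crossover gadgets and no quadratic blowup; a simple local accounting shows at most two unsatisfied edges per star placed, preserving the gap up to a constant. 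This is the ``leaner planarisation'' you were looking for, and it is what lets the argument close for every $\gamma, \delta > 0$. If you want to salvage the \textsc{3-Sat} route you would essentially have to reprove the Matrix Tiling lower bound along the way, which defeats the purpose.
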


The NP-hardness of projection games on planar graphs is based on a reduction from 3-colorability problem on planar graphs. The PTAS works via Baker's approach~\cite{Baker94} of approximating the graph by a graph with constant tree-width. Finally, the running time lower bound is shown via a reduction from from {\sc Matrix Tiling} problem introduced by Marx in \cite{Mar07}.

\section{Conventions}

We define the following notation to be used in the paper.
\begin{itemize}
\item Let $n_A = |A|$ denote the number of vertices in $A$ and $n_B = |B|$ denote the number of vertices in $B$. Let $n$ denote the number of vertices in the whole graph, i.e. $n = n_A + n_B$.
\item Let $d_v$ denote the degree of a vertex $v \in A \cup B$.
\item For a vertex $u$, we use $N(u)$ to denote set of vertices that are neighbors of $u$. Similarly, for a set of vertex $U$, we use $N(U)$ to denote the set of vertices that are neighbors of at least one vertex in $U$.
\item For each vertex $u$, define $N_2(u)$ to be $N(N(u))$. This is the set of neighbors of neighbors of $u$.
\item Let $\sigma_v^{OPT}$ be the assignment to $v$ in an assignment to vertices that satisfies all the edges. In short, we will sometimes refer to this as ``the optimal assignment''. This is guaranteed to exist from our assumption that the instances considered are satisfiable.
\item For any edge $e = (a, b)$, we define $p_e$ to be $|\pi^{-1}(\sigma_b^{OPT})|$. In other words, $p_e$ is the number of assignments to $a$ that satisfy the edge $e$ given that $b$ is assigned $\sigma_b^{OPT}$, the optimal assignment. Define $\overline{p}$ to be the average of $p_e$ over all $e$; that is $\overline{p} = \frac{\sum_{e \in E} p_e}{|E|}$.
\item For each set of vertices $S$, define $E(S)$ to be the set of edges of $G$ with at least one endpoint in $S$, i.e., $E(S) = \{(u, v) \in E \mid u \in S \text{ or } v \in S\}$.
\item For each $a \in A$, let $h(a)$ denote $|E({N_2(a)})|$. Let $h_{max} = max_{a\in A} h(a)$.
\end{itemize}
For simplicity, we make the following assumptions:
\begin{itemize}
\item $G$ is connected. This assumption can be made without loss of generality, as, if $G$ is not connected, we can always perform any algorithm presented below on each of its connected components and get an equally good or a better approximation ratio.
\item For every $e\in E$ and every $\sigma_b\in\Sigma_B$, the number of preimages in $\pi^{-1}_e(\sigma_b)$ is the same. In particular, $p_e = \overline{p}$ for all $e \in E$.
\end{itemize}

We only make use of the assumptions in the algorithms for proving Theorem~\ref{t:approx}.
We defer the treatment of graphs with general number of preimages to the appendix.

\section{Polynomial-time Approximation Algorithms for Projection Games}

In this section, we present an improved polynomial time approximation algorithm for projection games and prove Theorem~\ref{t:approx}.

To prove the theorem, we proceed to describe four polynomial-time approximation algorithms. In the end, by using the best of these four, we are able to produce a polynomial-time $O\left((n_A|\Sigma_A|)^{1/4}\right)$-approximation algorithm as desired. Next, we will list the algorithms along with its rough descriptions (see also illustrations in Figure~\ref{fig:algo} below); detailed description and analysis of each algorithm will follow later in this section:
\begin{enumerate}
\item {\bf Satisfy one neighbor -- $|E|/n_B$-approximation.} Assign each vertex in $A$ an arbitrary assignment. Each vertex in $B$ is then assigned to satisfy one of its neighboring edges. This algorithm satisfies at least $n_B$ edges.

\item {\bf Greedy assignment -- ${|\Sigma_A|}/{\overline{p}}$-approximation.} Each vertex in $B$ is assigned an assignment $\sigma_b\in\Sigma_B$ that has the largest number of preimages across neighboring edges $\sum_{a \in N(b)} |\pi_{(a, b)}^{-1}(\sigma_b)|$. Each vertex in $A$ is then assigned so that it satisfies as many edges as possible. This algorithm works well when $\Sigma_B$ assignments have many preimages.

\item {\bf Know your neighbors' neighbors -- $|E|\overline{p}/h_{max}$-approximation.} For a vertex $a_0 \in A$, we go over all possible assignments to it. For each assignment, we assign its neighbors $N(a_0)$ accordingly. Then, for each node in $N_2(a_0)$, we keep only the assignments that satisfy all the edges between it and vertices in $N(a_0)$.

When $a_0$ is assigned the optimal assignment, the number of choices for each node in $N_2(a_0)$ is reduced to at most $\overline{p}$ possibilities. In this way, we can satisfy $1/{\overline{p}}$ fraction of the edges that touch $N_2(a_0)$. This satisfies many edges when there exists $a_0\in A$ such that $N_2(a_0)$ spans many edges.

\item {\bf Divide and Conquer -- $O(n_A n_B h_{max}/|E|^2)$-approximation.} For every $a\in A$ we can fully satisfy $N(a) \cup N_2(a)$ efficiently, and give up on satisfying other edges that touch $N_2(a)$. Repeating this process, we can satisfy $\Omega(|E|^2/(n_A n_B h_{max}))$ fraction of the edges. This is large when $N_2(a)$ does not span many edges for all $a\in A$.
\end{enumerate}
The smallest of the four approximation factors is at most as large as their geometric mean, i.e., $$O\left(\sqrt[4]{\frac{|E|}{n_B}\cdot\frac{|\Sigma_A|}{\overline{p}}\cdot \frac{|E|\overline{p}}{h_{max}}\cdot\frac{n_A n_B h_{max}}{|E|^2}}\right)= O((n_A|\Sigma_A|)^{1/4}).$$

\begin{figure}[h!]
  \begin{center}
    \begin{tabular}{ c | c }
      \includegraphics[scale=0.20]{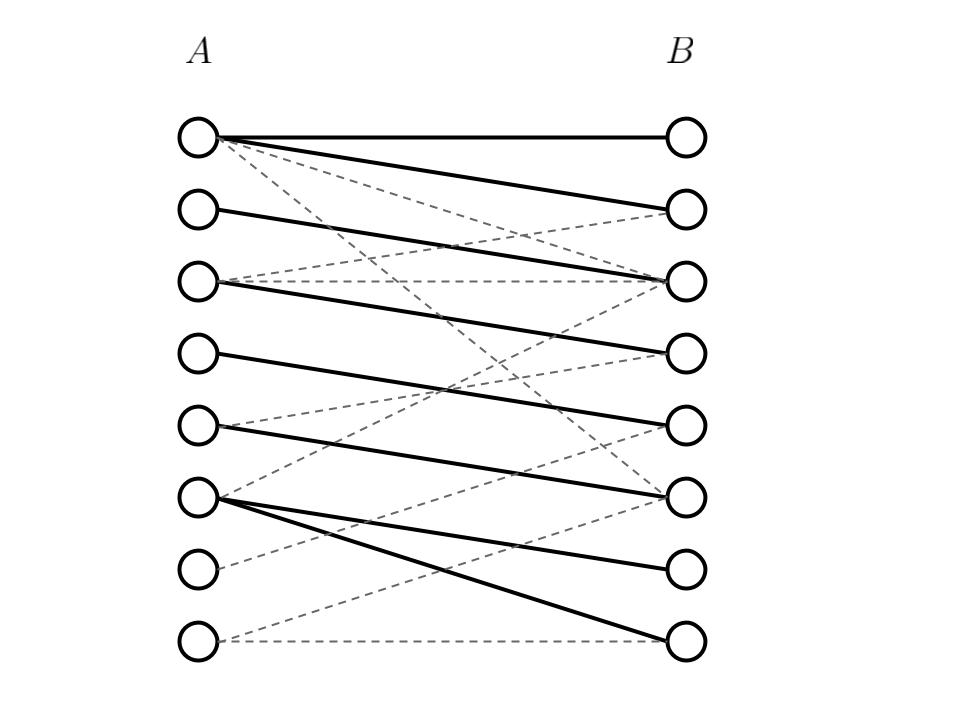} &  \includegraphics[scale=0.20]{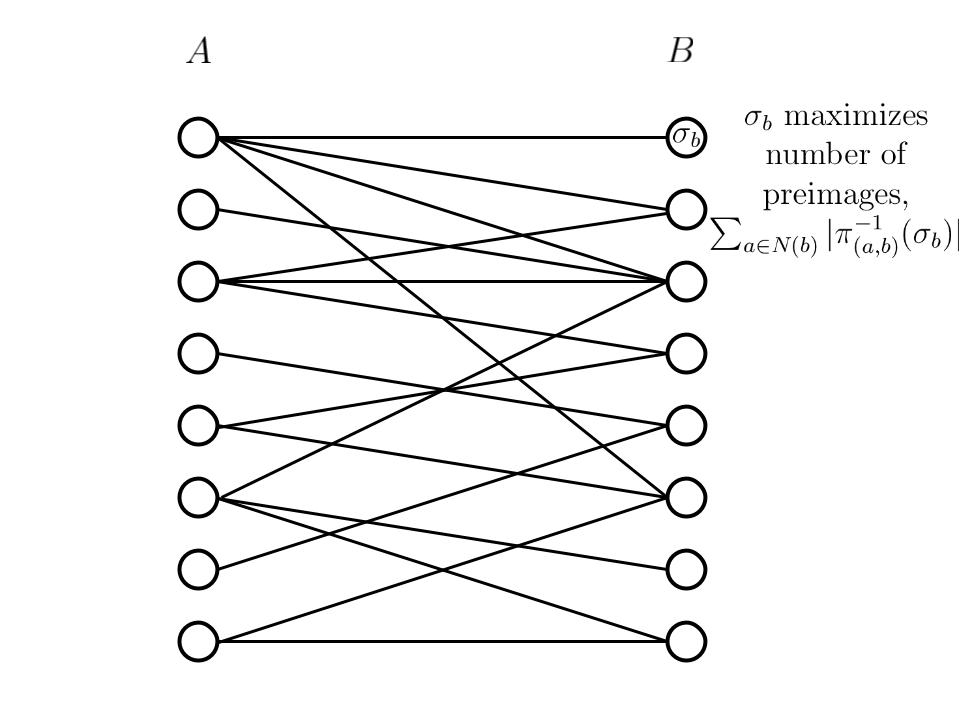} \\
      (1) & (2) \\
      \\
      \hline
      \\
      \includegraphics[scale=0.20]{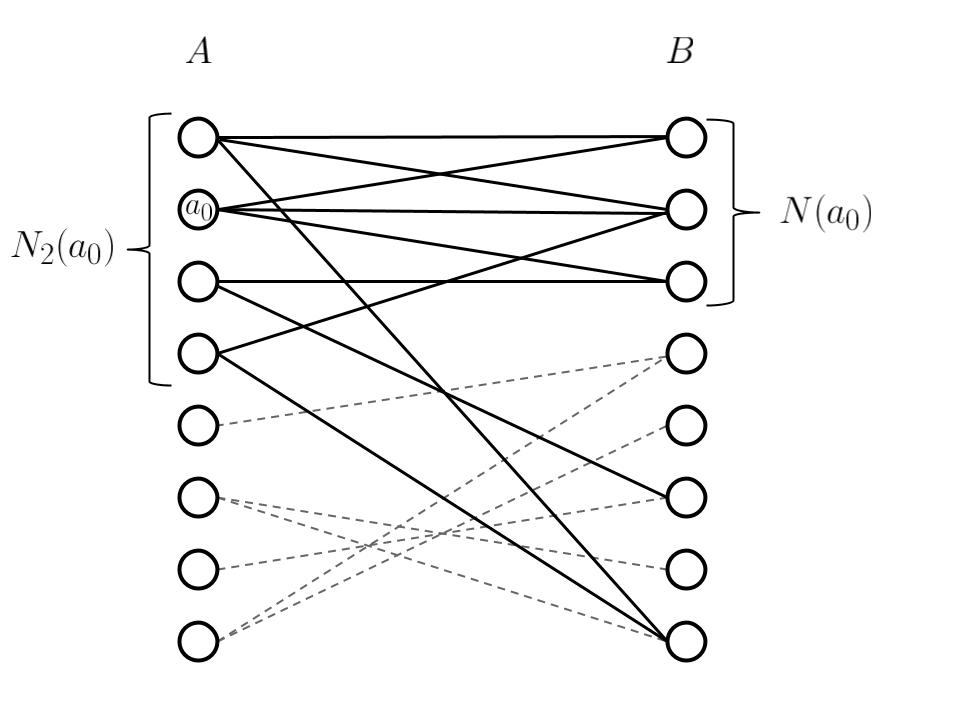} &  \includegraphics[scale=0.20]{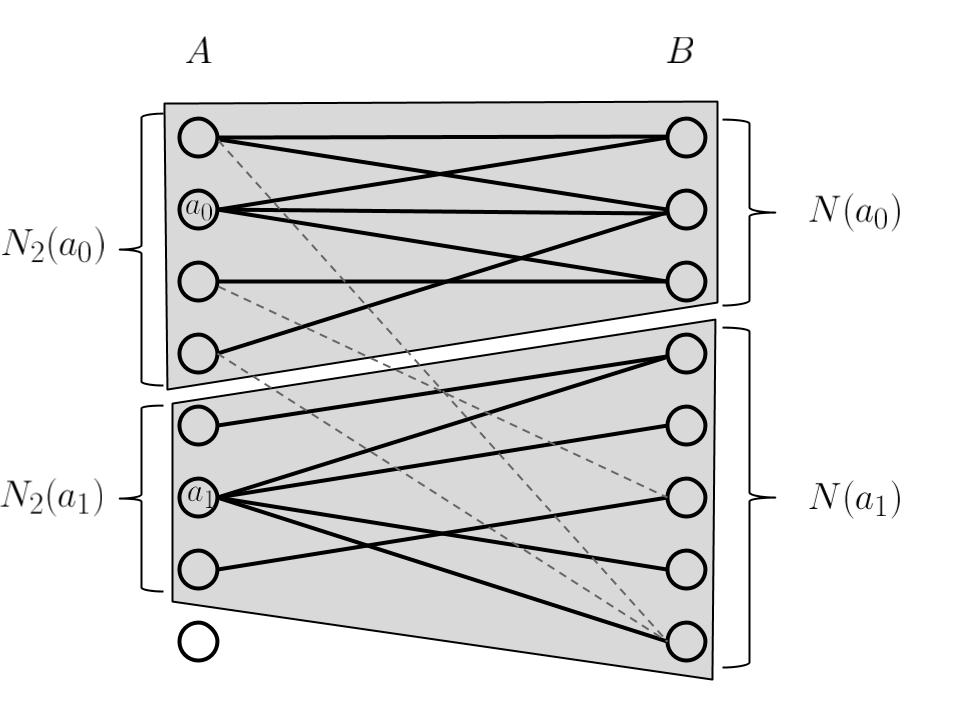} \\
      (3) & (4)
    \end{tabular}
  \end{center}
  \caption{
    \textbf{An Overview of The Algorithms in One Figure.} Four algorithms are used to prove Theorem~\ref{t:approx}: (1) In \emph{satisfy one neighbor} algorithm, each vertex in $B$ is assigned to satisfy one of its neighboring edges. (2) In \emph{greedy assignment} algorithm, each vertex in $B$ is assigned with an assignment with largest number of preimages. (3) In \emph{know your neighbors' neighbors} algorithm, for a vertex $a_0$, choices for each node in $N_2(a_0)$ are reduced to at most $O(\overline{p})$ possibilities so $O\left(\frac{1}{\overline{p}}\right)$ fraction of edges that touch $N_2(a_0)$ are satisfied. (4) In \emph{divide and conquer} algorithm, the vertices are seperated to subsets, each of which is a subset of $N(a) \cup N_2(a)$, and each subset is solved separately.
  }
  \label{fig:algo}
\end{figure}

All the details of each algorithm are described below.

\subsubsection{Satisfy One Neighbor Algorithm.}
We will present a simple algorithm that gives $\frac{|E|}{n_B}$ approximation ratio.

\begin{lemma} \label{dBapprox}
  For satisfiable instances of projection games, an assignment that satisfies at least $n_B$ edges can be found in polynomial time, which gives the approximation ratio of $\frac{|E|}{n_B}$.
\end{lemma}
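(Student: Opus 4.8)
The plan is to construct the promised assignment directly and then bound the approximation ratio by the trivial upper bound $|E|$ on the optimum. First I would fix an arbitrary assignment $\varphi_A : A \to \Sigma_A$ (say, $\varphi_A(a) = \sigma_a$ for some fixed $\sigma_a \in \Sigma_A$). Then, for each vertex $b \in B$, I would pick an arbitrary neighbor $a(b) \in N(b)$ and set $\varphi_B(b) := \pi_{(a(b),b)}\big(\varphi_A(a(b))\big)$. By construction the edge $(a(b), b)$ is satisfied.

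The next step is to count. For distinct $b, b' \in B$ the edges $(a(b), b)$ and $(a(b'), b')$ are distinct, since they have distinct endpoints in $B$. Hence the constructed assignment satisfies at least $|B| = n_B$ edges. Here I rely on the standing assumption that $G$ is connected, which guarantees $d_b \geq 1$ for every $b \in B$ (so that a neighbor $a(b)$ exists); alternatively, one simply discards isolated vertices of $B$, which can only help.

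To finish, note that since the instance is satisfiable there is an assignment satisfying all $|E|$ edges, so the optimum is exactly $|E|$. Our assignment satisfies at least $n_B$ edges, giving an approximation ratio of at most $|E|/n_B$. Finally, the whole procedure — choosing one neighbor per vertex of $B$ and evaluating one projection — runs in time polynomial in the input size.

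I do not expect any real obstacle here: the only point requiring a moment's care is the observation that the $n_B$ satisfied edges are genuinely distinct and that every $b \in B$ has a neighbor to borrow a value from; both are immediate from the conventions already fixed in the paper.
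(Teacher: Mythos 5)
Your proposal is correct and follows the same approach as the paper: assign $A$ arbitrarily, then let each $b \in B$ copy the projected value from one neighbor, satisfying one distinct edge per vertex of $B$. The only difference is that you make explicit the (easy) observations that the chosen edges are distinct and that every $b$ has a neighbor under the connectivity convention, which the paper leaves implicit.
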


\begin{proof}
  For each node $a \in A$, pick one $\sigma_a \in \Sigma_A$ and assign it to $a$. Then, for each $b \in B$, pick one neighbor $a$ of $b$ and assign $\varphi(b) = \pi_e(\sigma_a)$ for $b$.
  This guarantees that at least $n_B$ edges are satisfied.
\end{proof}

\subsubsection{Greedy Assignment Algorithm.}
The idea for this algorithm is that if there are many assignments in $\Sigma_A$ that satisfy each edge, then one satisfies many edges by guessing assignments at random. The algorithm below is the deterministic version of this algorithm.

\begin{lemma} \label{pickbest}
  There exists a polynomial-time $\frac{|\Sigma_A|}{\overline{p}}$-approximation algorithm for satisfiable instances of projection games.
\end{lemma}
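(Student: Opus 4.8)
The plan is to derandomize the natural randomized strategy suggested in the text: if each edge has $\overline{p}$ satisfying preimages in $\Sigma_A$, then picking a uniformly random assignment $\sigma_a \in \Sigma_A$ for each $a \in A$ satisfies each edge with probability $\overline{p}/|\Sigma_A|$ provided the endpoint $b$ is assigned consistently. So first I would fix the assignment on the $B$ side greedily: for each $b \in B$, choose $\sigma_b \in \Sigma_B$ maximizing the total number of preimages $\sum_{a \in N(b)} |\pi_{(a,b)}^{-1}(\sigma_b)|$ over neighboring edges. Since for every edge the preimage sizes sum over $\sigma_b \in \Sigma_B$ to $|\Sigma_A|$ (each $\sigma_a$ maps somewhere), averaging shows there is a choice of $\sigma_b$ with $\sum_{a \in N(b)} |\pi_{(a,b)}^{-1}(\sigma_b)| \ge \frac{1}{|\Sigma_B|}\sum_{a \in N(b)}|\Sigma_A|$; but more usefully, using the paper's simplifying assumption that every preimage set has the same size $\overline{p}$, every edge $e=(a,b)$ already satisfies $|\pi_e^{-1}(\sigma_b)| = \overline{p}$ regardless of the choice of $\sigma_b$. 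So after the $B$-side is fixed, each edge $(a,b)$ has exactly $\overline{p}$ values of $\sigma_a$ that would satisfy it.

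Next I would bound the expected number of satisfied edges when each $a \in A$ is assigned independently and uniformly at random from $\Sigma_A$: by linearity of expectation this is $\sum_{e \in E} \overline{p}/|\Sigma_A| = |E|\overline{p}/|\Sigma_A|$. Hence there exists a deterministic assignment to $A$ satisfying at least this many edges, which is a $\frac{|\Sigma_A|}{\overline{p}}$ fraction of $|E|$, giving the claimed approximation ratio. To make this run in polynomial time I would derandomize by the method of conditional expectations: order the vertices of $A$ arbitrarily and assign them one at a time, at each step choosing the label in $\Sigma_A$ that maximizes the conditional expectation of the number of satisfied edges given the assignments made so far. The conditional expectation is easy to compute exactly — for each already-assigned $a$ we know precisely which incident edges are satisfied, and for each not-yet-assigned $a'$ each incident edge contributes $\overline{p}/|\Sigma_A|$ — so each step takes polynomial time, and the final assignment satisfies at least $|E|\overline{p}/|\Sigma_A|$ edges.

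Alternatively, and perhaps cleaner to write, I would avoid randomness entirely: once the $B$-side labels are fixed, orient the problem as choosing, for each $a \in A$, the label $\sigma_a$ that satisfies the largest number of incident edges. A counting argument over $\sigma_a \in \Sigma_A$ shows that $\sum_{\sigma_a \in \Sigma_A}(\text{number of incident edges satisfied by } \sigma_a) = \sum_{e \ni a} |\pi_e^{-1}(\sigma_b)| = d_a \overline{p}$, so the best $\sigma_a$ satisfies at least $d_a\overline{p}/|\Sigma_A|$ incident edges; summing over $a \in A$ gives at least $\sum_a d_a\overline{p}/|\Sigma_A| = |E|\overline{p}/|\Sigma_A|$ satisfied edges. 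Since the optimum satisfies at most $|E|$ edges, this is a $\frac{|\Sigma_A|}{\overline{p}}$-approximation. I expect the main (minor) obstacle is bookkeeping the role of the paper's simplifying assumption that all preimage sets have equal size: without it, the greedy $B$-side choice only guarantees an \emph{average} preimage count of $\overline{p}$ across the relevant edges rather than a pointwise bound, so one must phrase the $A$-side counting in terms of the quantity $\sum_{a \in N(b)}|\pi_{(a,b)}^{-1}(\sigma_b)|$ actually maximized and sum carefully; the paper sidesteps this by the stated assumption and defers the general case to the appendix, so in the main proof the equal-size assumption makes every step a clean one-line calculation.
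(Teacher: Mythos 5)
Your proposal is correct and follows essentially the same route as the paper: fix the $B$-side greedily by maximizing $\sum_{a\in N(b)}|\pi_{(a,b)}^{-1}(\sigma_b)|$, then for each $a$ pick the label maximizing satisfied incident edges, bounding this from below by the average over $\Sigma_A$ to get at least $|E|\overline{p}/|\Sigma_A|$ satisfied edges. The only cosmetic difference is that the paper keeps the greedy $B$-step load-bearing (comparing $\sigma^*_b$ against $\sigma^{OPT}_b$ to get $\sum_e p_e = |E|\overline{p}$), whereas you invoke the uniform-preimage simplification to make every preimage count exactly $\overline{p}$ outright; both are valid given the paper's stated assumption, and you correctly flag that the general case (handled in the appendix via $\overline{p}^{max}$) requires the greedy comparison rather than the pointwise bound.
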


\begin{proof}
  The algorithm works as follows:
  \begin{enumerate}
  \item For each $b$, assign it $\sigma^*_b$ that maximizes $\sum_{a \in N(b)} |\pi_{(a, b)}^{-1}(\sigma_b)|$.
  \item For each $a$, assign it $\sigma^*_a$ that maximizes the number of edges satisfied, $|\{b \in N(a) \mid \pi_{(a, b)}(\sigma_a) = \sigma^*_b\}|$.
    \end{enumerate}

  Let $e^*$ be the number of edges that get satisfied by this algorithm. We have
  \begin{align*}
    e^* &= \sum_{a \in A} |\{b \in N(a) \mid \pi_{(a, b)}(\sigma^*_a) = \sigma^*_b\}|.
  \end{align*}

  Due to the second step, for each $a \in A$, the number of edges satisfied is at least an average of the number of edges satisfy over all assignments in $\Sigma_A$. This can be written as follows.
  \begin{align*}
    e^* &\geq \sum_{a \in A} \frac{\sum_{\sigma_a \in \Sigma_A} |\{b \in N(a) \mid \pi_{(a, b)}(\sigma_a) = \sigma^*_b\}|}{|\Sigma_A|} \\
    &= \sum_{a \in A} \frac{\sum_{b \in N(a)} |\pi^{-1}_{(a, b)}(\sigma^*_b)|}{|\Sigma_A|} \\
    &= \frac{1}{|\Sigma_A|} \sum_{a \in A} \sum_{b \in N(a)} |\pi^{-1}_{(a, b)}(\sigma^*_b)| \\
    &= \frac{1}{|\Sigma_A|} \sum_{b \in B} \sum_{a \in N(b)} |\pi^{-1}_{(a, b)}(\sigma^*_b)|.
  \end{align*}

  Moreover, from the first step, we can conclude that, for each $b$, $\sum_{a \in N(b)} |\pi_{(a, b)}^{-1}(\sigma^*_b)| \geq \sum_{a \in N(b)} |\pi_{(a, b)}^{-1}(\sigma^{OPT}_b)|$. As a result, we can conclude that
  \begin{align*}
    e^* &\geq \frac{1}{|\Sigma_A|} \sum_{b \in B} \sum_{a \in N(b)} |\pi^{-1}_{(a, b)}(\sigma^{OPT}_b)|\\
    & = \frac{1}{|\Sigma_A|}\sum_{e\in E}p_e\\
    &= \frac{|E|\overline{p}}{|\Sigma_A|}
  \end{align*}

  Hence, this algorithm satisfies at least $\frac{\overline{p}}{|\Sigma_A|}$ fraction of the edges. Thus, this is a polynomial-time $\frac{|\Sigma_A|}{\overline{p}}$-approximation algorithm for satisfiable instances of projection games, which concludes our proof.
\end{proof}

\subsection*{Know Your Neighbors' Neighbors Algorithm} \label{subsec-knowyourneighborsneighbors}

The next algorithm shows that if the neighbors of neighbors of a vertex $a_0\in A$ expand, then one can satisfy many of the (many!) edges that touch the neighbors of $a_0$'s neighbors.

\begin{lemma} \label{reduction}
  For each $a_0 \in A$, there exists a polynomial-time $\frac{|E|\overline{p}}{h(a_0)}$-approximation algorithm for satisfiable instances of projection games.
\end{lemma}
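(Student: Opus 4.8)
The plan is to exploit the local structure around a single vertex $a_0 \in A$: once an assignment to $a_0$ is fixed, the assignments of all of its neighbors $N(a_0)$ are forced, and these in turn drastically restrict the admissible assignments of each vertex in $N_2(a_0)$. Concretely, the algorithm iterates over all $|\Sigma_A|$ candidate assignments $\sigma_{a_0} \in \Sigma_A$. For a fixed candidate, set $\varphi(b) := \pi_{(a_0,b)}(\sigma_{a_0})$ for every $b \in N(a_0)$, and for every $a \in N_2(a_0)$ form the surviving list $L_a := \{\sigma \in \Sigma_A : \pi_{(a,b)}(\sigma) = \varphi(b)\ \text{for all}\ b \in N(a)\cap N(a_0)\}$. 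All of this is computable in polynomial time, and candidates for which some $L_a$ comes out empty are simply discarded.

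The key observation is that when $\sigma_{a_0} = \sigma_{a_0}^{OPT}$ we have, for every $a \in N_2(a_0)$, both $\sigma_a^{OPT} \in L_a$ and $|L_a| \le \overline{p}$. The first holds because $\varphi(b) = \pi_{(a_0,b)}(\sigma_{a_0}^{OPT}) = \sigma_b^{OPT}$ for $b \in N(a_0)$ (the instance is satisfiable), so $\sigma_a^{OPT}$ meets every constraint defining $L_a$. The second holds because, choosing any $b^\ast \in N(a)\cap N(a_0)$ (one exists since $a \in N_2(a_0)$), we have $L_a \subseteq \pi_{(a,b^\ast)}^{-1}(\sigma_{b^\ast}^{OPT})$, which has size $p_{(a,b^\ast)} = \overline{p}$ by our equal-preimages assumption.

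Given the correct candidate, it remains to satisfy at least $h(a_0)/\overline{p}$ of the $h(a_0) = |E(N_2(a_0))|$ edges incident to $N_2(a_0)$. I would do this by invoking the greedy assignment algorithm of Lemma~\ref{pickbest} on the sub-instance induced by the vertex set $N_2(a_0) \cup N(N_2(a_0))$, where each $a \in N_2(a_0)$ is given the reduced alphabet $L_a$ (of size at most $\overline{p}$) and the projections are the restrictions of the original ones. This sub-instance is satisfiable, since the global optimal assignment restricts to a feasible one (using $\sigma_a^{OPT}\in L_a$), so its average preimage count $\overline{p}^{\,\mathrm{sub}}$ is at least $1$; the argument of Lemma~\ref{pickbest} is insensitive to the alphabet being per-vertex rather than uniform and therefore yields, in polynomial time, an assignment satisfying at least $\frac{1}{\max_a |L_a|}\sum_{e} p_e^{\mathrm{sub}} \ge \frac{1}{\overline{p}}\,h(a_0)$ of these edges. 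Extending this partial assignment arbitrarily to the remaining vertices, and taking over all candidates $\sigma_{a_0}$ the assignment that satisfies the most edges globally, gives a polynomial-time algorithm satisfying at least $h(a_0)/\overline{p}$ edges out of $|E|$; since the instance is satisfiable, $\mathrm{OPT} = |E|$, so the approximation ratio is $|E|\overline{p}/h(a_0)$, as claimed. (Here $h(a_0) \ge d_{a_0} \ge 1$ because $G$ is connected, so the ratio is well defined.)

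I expect the main subtlety to lie in this last step rather than in the list-reduction: the vertices of $B$ outside $N(a_0)$ have no forced assignment, so one cannot simply "plug in $\sigma_b^{OPT}$", and one must argue that a genuinely computable procedure still recovers a $1/\overline{p}$ fraction of $E(N_2(a_0))$. The clean way around this is the black-box reduction above to Lemma~\ref{pickbest}; equivalently one could argue directly by randomly assigning each $a \in N_2(a_0)$ a uniform element of $L_a$ (which equals $\sigma_a^{OPT}$ with probability at least $1/\overline{p}$) and then assigning each $b$ greedily, and note that the greedy choice beats $\sigma_b^{OPT}$ on $b$'s edges into $N_2(a_0)$, so the expected number of satisfied edges of $E(N_2(a_0))$ is at least $h(a_0)/\overline{p}$.
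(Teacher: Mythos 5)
Your proposal is correct and follows essentially the same route as the paper: enumerate assignments to $a_0$, propagate to $N(a_0)$, restrict each $a \in N_2(a_0)$ to a list $S_a$ of size at most $\overline{p}$, and then satisfy a $1/\overline{p}$ fraction of $E(N_2(a_0))$ via a greedy $B$-then-$A$ assignment. The only cosmetic difference is that you package the last step as a black-box invocation of (a per-vertex-alphabet variant of) Lemma~\ref{pickbest} on the sub-instance spanned by $N_2(a_0)$, whereas the paper inlines the same averaging-and-greedy calculation directly; the bounds and the reason they hold (namely $\sigma_a^{OPT}\in S_a$, $|S_a|\le\overline{p}$, and the greedy choice of $\sigma_b^*$ beating $\sigma_b^{OPT}$) are identical.
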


\begin{proof}
  To prove Lemma~\ref{reduction}, we want to find an algorithm that satisfies at least $\frac{h(a_0)}{\overline{p}}$ edges for each $a_0 \in A$.

  The algorithm works as follows:
  \begin{enumerate}
  \item Iterate over all assignments $\sigma_{a_0} \in \Sigma_A$ to $a_0$:
    \begin{enumerate}
    \item Assign $\sigma_b = \pi_{(a_0, b)}(\sigma_{a_0})$ to $b$ for all $b \in N(a_0)$.
    \item For each $a \in A$, find the set of plausible assignments to $a$, i.e., $S_a = \{\sigma_a\in\Sigma_A \mid \forall b \in N(a) \cap N(a_0),  \pi_{(a, b)}(\sigma_a) =  \sigma_b\}$. If for any $a$, the set $S_a$ is empty, then we proceed to the next assignment without executing the following steps.
    \item For all $b \in B$, pick an assignment $\sigma^*_b$ for $b$ that maximizes the average number of satisfied edges over all assignments in $S_a$ to vertices $a$ in $N(b) \cap N_2(a_0)$, i.e., maximizes $\sum_{a \in N(b) \cap N_2(a_0)} |\pi^{-1}_{(a, b)}(\sigma_b) \cap S_a|$.
    \item For each vertex $a \in A$, pick an assignment $\sigma^*_a \in S_a$ that maximizes the number of satisfied edges, $|\{b \in N(a) \mid \pi_{(a, b)}(\sigma_a) = \sigma_b^*\}|$.
    \end{enumerate}
  \item Pick an assignment $\{\sigma^*_{a}\}_{a\in A}$, $\{\sigma^*_b\}_{b\in B}$ from the previous step that satisfies the most edges.
  \end{enumerate}

  We will prove that this algorithm indeed satisfies at least $\frac{h(a)}{\overline{p}}$ edges.

  Let $e^*$ be the number of edges satisfied by the algorithm.  We have
  \begin{align*}
    e^* &= \sum_{a \in A} |\{b \in N(a) \mid \pi_{(a, b)}(\sigma^*_a) = \sigma^*_b\}|.
  \end{align*}

  Since in step 1, we try every possible $\sigma_{a_0} \in \Sigma_A$, we must have tried $\sigma_{a_0} = \sigma^{OPT}_{a_0}$. This means that the assignment to $a_0$ is the optimal assignment. As a result, the assignments to every node in $N(a_0)$ is the optimal assignment; that is $\sigma_{b} = \sigma^{OPT}_b$ for all $b \in N(a_0)$. Note that when the optimal assignment is assigned to $a_0$, we have $\sigma^{OPT}_a \in S_a$ for all $a \in A$. This means that the algorithm proceeds until the end. Thus, the solution this algorithm gives satisfies at least as many edges as when $\sigma_v = \sigma^{OPT}_v$ for all $v \in \{a_0\} \cup N(a_0)$. From now on, we will consider only this case.

  Since for each $a \in A$, the assignment $\sigma^*_a$ is chosen to maximize the number of edges satisfied, we can conclude that the number of edges satisfied by selecting $\sigma^*_a$ is at least the average of the number of edges satisfied over all $\sigma_a \in S_a$.

  As a result, we can conclude that
  \begin{align*}
    e^* &\geq \sum_{a \in A} \frac{\sum_{\sigma_a \in S_a} |\{b \in N(a) \mid \pi_{(a, b)}(\sigma_a) = \sigma^*_b\}|}{|S_a|} \\
    &= \sum_{a \in A} \frac{\sum_{\sigma_a \in S_a} \sum_{b \in N(a)} 1_{\pi_{(a, b)}(\sigma_a) = \sigma^*_b}}{|S_a|} \\
    &= \sum_{a \in A} \frac{\sum_{b \in N(a)} \sum_{\sigma_a \in S_a} 1_{\pi_{(a, b)}(\sigma_a) = \sigma^*_b}}{|S_a|} \\
    &= \sum_{a \in A} \frac{\sum_{b \in N(a)} |\pi_{(a, b)}^{-1}(\sigma^*_b) \cap S_a|}{|S_a|} \\
    &= \sum_{b \in B} \sum_{a \in N(b)} \frac{|\pi_{(a, b)}^{-1}(\sigma^*_b) \cap S_a|}{|S_a|} \\
    &\geq \sum_{b \in B} \sum_{a \in N(b) \cap N_2(a_0)} \frac{ |\pi_{(a, b)}^{-1}(\sigma^*_b) \cap S_a|}{|S_a|}
  \end{align*}

  Now, for each $a \in N_2(a_0)$, consider $S_a$. From the definition of $S_a$, we have
  \begin{align*}
    S_a =  \{\sigma_a \in \Sigma_A\mid \forall b \in N(a) \cap N(a_0),  \pi_{(a, b)}(\sigma_a) =  \sigma_b\}
    = \bigcap_{b \in N(a) \cap N(a_0)} \pi^{-1}_{(a, b)}(\sigma_b).
  \end{align*}

  As a result, we can conclude that
  \begin{align*}
    |S_a| &\leq \min_{b \in N(a) \cap N(a_0)}\{|\pi^{-1}_{(a, b)}(\sigma_b)|\} \\
    &= \min_{b \in N(a) \cap N(a_0)}\{|\pi^{-1}_{(a, b)}(\sigma^{OPT}_b)|\} \\
    &= \min_{b \in N(a) \cap N(a_0)}\{p_{(a, b)}\}.
  \end{align*}

  Note that since $a \in N_2(a_0)$, we have $N(a) \cap N(a_0) \ne \emptyset$. Since we assume for simplicity that $p_e = \overline{p}$ for all $e \in E$, we can conclude that $|S_a| \leq \overline{p}$.

  This implies that
  \begin{align*}
    e^* \geq \frac{1}{\overline{p}} \sum_{b \in B} \sum_{a \in N(b) \cap N_2(a_0)} |\pi_{(a, b)}^{-1}(\sigma^*_b) \cap S_a|.
  \end{align*}

  Since we pick the assignment $\sigma^*_b$ that maximizes $\sum_{a \in N(b) \cap N_2(a_0)} |\pi^{-1}_{(a, b)}(\sigma^*_b) \cap S_a|$ for each $b \in B$, we can conclude that
  \begin{align*}
    e^* &\geq \frac{1}{\overline{p}} \sum_{b \in B} \sum_{a \in N(b) \cap N_2(a_0)} |\pi_{(a, b)}^{-1}(\sigma^*_b) \cap S_a|\\
    &\geq \frac{1}{\overline{p}} \sum_{b \in B} \sum_{a \in N(b) \cap N_2(a_0)} |\pi_{(a, b)}^{-1}(\sigma^{OPT}_b) \cap S_a|.
  \end{align*}

  Since the optimal assignment satisfies every edge, we can conclude that $\sigma^{OPT}_a \in \pi_{(a, b)}^{-1}(\sigma^{OPT}_b)$ and $\sigma^{OPT}_a \in S_a$, for all $b \in B$ and $a \in N(b) \cap N_2(a_0)$. This implies that
  \begin{align*}
    e^* &\geq \frac{1}{\overline{p}} \sum_{b \in B} \sum_{a \in N(b) \cap N_2(a_0)} |\pi_{(a, b)}^{-1}(\sigma^{OPT}_b) \cap S_a| \\
    &\geq \frac{1}{\overline{p}} \sum_{b \in B} \sum_{a \in N(b) \cap N_2(a_0)} 1.
  \end{align*}

  The last term can be written as
  \begin{align*}
    \frac{1}{\overline{p}} \sum_{b \in B} \sum_{a \in N(b) \cap N_2(a_0)} 1 &= \frac{1}{\overline{p}} \sum_{a \in N_2(a_0)} \sum_{b \in N(a)} 1 \\
    &= \frac{1}{\overline{p}} (h(a_0)) \\
    &= \frac{h(a_0)}{\overline{p}}.
  \end{align*}

  As a result, we can conclude that this algorithm gives an assignment that satisfies at least $\frac{h(a_0)}{\overline{p}}$ edges out of all the $|E|$ edges. Hence, this is a polynomial-time $\frac{|E|\overline{p}}{h(a_0)}$-approximation algorithm as desired.
\end{proof}

\subsubsection{Divide and Conquer Algorithm.}

We will present an algorithm that separates the graph into disjoint subgraphs for which we can find the optimal assignments in polynomial time. We shall show below that, if $h(a)$ is small for all $a \in A$, then we are able to find such subgraphs that contain most of the graph's edges.

\begin{lemma} \label{separation} There exists a polynomial-time $O\left(\frac{n_An_Bh_{max}}{|E|^2}\right)$-approximation algorithm for satisfiable instances of projection games.
\end{lemma}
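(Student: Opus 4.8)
The plan is a greedy divide-and-conquer: repeatedly carve off a small, easily solvable piece, satisfy all of its edges, commit its vertices, and delete them, until no edge remains. Maintain a current subgraph $G'$ (initially $G$); write $N_{G'}, N_{2,G'}, d_{G'}, E_{G'}$ for the obvious restrictions of $N, N_2, d, E(\cdot)$ to $G'$. While $G'$ has an edge, pick the vertex $a\in A$ maximizing $|E_{G'}(N_{G'}(a))|$. As in the ``know your neighbors' neighbors'' algorithm, iterate over all $\sigma_a\in\Sigma_A$: this forces $\sigma_b:=\pi_{(a,b)}(\sigma_a)$ on every $b\in N_{G'}(a)$ and restricts each $a'\in N_{2,G'}(a)$ to $S_{a'}:=\bigcap_{b\in N_{G'}(a')\cap N_{G'}(a)}\pi^{-1}_{(a',b)}(\sigma_b)$; if every $S_{a'}$ is nonempty, pick an arbitrary element of each, which satisfies every edge of $E_{G'}(N_{G'}(a))$. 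Since the instance is satisfiable, $\sigma_a=\sigma^{OPT}_a$ makes $\sigma^{OPT}_{a'}\in S_{a'}$ for all $a'$, so the exhaustive search always finds a feasible $\sigma_a$. Commit this assignment on $\{a\}\cup N_{G'}(a)\cup N_{2,G'}(a)$, delete from $G'$ every edge incident to $N_{2,G'}(a)$, and recurse; uncommitted vertices eventually get arbitrary labels.

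Next I would verify the two structural facts making the output well-defined and bounding what it wins. First, the committed vertex sets of distinct rounds are disjoint: after a round every edge incident to $N_{2,G'}(a)$ is gone, and since $a$ itself and every $b\in N_{G'}(a)$ lie in $N_{2,G'}(a)$ or send all their edges into it, no committed vertex has a remaining edge, whereas later rounds only touch vertices still incident to an edge. Hence the union of the per-round assignments is a legitimate global assignment satisfying $\sum_i w_i$ edges, where $w_i:=|E_{G_i}(N_{G_i}(a_i))|$ is the number of edges won in round $i$ and $G_i$ the graph at the start of round $i$. Second, with $g_i:=|E_{G_i}(N_{2,G_i}(a_i))|$ the number of edges deleted: the deleted sets partition $E$, so $\sum_i g_i=|E|$; moreover $w_i\le g_i$ (an edge touching $N_{G_i}(a_i)$ also touches $N_{2,G_i}(a_i)$), and $g_i\le |E_G(N_2(a_i))| = h(a_i)\le h_{max}$ because $N_{2,G_i}(a_i)\subseteq N_2(a_i)$.

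The quantitative core is a lower bound on $w_i$. In a bipartite graph $|E_{G'}(N_{G'}(a))|=\sum_{b\in N_{G'}(a)}d_{G'}(b)$, so $\sum_{a\in A}|E_{G'}(N_{G'}(a))|=\sum_{b\in B}d_{G'}(b)^2\ge |E(G')|^2/n_B$ by Cauchy--Schwarz; averaging over $A$, the chosen $a_i$ satisfies $w_i\ge m_i^2/(n_An_B)$ with $m_i:=|E(G_i)|$. Since $m_{i+1}=m_i-g_i$ with $0<g_i\le h_{max}$ and $m_1=|E|$, we get $m_i\ge |E|-(i-1)h_{max}$ and the process runs for $R\ge |E|/h_{max}$ rounds, so $\sum_i w_i\ge \frac{1}{n_An_B}\sum_{i=1}^{\lceil |E|/h_{max}\rceil}(|E|-(i-1)h_{max})^2=\Omega\!\left(\frac{|E|^3}{n_An_Bh_{max}}\right)$, comparing the sum to $\int(|E|-xh_{max})^2\,dx$ (the degenerate regime $h_{max}>|E|$ is absorbed by the single bound $w_1\ge |E|^2/(n_An_B)$). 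Dividing by $|E|=\mathrm{OPT}$, the algorithm satisfies an $\Omega(|E|^2/(n_An_Bh_{max}))$ fraction of the edges, i.e.\ it is an $O(n_An_Bh_{max}/|E|^2)$-approximation; each round deletes at least one edge and costs $\mathrm{poly}(|\Sigma_A|,n)$ time, so the algorithm is polynomial.

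The main obstacle is exactly this last estimate: extracting a clean $\Omega(|E|^3/(n_An_Bh_{max}))$ bound on $\sum_i w_i$ from the recursion $m_{i+1}=m_i-g_i$, $w_i\ge m_i^2/(n_An_B)$, $g_i\le h_{max}$. A single round can win far fewer edges than it deletes ($w_i\ll g_i$), so no round-by-round argument works; one must exploit that the $m_i$ decrease slowly (steps $\le h_{max}$) and sum $m_i^2$ against that staircase, and separately dispose of the case where $h_{max}$ is comparable to or larger than $|E|$. The disjointness bookkeeping (committed vertices never reappearing, so the local assignments compose) also needs to be stated carefully, though it is routine.
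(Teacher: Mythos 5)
Your proposal is correct and reaches the same $\Omega(|E|^3/(n_An_Bh_{max}))$ bound as the paper, but via a genuinely different analysis. The common core is the Cauchy--Schwarz step $\sum_{a\in A}|E_{G'}(N_{G'}(a))|=\sum_{b\in B}d_{G'}(b)^2\ge |E(G')|^2/n_B$, which appears verbatim in the paper (there phrased as Jensen). The difference is in how that estimate is used. You run the greedy peeling to completion (until no edge remains), bound each round's win by $w_i\ge m_i^2/(n_An_B)$ and each round's deletion by $g_i\le h_{max}$, and then do the staircase summation $\sum_i m_i^2\ge\sum_{i\le |E|/h_{max}}(|E|-(i-1)h_{max})^2=\Omega(|E|^3/h_{max})$. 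The paper avoids this summation entirely: it fixes the threshold $|E|^2/(4n_An_B)$, only peels off a piece when some $a$ exceeds it, and maintains the invariant that the number of satisfied edges is at least $\frac{|E|^2}{4n_An_Bh_{max}}$ times the number of edges removed from the residual graph (each round adds at least $\tfrac14|E|^2/(n_An_B)$ to the left side and at most $h_{max}$ to the right). A single application of the Cauchy--Schwarz bound then shows that when the loop halts, at most $|E|/2$ edges remain, giving $|E_\mathcal{P}|\ge |E|^3/(8n_An_Bh_{max})$ immediately. So the paper's analysis is a cleaner potential-function/charging argument that sidesteps what you correctly identify as the main technical nuisance of your version (extracting the cubic bound from the recursion and disposing of the $h_{max}\gtrsim|E|$ regime). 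Your version is more of a greedy-with-summation argument; both are sound. One minor bookkeeping point worth stating explicitly in your write-up: you should clip the staircase sum to indices $i$ with $|E|-(i-1)h_{max}>0$, since for larger $i$ the pointwise bound $m_i^2\ge(|E|-(i-1)h_{max})^2$ is false; this is harmless because $R\ge\lceil|E|/h_{max}\rceil$, but it should be said. Your disjointness argument (committed vertices never reappear because all their residual edges are deleted in the round that commits them) is correct and corresponds to the paper's use of $V_\mathcal{P}$ and the pruned neighborhoods $N^p,N^p_2$.
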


\begin{proof}
  To prove Lemma~\ref{separation}, we will describe an algorithm that gives an assignment that satisfies $\Omega\left(\frac{|E|^3}{n_An_Bh_{max}}\right)$ edges.

  We use $\mathcal{P}$ to represent the collection of subgraphs we find. The family $\mathcal{P}$ consists of disjoint sets of vertices. Let $V_\mathcal{P}$ be $\bigcup_{P \in \mathcal{P}} P$.

  For any set $S$ of vertices, define $G_S$ to be the graph induced on $S$ with respect to $G$. Moreover, define $E_S$ to be the set of edges of $G_S$. We also define $E_\mathcal{P} = \bigcup_{P \in \mathcal{P}} E_P$.

  The algorithm works as follows.
  \begin{enumerate}
  \item Set $\mathcal{P} \leftarrow \emptyset$.
  \item\label{a:while} While there exists a vertex $a \in A$ such that $|E_{(N(a) \cup N_2(a)) - V_\mathcal{P}}| \geq \frac{1}{4} \frac{|E|^2}{n_An_B}$:
    \begin{enumerate}
    \item\label{a:update-P} Set $\mathcal{P} \leftarrow \mathcal{P} \cup \{(N_2(a) \cup N(a)) - V_\mathcal{P}\}$.

    \end{enumerate}
  \item\label{a:assign} For each $P \in \mathcal{P}$, find in time $poly(|\Sigma_A|, |P|)$ an assignment to the vertices in $P$ that satisfies all the edges spanned by $P$.
\end{enumerate}

  We will divide the proof into two parts. First, we will show that when we cannot find a vertex $a$ in step~\ref{a:while}, $\left|E_{(A \cup B) - V_\mathcal{P}} \right| \leq \frac{|E|}{2}$. Second, we will show that the resulting assignment from this algorithm satisfies $\Omega\left(\frac{|E|^3}{n_An_Bh_{max}}\right)$ edges.

  We will start by showing that if no vertex $a$ in step~\ref{a:while} exists, then $\left|E_{(A \cup B) - V_\mathcal{P}} \right| \leq \frac{|E|}{2}$.

  Suppose that we cannot find a vertex $a$ in step~\ref{a:while}. In other words, $|E_{(N(a) \cup N_2(a)) - V_\mathcal{P}}| < \frac{1}{4} \frac{|E|^2}{n_An_B}$ for all $a \in A$.

  Consider $\sum_{a \in A} |E_{(N(a) \cup N_2(a)) - V_\mathcal{P}}|$. Since $|E_{(N(a) \cup N_2(a)) - V_\mathcal{P}}| < \frac{1}{4} \frac{|E|^2}{n_An_B}$ for all $a \in A$, we have the following inequality.
  \begin{align*}
    \frac{|E|^2}{4n_B} \geq \sum_{a \in A} |E_{(N(a) \cup N_2(a)) - V_\mathcal{P}}|.
  \end{align*}


  Let $N^p(v) = N(v) - V_\mathcal{P}$ and $N_2^p(v) = N_2(v) - V_\mathcal{P}$. Similary, define $N^p(S)$ for a subset $S\subseteq A\cup B$. It is easy to see that $N_2^p(v) \supseteq N^p(N^p(v))$. This implies that, for all $a \in A$, we have $|E_{(N^p(a) \cup N^p_2(a))}| \geq |E_{(N^p(a) \cup N^p(N^p(a)))}|$. Moreover, it is not hard to see that, for all $a \in A - V_\mathcal{P}$, we have $|E_{(N^p(a) \cup N^p(N^p(a)))}| = \sum_{b \in N^p(a)} |N^p(b)|$.

  Thus, we can derive the following:
  \begin{align*}
    \sum_{a \in A} |E_{(N(a) \cup N_2(a)) - V_\mathcal{P}}| &= \sum_{a \in A} |E_{(N^p(a) \cup N^p_2(a))}| \\
    &\geq \sum_{a \in A - V_\mathcal{P}} |E_{(N^p(a) \cup N^p_2(a))}| \\
    &\geq \sum_{a \in A - V_\mathcal{P}} \sum_{b \in N^p(a)} |N^p(b)| \\
    &= \sum_{b \in B - V_\mathcal{P}} \sum_{a \in N^p(b)} |N^p(b)| \\
    &= \sum_{b \in B - V_\mathcal{P}} |N^p(b)|^2. \\
  \end{align*}

  From Jensen's inequality, we have
  \begin{align*}
    \sum_{a \in A} |E_{(N(a) \cup N_2(a)) - V_\mathcal{P}}| &\geq \frac{1}{|B - V_\mathcal{P}|} \left(\sum_{b \in B - V_\mathcal{P}} |N^p(b)| \right)^2 \\
    &= \frac{1}{|B - V_\mathcal{P}|} \left|E_{\left(A \cup B\right) - V_\mathcal{P}}\right|^2 \\
    & \geq \frac{1}{n_B} \left|E_{\left(A \cup B\right) - V_\mathcal{P}}\right|^2.
  \end{align*}

  Since $\frac{|E|^2}{4n_B} \geq \sum_{a \in A} |E_{(N(a) \cup N_2(a)) - V_\mathcal{P}}|$ and $\sum_{a \in A} |E_{(N(a) \cup N_2(a)) - V_\mathcal{P}}| \geq  \frac{1}{n_B} \left|E_{\left(A \cup B\right) - V_\mathcal{P}}\right|^2$, we can conclude that
  \begin{align*}
    \frac{|E|}{2} \geq \left|E_{\left(A \cup B\right) - V_\mathcal{P}}\right|
  \end{align*}
  which concludes the first part of the proof.

  Next, we will show that the assignment the algorithm finds satisfies at least $\Omega\left(\frac{|E|^3}{n_An_Bh_{max}}\right)$ edges. Since we showed that $\frac{|E|}{2} \geq \left|E_{\left(A \cup B\right) - V_\mathcal{P}}\right|$ when the algorithm terminates, it is enough to prove that $|E_\mathcal{P}| \geq \frac{|E|^2}{4n_An_Bh_{max}} \left(|E| - \left|E_{\left(A \cup B\right) - V_\mathcal{P}}\right|\right)$. Note that the algorithm guarantees to satisfy all the edges in $E_\mathcal{P}$.

  We will prove this by using induction to show that at any point in the algorithm, $|E_\mathcal{P}| \geq \frac{|E|^2}{4n_An_Bh_{max}} \left(|E| - \left|E_{\left(A \cup B\right) - V_\mathcal{P}}\right|\right)$.

  \emph{Base Case.} At the beginning, we have $|E_\mathcal{P}| = 0 = \frac{|E|^2}{4n_An_Bh_{max}} \left(|E| - \left|E_{\left(A \cup B\right) - V_\mathcal{P}}\right|\right)$, which satisfies the inequality.

  \emph{Inductive Step.} The only step in the algorithm where any term in the inequality changes is step~\ref{a:update-P}. Let $\mathcal{P}_{old}$ and $\mathcal{P}_{new}$ be the set $\mathcal{P}$ before and after step~\ref{a:update-P} is executed, respectively. Let $a$ be the vertex selected from step~\ref{a:while}. Suppose that $\mathcal{P}_{old}$ satisfies the inequality.

  From the condition in step~\ref{a:while}, we have $|E_{(N(a) \cup N_2(a)) - V_{\mathcal{P}_{old}}}| \geq \frac{1}{4} \frac{|E|^2}{n_An_B}$. Since $|E_{\mathcal{P}_{new}}| = |E_{\mathcal{P}_{old}}| + |E_{(N(a) \cup N_2(a)) - V_{\mathcal{P}_{old}}}|$, we have
  \begin{align*}
    |E_{\mathcal{P}_{new}}| \geq |E_{\mathcal{P}_{old}}| + \frac{1}{4} \frac{|E|^2}{n_An_B}.
  \end{align*}

  Now, consider $\left(|E| - |E_{\left(A \cup B\right) - V_{\mathcal{P}_{new}}}|\right) - \left(|E| - |E_{\left(A \cup B\right) - V_{\mathcal{P}_{old}}}|\right)$. We have
  \begin{align*}
    \left(|E| - |E_{\left(A \cup B\right) - V_{\mathcal{P}_{new}}}|\right) - \left(|E| - |E_{\left(A \cup B\right) - V_{\mathcal{P}_{old}}}|\right) = |E_{\left(A \cup B\right) - V_{\mathcal{P}_{old}}}| - |E_{\left(A \cup B\right) - V_{\mathcal{P}_{new}}}|
  \end{align*}

  Since $V_{\mathcal{P}_{new}} = V_{\mathcal{P}_{old}} \cup \left(N_2(a) \cup N(a)\right)$, we can conclude that
  \begin{align*}
    \left((A \cup B) - V_{\mathcal{P}_{old}}\right) \subseteq \left((A \cup B) - V_{\mathcal{P}_{new}}\right) \cup \left(N_2(a) \cup N(a)\right).
  \end{align*}

  Thus, we can also derive
  \begin{align*}
    E_{(A \cup B) - V_{\mathcal{P}_{old}}} &\subseteq E_{\left((A \cup B) - V_{\mathcal{P}_{new}}\right) \cup \left(N_2(a) \cup N(a)\right)} \\
    &= E_{(A \cup B) - V_{\mathcal{P}_{new}}} \cup \{(a', b') \in E \mid a' \in N_2(a) \text{ or } b' \in N(a)\}. \\
  \end{align*}

  From the definition of $N$ and $N_2$, for any $(a', b') \in E$, if $b' \in N(a)$ then $a' \in N_2(a)$. Thus, we have $\{(a', b') \in E \mid a' \in N_2(a) \text{ or } b' \in N(a)\} = \{(a', b') \in E \mid a' \in N_2(a)\} = E(N_2(a))$. The cardinality of the last term was defined to be $h(a)$. Hence, we can conclude that
  \begin{align*}
    |E_{(A \cup B) - V_{\mathcal{P}_{old}}}| &\leq |E_{(A \cup B) - V_{\mathcal{P}_{new}}} \cup \{(a', b') \in E \mid a' \in N_2(a) \text{ or } b' \in N(a)\}| \\
    &\leq |E_{(A \cup B) - V_{\mathcal{P}_{new}}}| + |\{(a', b') \in E \mid a' \in N_2(a) \text{ or } b' \in N(a)\}| \\
    &= |E_{(A \cup B) - V_{\mathcal{P}_{new}}}| + |\{(a', b') \in E \mid a' \in N_2(a)\}| \\
    &= |E_{(A \cup B) - V_{\mathcal{P}_{new}}}| + |E(N_2(a))| \\
    &= |E_{(A \cup B) - V_{\mathcal{P}_{new}}}| + h(a) \\
    &\leq |E_{(A \cup B) - V_{\mathcal{P}_{new}}}| + h_{max}.
  \end{align*}

  This implies that $\left(|E| - \left|E_{\left(A \cup B\right) - V_{\mathcal{P}}}\right|\right)$ increases by at most $h_{max}$.

  Hence, since $\left(|E| - \left|E_{\left(A \cup B\right) - V_{\mathcal{P}}}\right|\right)$ increases by at most $h_{max}$ and $\left|E_\mathcal{P}\right|$ increases by at least $\frac{1}{4}\frac{|E|^2}{n_An_B}$ and from the inductive hypothesis, we can conclude that
  \begin{align*}
    |E_{\mathcal{P}_{new}}| \geq \frac{|E|^2}{4n_An_Bh_{max}} \left(|E| - \left|E_{\left(A \cup B\right) - V_{\mathcal{P}_{new}}}\right|\right).
  \end{align*}

  Thus, the inductive step is true and the inequality holds at any point during the execution of the algorithm.

  When the algorithm terminates, since $|E_\mathcal{P}| \geq \frac{|E|^2}{4n_An_Bh_{max}} \left(|E| - \left|E_{\left(A \cup B\right) - V_\mathcal{P}}\right|\right)$ and $\frac{|E|}{2} \geq \left|E_{\left(A \cup B\right) - V_\mathcal{P}}\right|$, we can conclude that $|E_{\mathcal{P}}| \geq \frac{|E|^3}{8n_An_Bh_{max}}$. Since the algorithm guarantees to satisfy every edge in $E_{\mathcal{P}}$, we can conclude that the algorithm gives $O(\frac{n_An_Bh_{max}}{|E|^2})$ approximation ratio, which concludes our proof of Lemma~\ref{separation}.
  \end{proof}

\subsection*{Proof of Theorem~\ref{t:approx}}


\begin{proof}
  Using Lemma \ref{reduction} with $a_0$ that maximizes the value of $h(a_0)$, i.e., $h(a_0) = h_{max}$, we can conclude that there exists a polynomial-time $\frac{|E|\overline{p}}{h_{max}}$-approximation algorithm for satisfiable instances of projection games.

  Moreover, from Lemmas~\ref{dBapprox},~\ref{pickbest} and~\ref{separation}, there exists a polynomial-time $\frac{|E|}{n_B}$-approximation algorithm, a polynomial-time $\frac{|\Sigma_A|}{\overline{p}}$-approximation algorithm and a polynomial time $O\left(\frac{n_An_Bh_{max}}{|E|^2}\right)$-approximation algorithm for satisfiable instances of projection games.

  By picking the best out of these four algorithms, we can get an approximation ratio of $O\left(\min\left(\frac{|E|\overline{p}}{h_{max}}, \frac{|\Sigma_A|}{\overline{p}}, \frac{|E|}{n_B}, \frac{n_An_Bh_{max}}{|E|^2}\right)\right)$.

  Since the minimum is at most the value of the geometric mean, we deduce that the approximation ratio is
  \begin{align*}
    O\left(\sqrt[4]{\frac{|E|\overline{p}}{h_{max}} \cdot \frac{|\Sigma_A|}{\overline{p}} \cdot \frac{|E|}{n_B} \cdot \frac{n_An_Bh_{max}}{|E|^2}}\right) = O\left(\sqrt[4]{n_A|\Sigma_A|}\right).
  \end{align*}
  This concludes the proof of Theorem~\ref{t:approx}.
\end{proof}



\section{Sub-Exponential Time Algorithms for Smooth Projection Games}
In this section, we prove Theorem~\ref{t:smooth} via Lemma~\ref{lemmasmoothexact} and Lemma~\ref{lemmasmoothapprox} stated in the following subsections.

Before we proceed to state and prove the lemmas, let us start by noting the following observation which follows immediately from the definition of smoothness (Definition~\ref{d:smooth}).

\begin{observation} \label{o:smooth}
	For any $\mu$-smooth satisfiable projection game and any vertex $a$ in $A$, if more than $\mu d_a$ neighbors of $a$ are assigned according to the optimal assignment, then there is only one $\sigma_a \in \Sigma_A$ that satisfies all edges from $a$ to those neighbors. This $\sigma_a$ is the optimal assignment for $a$.
\end{observation}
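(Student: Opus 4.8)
The plan is to argue uniqueness by contradiction, using the smoothness condition essentially verbatim. First I would fix the vertex $a \in A$ and let $S \subseteq N(a)$ be the set of neighbors of $a$ that are assigned their optimal value $\sigma_b^{OPT}$; by hypothesis $|S| > \mu d_a$, and recall $d_a = |N(a)|$. Since the instance is satisfiable, the optimal assignment $\sigma_a^{OPT}$ satisfies \emph{every} edge incident to $a$, hence in particular every edge from $a$ to $S$. So a satisfying assignment for these edges exists, and it only remains to show there is no other one and that it equals $\sigma_a^{OPT}$.

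For the uniqueness part, suppose toward a contradiction that some $\sigma_a \in \Sigma_A$ with $\sigma_a \neq \sigma_a^{OPT}$ also satisfies all edges from $a$ to $S$. Then for every $b \in S$ we have $\pi_{(a,b)}(\sigma_a) = \sigma_b^{OPT} = \pi_{(a,b)}(\sigma_a^{OPT})$, i.e. $\sigma_a$ and $\sigma_a^{OPT}$ project identically on all of $S$. Therefore
\[
Pr_{b \in N(a)}\left[\pi_{(a,b)}(\sigma_a) = \pi_{(a,b)}(\sigma_a^{OPT})\right] \;\geq\; \frac{|S|}{d_a} \;>\; \mu,
\]
which contradicts Definition~\ref{d:smooth} applied to the distinct pair $\sigma_a, \sigma_a^{OPT}$. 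Hence $\sigma_a = \sigma_a^{OPT}$, so the assignment satisfying all edges from $a$ to those neighbors is unique and is the optimal one.

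The hard part here is essentially nonexistent: the whole argument is a one-line pigeonhole once the smoothness inequality is written out, with satisfiability used only to exhibit $\sigma_a^{OPT}$ as a concrete witness. The only point requiring a touch of care is keeping straight that $N(a)$ simultaneously denotes the neighbor set and the uniform distribution appearing in the smoothness definition, so that $|S|/d_a$ is exactly the probability mass assigned to $S$; once that identification is made, the bound above is immediate.
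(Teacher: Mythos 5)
Your argument is correct and is precisely the pigeonhole-via-smoothness observation the paper has in mind; the paper itself gives no written proof and simply remarks that the statement ``follows immediately from the definition of smoothness,'' which is exactly the contradiction you spell out.
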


We will use this observation in the proofs of both lemmas.

\subsection{Exact Algorithm for Graphs With Sufficiently Large Degrees}

The idea of this algorithm is to randomly select $\Theta(\mu n_B)$ vertices from $B$ and try all possible assignments for them. When the assignment for the selected set is correct, we can determine the optimal assignment for every $a \in A$ such that more than $\mu d_a$ of its neighbors are in the selected set.

The next lemma shows that, provided that the degrees of the vertices in $A$ are not too small, the algorithm gives an assignment that satisfies all the edges with high probability.

\begin{lemma} \label{lemmasmoothexact} For every constant $c\geq 1$, the following statement holds:
  given a satisfiable instance of projection games that satisfies the $\mu$-smoothness property and that $d_a \geq \frac{c\log n_A}{\mu}$ for all $a \in A$, one can find the optimal assignment for the game in time $\exp(O(\mu n_B \log |\Sigma_B|))\cdot poly(n_A,\Sigma_A)$ with probability $1/2$.
\end{lemma}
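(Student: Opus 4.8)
The plan is to implement the sampling idea sketched before the lemma. First I would describe the algorithm: pick a random subset $T \subseteq B$ by including each vertex independently with probability $p = \Theta(\mu)$ — concretely $p = \min(1, 10/(c) \cdot ?)$, but more simply a constant multiple of the quantity that makes the degree bound bite, say $p$ a small absolute constant times $1$ since $d_a \mu \geq c\log n_A$; actually the cleanest choice is $p$ an absolute constant (independent of $\mu$) — wait, we need $|T| = O(\mu n_B)$, so we must take $p = \Theta(\mu)$. Then iterate over all $|\Sigma_B|^{|T|}$ assignments $\psi : T \to \Sigma_B$. For each $\psi$, for every $a \in A$ check whether there is a (necessarily unique, by smoothness, once enough neighbors are pinned) $\sigma_a \in \Sigma_A$ consistent with $\psi$ on $N(a) \cap T$ whenever $|N(a)\cap T| > \mu d_a$; assign $a$ that value, propagate to all of $B$ via the projections, and test whether all edges are satisfied. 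Output the first assignment that satisfies every edge. The running time is $|\Sigma_B|^{|T|} \cdot \mathrm{poly}(n_A,|\Sigma_A|)$, and since $|T| = O(\mu n_B)$ with high probability, this is $\exp(O(\mu n_B \log|\Sigma_B|))\cdot\mathrm{poly}(n_A,|\Sigma_A|)$.

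Next I would argue correctness. Fix the optimal assignment $\varphi^{OPT}$. The key event is: for every $a \in A$, $|N(a) \cap T| > \mu d_a$. When this event holds, consider the iteration where $\psi = \varphi^{OPT}_B\restriction_T$ (this $\psi$ is among those enumerated). By Observation~\ref{o:smooth}, for each $a$ the set $N(a)\cap T$ contains more than $\mu d_a$ neighbors assigned according to the optimum, so the unique $\sigma_a$ consistent with $\psi$ on those neighbors is exactly $\sigma_a^{OPT}$; propagating recovers $\varphi^{OPT}$ on all of $B$ as well, hence satisfies all edges. So conditioned on the key event the algorithm succeeds with certainty.

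It remains to bound the probability of the key event from below, and this is the main (though routine) technical point. For a fixed $a$, $|N(a)\cap T|$ is a sum of $d_a$ independent Bernoulli$(p)$ variables with mean $p d_a$. Choosing $p = 2\mu$ (say) makes the mean $2\mu d_a$, and a Chernoff bound gives $\Pr[|N(a)\cap T| \le \mu d_a] \le \exp(-\Omega(\mu d_a))$. Using the hypothesis $d_a \ge \frac{c\log n_A}{\mu}$, this is at most $\exp(-\Omega(c\log n_A)) = n_A^{-\Omega(c)}$. Taking $c$ a large enough absolute constant (or choosing $p$ as a large enough constant multiple of $\mu$) makes this bound at most $1/(2n_A)$, so a union bound over the $n_A$ vertices of $A$ gives failure probability at most $1/2$; thus the key event holds with probability $\ge 1/2$. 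I should also note that $\mathbb{E}|T| = p\,n_B = \Theta(\mu n_B)$ and, by another Chernoff bound, $|T| = O(\mu n_B)$ except with probability, say, $\le 1/4$ (or one simply truncates: if $|T|$ exceeds twice its expectation, abort — this only loses a small additive term in the success probability, which can be absorbed by adjusting constants), so the claimed running time holds deterministically or with high probability as needed. The only mild subtlety is juggling the constants so that the success probability is exactly $\ge 1/2$ while $|T| = O(\mu n_B)$; this is handled by first fixing $p$ large enough in terms of $c$ for the union bound, then noting the running time bound degrades only in the hidden constant.
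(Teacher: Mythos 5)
Your proposal is correct and takes essentially the same route as the paper's proof: include each $b \in B$ in the sample independently with probability $\Theta(\mu)$, enumerate all assignments to the sample, use Observation~\ref{o:smooth} to recover each $\sigma_a^{OPT}$ once more than $\mu d_a$ neighbors are pinned to their optimal values, and bound the failure probability by Chernoff plus a union bound over $A$, using $d_a \geq c\log n_A/\mu$ to make the per-vertex failure probability polynomially small. The paper parameterizes by constants $c_1, c_2$ rather than fixing $p = 2\mu$, but the constant-juggling you describe matches what the paper does.
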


\begin{proof}
  Let $c_1$ be a constant greater than one.

  The algorithm is as follows.
  \begin{enumerate}
  \item For each $b \in B$, randomly pick it with probability $c_1 \mu$. Call the set of all picked vertices $B^*$.
  \item Try every possible assignments for the vertices in $B^*$. For each assignment:
    \begin{enumerate}
    \item For each node $a \in A$, try every assignment $\sigma_a\in\Sigma_A$ for it. If there is exactly one assignment that satisfies all the edges that touch $a$, pick that assignment.
    \end{enumerate}
  \item If encountered an assignment satisfying all edges, return that assignment.
  \end{enumerate}

  Next, we will show that, with probability $1/2$, the algorithm returns the optimal assignment in time $\exp(O(\mu n_B \log |\Sigma_B|))\cdot poly(n_A,|\Sigma_A|)$.

  For each $b \in B$, let $X_b$ be an indicator variable for whether the vertex $b$ is picked, i.e. $b \in B^*$. From step 1, we have
  \begin{align*}
    E[X_b] = c_1 \mu.
  \end{align*}

  Let $X$ be a random variable representing the number of vertices that are selected in step 1, i.e. $X = |B^*|$. We have
  \begin{align*}
    E[X] &= \sum_{b \in B} E[X_b] = n_B c_1 \mu.
  \end{align*}

  For each $a \in A$, let $Y_a$ be a random variable representing the number of their neighbors that are picked, i.e. $Y_a = |B^* \cap N(a)|$. We have
  \begin{align*}
    E[Y_a] &= \sum_{b \in N(a)} E[X_b] = d_a c_1 \mu.
  \end{align*}

  Clearly, by iterating over all possible assignments for $B^*$, the algorithm running time is $|\Sigma_B|^{O(|B^*|)} = \exp(O(|B^*|\log |\Sigma_B|))$. Thus, if $X = |B^*| = O(\mu n_B)$, the running time for the algorithm is $\exp(O(\mu n_B \log |\Sigma_B|))$.

  Let $c_2$ be a constant greater than one. If $X \leq c_2c_1 \mu n_B$, then the running time of the algorithm is $\exp(O(\mu n_B \log |\Sigma_B|))$ as desired.

  Since $\{X_b\}_{b \in B}$ are independent, using Chernoff bound, we have
  \begin{align*}
    Pr[X > c_2 c_1 \mu n_B] &= Pr[X > c_2E[X]] \\
    &< \left(\frac{e^{c_2 - 1}}{(c_2)^{c_2}}\right)^{n_B c_1 \mu} \\
    &= e^{-n_Bc_1(c_2(\log c_2 - 1) + 1) \mu}.
  \end{align*}

  Now, consider each $a \in A$. By going through all possible combinations of assignments of vertices in $B^*$, we must go over the optimal assignment for $B^*$. In the optimal assignment, if more than $\mu$ fraction of $a$'s neighbors is in $B^*$ (i.e., $Y_a > \mu d_a$), then in step 3, we get the optimal assignment for $a$ due to Observation~\ref{o:smooth}. Since $\{X_b\}_{b \in N(a)}$ are independent, using Chernoff bound, we can obtain the following inequality.
  \begin{align*}
    Pr[Y_a \leq d_a \mu] &= Pr[Y_a \leq \frac{1}{c_1}E[Y_a]] \\
    &< \left(\frac{e^{\frac{1}{c_1} - 1}}{\left(\frac{1}{c_1}\right)^{\frac{1}{c_1}}}\right)^{d_a c_1 \mu} \\
    &= e^{-d_ac_1(1 - \frac{1}{c_1} - \frac{1}{c_1}\log c_1) \mu}.
  \end{align*}

  Hence, we can conclude that the probability that this algorithm returns an optimal solution within $\exp(O(\mu n_B \log |\Sigma_B|))$-time is at least
  \begin{align*}
    Pr\left[\left(\bigwedge_{a \in A}(Y_a > d_a \mu)\right) \wedge (X \leq c_2 c_1 \mu n_B)\right]
    &= 1 - Pr\left[\left(\bigvee_{a \in A}(Y_a \leq d_a \mu)\right) \vee (X > c_2 c_1 \mu n_B)\right] \\
    &\geq 1 - \left(\sum_{a \in A} Pr[Y_a > d_a \mu]\right) - Pr[X > c_2 c_1 \mu n_B] \\
    &> 1 - \left(\sum_{a \in A} e^{-d_ac_1(1 - \frac{1}{c_1} - \frac{1}{c_1}\log c_1) \mu}\right) - e^{-n_Bc_1(c_2(\log c_2 - 1) + 1) \mu}
  \end{align*}

  Since $c_1(1 - \frac{1}{c_1} - \frac{1}{c_1}\log c_1)$ and $c_1(c_2(\log c_2 - 1) + 1)$ are constant, we can define constants $c_3 = c_1(1 - \frac{1}{c_1} - \frac{1}{c_1}\log c_1)$ and $c_4 = c_1(c_2(\log c_2 - 1) + 1)$. The probability that the algorithm returns an optimal solution can be written as
  \begin{align*}
    1 - e^{-n_B\mu c_4} - \sum_{a \in A} e^{-d_a\mu c_3}.
  \end{align*}

  Moreover, since we assume that $d_a \geq \frac{c\log n_A}{\mu}$ for all $a \in A$, we can conclude that the probability above is at least
  \begin{align*}
    1 - e^{-n_B\mu c_4} - \sum_{a \in A} e^{-c c_3 \log n_A} &= 1 - e^{-n_B\mu c_4} - n_A e^{-c c_3 \log n_A} \\
    &= 1 - e^{-n_B\mu c_4} - e^{- (c c_3 - 1) \log n_A} \\
  \end{align*}

  Note that for any constants $c^*_3, c^*_4$, we can choose constants $c_1, c_2$ so that $c_3 = c_1(1 - \frac{1}{c_1} - \frac{1}{c_1}\log c_1) \geq c^*_3$ and $c_4 = c_1(c_2(\log c_2 - 1) + 1) \geq c^*_4$. 
  This means that we can select $c_1$ and $c_2$ so that $c_3 \geq \frac{1}{c} + \frac{2}{c\log n_A} \in O(1)$ and $c_4 \geq \frac{2}{n_B \mu} \in O(1)$. Note also that here we can assume that $\log n_A > 0$ since an instance is trivial when $n_A = 1$. Plugging $c_3$ and $c_4$ into the lower bound above, we can conclude that, for this $c_1$ and $c_2$, the algorithm gives the optimal solution in the desired running time with probability more than 1/2.
\end{proof}

\subsection{Deterministic Approximation Algorithm For General Degrees}

A deterministic version of the above algorithm is shown below. In this algorithm, we are able to achieve an $O(1)$ approximation ratio within asymptotically the same running time as the algorithm above. In contrast to the previous algorithm, this algorithm works even when the degrees of the vertices are small.

The idea of the algorithm is that, instead of randomly picking a subset $B^*$ of $B$, we will deterministically pick $B^*$. We say that a vertex $a\in A$ is {\em saturated} if more than $\mu$ fraction of its neighbors are in $B^*$, i.e. $|N(a) \cap B^*| > \mu d_a$.
In each step, we pick a vertex in $B$ that neighbors the highest number of unsaturated vertices, and add it to $B^*$. We do this until a constant fraction of the edges are satisfied.

\begin{lemma} \label{lemmasmoothapprox}
  There exists an $\exp(O(\mu n_B \log |\Sigma_B|))\cdot poly(n_A,|\Sigma_A|)$-time $O(1)$-approximation algorithm for satisfiable $\mu$-smooth projection game instances.
\end{lemma}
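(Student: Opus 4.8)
The plan is to derandomize the algorithm behind Lemma~\ref{lemmasmoothexact} while simultaneously dropping its large-degree hypothesis, paying for this by satisfying only a constant fraction of the edges. The core mechanism is the same: call a vertex $a$ \emph{saturated} by a set $B^*\subseteq B$ if $|N(a)\cap B^*|>\mu d_a$. If we have guessed the optimal assignment on $B^*$, then by Observation~\ref{o:smooth} the unique $\sigma_a\in\Sigma_A$ consistent with $a$'s $B^*$-neighbours equals $\sigma_a^{OPT}$; setting $\varphi_A(a)=\sigma_a^{OPT}$ for every saturated $a$ and $\varphi_B(b)=\pi_{(a,b)}(\sigma_a^{OPT})$ for every $b$ having a saturated neighbour $a$ (well defined, since two saturated neighbours of $b$ both yield $\sigma_b^{OPT}$) satisfies \emph{every} edge incident to a saturated vertex. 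So it suffices to build, deterministically, a small $B^*$ whose saturated set spans $\Omega(|E|)$ edges, enumerate the $|\Sigma_B|^{|B^*|}$ assignments to $B^*$, run the above completion (filling the rest arbitrarily) for each, and return the best; the guess equal to $\sigma^{OPT}|_{B^*}$ certifies an $O(1)$ ratio.

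The construction of $B^*$ is greedy: repeatedly add the vertex $b\in B\setminus B^*$ that neighbours the most currently-unsaturated vertices, stopping once the edges incident to saturated vertices number at least $|E|/4$. The main obstacle is proving $|B^*|=O(\mu n_B)$ irrespective of degrees, and it is exactly the low-degree vertices that break the naive accounting, so I split $A=A_{hi}\cup A_{lo}$ with $A_{hi}=\{a:d_a\ge 1/\mu\}$. If $\deg(A_{hi})\ge|E|/2$, restrict the greedy to look only at unsaturated vertices of $A_{hi}$. Then two facts bound $|B^*|$: (i) before termination the unsaturated vertices of $A_{hi}$ have total degree $>|E|/2-|E|/4=|E|/4$, hence send at least $(1-\mu)|E|/4$ edges into $B\setminus B^*$, so by pigeonhole the greedy step picks a $b$ with at least $(1-\mu)|E|/(4n_B)$ unsaturated neighbours; (ii) each $a$ is hit while unsaturated at most $\lfloor\mu d_a\rfloor+1\le 2\mu d_a$ times, where the inequality $\lfloor\mu d_a\rfloor+1\le 2\mu d_a$ is precisely what needs $d_a\ge 1/\mu$. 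Summing (ii) over $A_{hi}$ gives at most $2\mu\deg(A_{hi})\le 2\mu|E|$ total hits, and dividing by the per-step lower bound from (i) yields $|B^*|=O(\mu n_B)$ (assuming, w.l.o.g., $\mu\le 1/2$; for $\mu>1/2$ one takes $B^*=B$ and solves exactly within the same budget). Termination is always reached since $B^*=B$ saturates all of $A_{hi}$.

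The complementary case $\deg(A_{hi})<|E|/2$, hence $\deg(A_{lo})\ge|E|/2$, is handled in polynomial time. The key observation is that for $a$ with $d_a<1/\mu$ the smoothness bound forces the number of colliding neighbours to be an integer smaller than $\mu d_a<1$, i.e.\ $0$, so every $\pi_{(a,b)}$ with $b\in N(a)$ is injective. Therefore the subinstance consisting of all edges incident to $A_{lo}$ is a satisfiable instance in which every constraint is a partial bijection; on each connected component, guessing one vertex's value ($\le|\Sigma_A|+|\Sigma_B|$ choices) and propagating along edges via the injective maps (discarding a guess as soon as a preimage fails to exist) recovers $\sigma^{OPT}$ restricted to that component and thus satisfies all its edges. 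Summed over components this costs $poly(n_A,|\Sigma_A|)$ and satisfies $\deg(A_{lo})\ge|E|/2$ edges. Running whichever case applies, the algorithm satisfies $\ge|E|/4=\Omega(\mathrm{OPT})$ edges in time $\exp(O(\mu n_B\log|\Sigma_B|))\cdot poly(n_A,|\Sigma_A|)$; the remaining routine checks are that the completion step is well defined and that the enumeration and $poly$ factors combine as stated.
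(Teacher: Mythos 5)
Your proof is correct and shares the paper's greedy core (saturation, enumeration over $B^*$, the $O(\mu n_B)$ hit-counting bound), but the surrounding case analysis is genuinely different and in one place more rigorous than the paper's. The paper splits on $\mu\geq 1/4$, then on $n_A/|E|\geq 1/4$ (handled by a ``satisfy one neighbor'' variant), and in the remaining case asserts ``without loss of generality $d_a\geq 1/\mu$ for all $a$,'' pointing to unique-games-style propagation but never developing it; that wlog is exactly what makes $n_A\leq\mu|E|$ hold so that the hit-count ratio closes. You instead split $A$ into $A_{hi}=\{d_a\geq 1/\mu\}$ and $A_{lo}$ and branch on whether $\deg(A_{hi})\geq|E|/2$: in the high-degree branch you restrict the greedy to $A_{hi}$, which cleanly isolates the place where $d_a\geq 1/\mu$ is needed (the bound $\lfloor\mu d_a\rfloor+1\leq 2\mu d_a$); in the low-degree branch you observe that $d_a<1/\mu$ forces every $\pi_{(a,b)}$ to be injective and spell out the component-by-component propagation the paper only gestures at, collecting $\deg(A_{lo})>|E|/2$ edges in polynomial time. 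The net effect is the same $O(1)$ ratio and the same time bound, but your decomposition removes the paper's hand-waved reduction to the large-degree case and makes explicit why low-degree vertices can be dispatched outright. One small polish: start the propagation in each component from a vertex of $A_{lo}$ so the number of guesses per component is $|\Sigma_A|$ rather than $|\Sigma_A|+|\Sigma_B|$, keeping the $poly(n_A,|\Sigma_A|)$ factor honest.
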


\begin{proof}
  First, observe that when $\mu \geq 1/4$, we can simply enumerate all the assignments for $B$ and then pick an assignment in $A$ that maximizes number of satisfied edges. This algorithm runs in $\exp(O(n_B \log |\Sigma_B|))\cdot poly(n_A,|\Sigma_A|) = \exp(O(\mu n_B \log |\Sigma_B|))\cdot poly(n_A,|\Sigma_A|)$ time and always returns an optimal assignment as desired.

  From this point onwards, we will focus on only when $\mu < 1/4$. Next, we will consider two cases based on whether $n_A/|E|$ is larger than $1/4$.

  First, if $n_A/|E| \geq 1/4$, then we use the following polynomial-time algorithm that works in a similar manner as the \emph{satisfy one neighbor} algorithm:
  \begin{enumerate}
  \item For each $b \in B$, assign to $b$ any $\sigma_b \in \Sigma_B$ such that $\pi_{(a', b)}^{-1}(\sigma_b) \ne \emptyset$ for every $a' \in N(b)$.
  \item For each $a \in A$, pick one neighbor $b_a$ of $a$. Assign to $a$ any $\sigma_a \in \pi_{(a, b_a)}^{-1}(\sigma_b)$.
  \end{enumerate}
  Observe that, in the first step, we can find $\sigma_b$ for every $b \in B$ because $\sigma_b^{OPT}$ satisfies the condition. Moreover, from the second step, $(a, b_a)$ is satisfied for every $a \in A$. Thus, the output assignment that satisfies at least $n_A$ edges, which is at least $1/4 = \Omega(1)$ fraction of all the edges as desired.

  We now turn our attention to the remaining case where $n_A/|E| < 1/4$. Observe that, without loss of generality, we can assume that $\mu \geq 1/d_a$ for all $a \in A$. This is because $\mu$-smoothness is the same as uniqueness when $\mu < 1/d_a$, and, after determining an optimal assignment for all saturated vertices, one can also find an optimal assignment for all the $a$'s with $\mu <1/d_a$ in polynomial time (similarly to solving satisfiable instances of unique games in polynomial time).

  For convenience, let $c_1$ be $1/4$. The algorithm for this case can be described in the following steps.
  \begin{enumerate}
  \item Set $B^* \leftarrow \emptyset$.
  \item Let $S$ be the set of all saturated vertices, i.e. $S = \{a \in A \mid |N(a) \cap B^*| > \mu d_a\}$. As long as $|\sum_{a \in S} d_a| < c_1 |E|$:
    \begin{enumerate}
    \item Pick a vertex $b^* \in B - B^*$ with maximal $|N(b) - S|$. Set $B^* \leftarrow B^* \cup \{b^*\}$.

    \end{enumerate}
  \item Iterate over all possible assignments to the vertices in $B^*$:
    \begin{enumerate}
    \item For each saturated vertex $a \in S$, search for an assignment that satisfies all edges in $\{a\}\times (N(a) \cap B^*)$. If, for any saturated vertex $a \in S$, this assignment cannot be found, skip the next part and go to the next assignment for $B^*$.

    \item Assign each vertex in $B$ an assignment in $\Sigma_B$ that  satisfies the maximum number of edges that touch it.
    \item Assign arbitrary elements from $\Sigma_A$ to the vertices in $A$ that are not yet assigned.
    \end{enumerate}
  \item Output the assignment that satisfies the maximum number of edges.
  \end{enumerate}

  We will divide the proof into two steps. First, we will show that the number of edges satisfied by the output assignment is at least $c_1|E|$. Second, we will show that the running time for this algorithm is $\exp(O(\mu n_B \log |\Sigma_B|))\cdot poly(n_A,|\Sigma_A|)$.

  Observe that since we are going through all the possible assignments of $B^*$, we must go through the optimal assignment. Focus on this assignment. From Observation~\ref{o:smooth}, for each saturated vertex $a \in S$, there is exactly one assignment that satisfies all the edges to $B^*$; this assignment is the optimal assignment. Since we have the optimal assignments for all $a \in S$, we can satisfy all the edges with one endpoint in $S$; the output assignment satisfies at least $\sum_{a \in S} d_a$ edges. Moreover, when the algorithm terminates, the condition in step 2 must be false. Thus, we can conclude that $\sum_{a \in S} d_a \geq c_1|E|$. As a result, the algorithm gives an approximation ratio of $\frac{1}{c_1} = 4 = O(1)$.

  Since we go through all possible assignments for $B^*$, the running time of this algorithm is $|\Sigma_B|^{O(|B^*|)}\cdot poly(n_A,|\Sigma_A|)$. In order to prove that the running time is $\exp(O(\mu n_B \log |\Sigma_B|))\cdot poly(n_A,|\Sigma_A|)$, we only need to show that $|B^*| = O(\mu n_B)$.

  When the algorithm picks a vertex $b^*\in B$ to $B^*$ we say that it {\em hits} all its neighbors that are unsaturated.
  Consider the total number of hits to all the vertices in $A$. Since saturated vertices do not get hit any more, we can conclude that each vertex $a \in A$ gets at most $\mu d_a + 1$ hits. As a result, the total number of hits to all vertices $a \in A$ is at most
  \begin{align*}
    \sum_{a \in A} (\mu d_a + 1) &= \mu |E| + n_A.
  \end{align*}

  Next, consider the set $B^*$. Let $B^* = \{b_1, \dots, b_m\}$ where $b_1, \cdots, b_m$ are sorted by the time, from the earliest to the latest, they get added into $B^*$. Let $v(b_i)$ be the number of hits $b_i$ makes. Since the total number of hits by $B^*$ equals the total number of hits to $A$, from the bound we established above, we have
  \begin{align*}
    \sum_{i=1}^{m} v(b_i) \leq \mu |E| + n_A.
  \end{align*}

  Now, consider the adding of $b_i$ to $B^*$. Let $B^*_{i}$ be $\{b_1, \dots, b_{i-1}\}$, the set $B^*$ at the time right before $b_i$ is added to $B^*$, and let $S_i$ be $\{a \in A \mid |N(a) \cap B^*_i| > \mu d_a\}$, the set of saturated vertices at the time right before $b_i$ is added to $B^*$. Since we are picking $b_i$ from $B - B^*_i$ with the maximum number of hits, the number of hits from $b_i$ is at least the average number of possible hits over all vertices in $B - B^*_i$. That is
  \begin{align*}
    v(b_i) &= |N(b_i) - S_i| \\
    &\geq \frac{1}{|B - B^*_i|}\left(\sum_{b \in B - B^*_i} |N(b) - S_i|\right) \\
    &\geq \frac{1}{n_B}\left(\left(\sum_{b \in B} |N(b) - S_i|\right) - \left(\sum_{b \in B^*_i} |N(b) - S_i|\right)\right).
  \end{align*}
  We can also derive the following inequality.
  \begin{align*}
    \sum_{b \in B} |N(b) - S_i| &= \sum_{b \in B} \sum_{a \in N(b) - S_i} 1 \\
    &=\sum_{b \in B} \sum_{a \in A - S_i} \textbf{1}_{(a, b) \in E} \\
    &= \sum_{a \in A - S_i} \sum_{b \in B}  \textbf{1}_{(a, b) \in E} \\
    &= \sum_{a \in A - S_i} d_a \\
    &= |E| - \left(\sum_{a \in S_i} d_a\right) \\
    &> (1 - c_1)|E|.
  \end{align*}
  Note that the last inequality comes from the condition in step 2 of the algorithm.

  Moreover, we have
  \begin{align*}
    \sum_{b \in B^*_i} |N(b) - S_i| &= \sum_{j = 1}^{i-1} |N(b_j) - S_i| \\
      (\text{Since } S_j \subseteq S_i) &\leq \sum_{j = 1}^{i-1} |N(b_j) - S_j| \\
    &= \sum_{j=1}^{i-1} v(b_j) \\
    &\leq \sum_{j=1}^{m} v(b_j) \\
    &\leq \mu|E| + n_A.
  \end{align*}

  Putting them together, we have
  \begin{align*}
    v(b_i) > \frac{1}{n_B} ((1 - c_1)|E| - \mu|E| - n_A)
  \end{align*}
  for all $i = 1, \dots, m$

  From this and from $\sum_{i=1}^{m} v(b_i) \leq \mu|E| + n_A$, we can conclude that
  \begin{align*}
    m &< \frac{\mu|E| + n_A}{\frac{1}{n_B} ((1 - c_1)|E| - \mu|E| - n_A)} \\
    &= n_B \mu \left(\frac{1 + \frac{n_A}{|E|\mu}}{(1 - c_1) - \mu - \frac{n_A}{|E|}}\right).
  \end{align*}
  Consider the term $\frac{1 + \frac{n_A}{|E|\mu}}{(1 - c_1) - \mu - \frac{n_A}{|E|}}$. Since $c_1 = 1/4, \mu < 1/4$ and $\frac{n_A}{|E|} < 1/4$, we can conclude that the denominator is $\Theta(1)$. \\

  Consider $\frac{n_A}{|E|\mu}$. We have
    \begin{align*}
      |E|\mu &= \sum_{a \in A} \mu d_a.
    \end{align*}
    Since we assume that $d_a\geq 1/\mu$ for all $a\in A$, we have $n_A \leq |E|\mu$. In other words, $\frac{n_A}{|E|\mu} \leq 1$.
    Hence, we can conclude that the numerator is $\Theta(1)$.

    As a result, we can deduce that $m = O(\mu n_B)$. Thus, the running time for the algorithm is $\exp(O(\mu n \log |\Sigma_B|))\cdot poly(n_A,|\Sigma_A|)$, which concludes our proof.
\end{proof}

\section{PTAS for Projection Games on Planar Graphs} \label{s:planar}

In this section, we prove our results for projection games on planar graphs. The section is divided naturally into three subsections. First, we show the NP-hardness of projection games on planar graphs. Next, we describe PTAS for the problem. Lastly, we prove a matching running time lower bound for the PTAS.

\subsection{NP-Hardness of Projection Games on Planar Graphs}

The NP-hardness of projection games on planar graphs is proved by reduction from 3-coloring on planar graphs. The latter was proven to be NP-hard by Garey, Johnson and Stockmeyer~\cite{GJS}.

\begin{theorem}
  \textsc{Label Cover} on planar graphs is NP-hard.
\end{theorem}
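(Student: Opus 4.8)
The plan is to reduce from \textsc{3-Coloring} on planar graphs, which is NP-hard by Garey, Johnson and Stockmeyer. Given a planar graph $H = (V, F)$ for which we want to decide $3$-colorability, I would construct a projection game instance whose underlying bipartite graph is planar and which is satisfiable if and only if $H$ is $3$-colorable. The natural encoding places the \emph{original vertices} of $H$ on the $A$-side, each with alphabet $\Sigma_A = \{1,2,3\}$ (the three colors), and places a vertex on the $B$-side for each \emph{edge} of $H$, whose job is to check that the two endpoints receive different colors. The subtlety is that a projection constraint on an edge $(a,b)$ is a \emph{function} $\pi_{(a,b)}: \Sigma_A \to \Sigma_B$ applied to the single $A$-endpoint, so a $B$-vertex of degree two sees two independent projections and cannot directly compare the two colors. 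The standard fix is to let the $B$-vertex for an edge $\{u,v\}$ of $H$ have alphabet $\Sigma_B = \{(c_u, c_v) : c_u \neq c_v\}$ (the set of ordered proper colorings of that single edge), with the projection $\pi_{(u, b)}$ sending color $c$ to the set of pairs whose first coordinate is $c$ — but since projections must be functions, one instead makes $\Sigma_B$ record the full ordered pair and defines $\pi_{(u,b)}(c_u)$ to be $\ldots$ which again is not single-valued.

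The cleanest route that actually yields functions is the following: keep $B$-vertices of degree $2$, set $\Sigma_B$ for the $B$-vertex of edge $\{u,v\}$ equal to $\{1,2,3\}$ as well, orient each edge, and let $\pi_{(u,b)} = \mathrm{id}$ for the tail $u$ while $\pi_{(v,b)}$ is the constraint ``this is not equal to''—still not a function. So instead I would use the construction in which each edge-gadget is itself a small planar projection game: replace each edge $\{u,v\}$ of $H$ by a $B$-vertex $b$ together with auxiliary structure so that the only constraints are projections from the two color-variables. Concretely, subdivide: introduce for edge $\{u,v\}$ a fresh $A$-vertex $w_{uv}$ with alphabet equal to the $6$ proper colorings $(c_u,c_v)$, a $B$-vertex $b_u$ with $\Sigma_{b_u}=\{1,2,3\}$ and projection $\pi_{(w_{uv},b_u)}(c_u,c_v)=c_u$, a $B$-vertex $b_v$ with projection $\pi_{(w_{uv},b_v)}(c_u,c_v)=c_v$, and edges $(u,b_u)$, $(v,b_v)$ with identity projections. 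Now $b_u$ forces $u$'s color to equal the first coordinate of $w_{uv}$'s label, $b_v$ forces $v$'s color to equal the second coordinate, and $w_{uv}$'s alphabet enforces that these two coordinates differ; hence all edges of the game are simultaneously satisfiable iff $H$ is properly $3$-colorable. All projections here are genuine functions.

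The key remaining step is \emph{planarity of the constructed bipartite graph}. The gadget for edge $\{u,v\}$ consists of the path $u - b_u - w_{uv} - b_v - v$, i.e.\ it just subdivides the edge $\{u,v\}$ of $H$ four times and does not connect to anything else. Subdividing edges preserves planarity, and a planar embedding of $H$ immediately induces a planar embedding of the whole instance by drawing each subdivided path along the drawing of the original edge; I would spell this out, noting the graph is bipartite with $A = V \cup \{w_{uv}\}$ and $B = \{b_u, b_v : \{u,v\}\in F\}$. Finally I would observe that the reduction is polynomial-time (alphabet sizes are at most $6$, the graph has $O(|V|+|F|)$ vertices) and verify the two directions of correctness: a proper $3$-coloring $\chi$ of $H$ yields the satisfying assignment $\varphi_A(u)=\chi(u)$, $\varphi_A(w_{uv})=(\chi(u),\chi(v))$, $\varphi_B(b_u)=\chi(u)$, $\varphi_B(b_v)=\chi(v)$; conversely any assignment satisfying all edges of the game restricts to a proper $3$-coloring of $H$ because the chain of identity/coordinate projections forces $\varphi_A(u)\neq\varphi_A(v)$ for every $\{u,v\}\in F$. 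The main obstacle — and the only genuinely delicate point — is making sure the gadget uses only \emph{projection} constraints (single-valued $\pi_e$) rather than arbitrary binary relations while still comparing two colors; the trick above of pushing the comparison into the alphabet of an auxiliary $A$-vertex resolves this, and everything else is routine.
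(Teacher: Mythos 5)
Your reduction from planar $3$-coloring is the right approach, and your final gadget is correct: subdividing each edge $\{u,v\}$ of $H$ into the path $u$--$b_u$--$w_{uv}$--$b_v$--$v$ with coordinate projections from $w_{uv}$ and identity projections from $u,v$ does force $\varphi_A(u)\neq\varphi_A(v)$, and edge subdivision preserves planarity. The paper, however, resolves the functional-projection obstacle you wrestled with at length far more cleanly by reversing the bipartition: it puts each edge $e=\{u,v\}$ of $H$ on the $A$-side with $\Sigma_A$ equal to the six ordered proper colorings of a single edge, and each vertex of $H$ on the $B$-side with $\Sigma_B$ the three colors; the projections $\pi_{(e,u)}$ and $\pi_{(e,v)}$ are then just the two coordinate maps, the ``endpoints differ'' constraint is baked into $\Sigma_A$, every edge of $H$ becomes a path of length two, and no auxiliary vertices are needed. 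Your construction also has a minor technical slip worth noting: you give original vertices of $H$ the alphabet $\{1,2,3\}$ but the auxiliary $w_{uv}$ a six-element alphabet (and similarly treat $\Sigma_B$ as per-vertex), whereas the problem definition fixes a single global $\Sigma_A$ and a single global $\Sigma_B$ with each $\pi_e$ defined on all of $\Sigma_A$. This is fixable --- for instance, let $\Sigma_A$ be the six-element pair alphabet throughout, read off the color of an original vertex as the first coordinate of its label, and let $\pi_{(u,b_u)}$ be the first-coordinate projection rather than the identity --- but as written your instance is not, strictly speaking, a projection game.
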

\begin{proof}
  We will prove this by reducing from 3-colorability problem on planar graph, which was proven by Garey, Johnson and Stockmeyer to be NP-hard~\cite{GJS}. The problem can be formally stated as following.

  \textsc{Planar Graph 3-Colorability: }
  Given a planar graph $\check{G} = (\check{V}, \check{E})$, decide whether it is possible to assign each node a color from $\{red, blue, green\}$ such that, for each edge, its endpoints are of different colors.

  Note that even though $\check{G}$ is an undirected graph, we will represent each edge as a tuple $(u, v) \in \check{E}$ where $u, v \in \check{V}$. We will never swap the order of the two endpoints within this proof.

  We create a projection game $(A, B, E, \Sigma_A, \Sigma_B, \{\pi_e\}_{e \in E})$ as follows:
  \begin{itemize}
  \item Let $A = \check{E}$ and $B = \check{V}$.
  \item $E = \{(a, b) \in A \times B\mid b \text{ is an endpoint of } a \text{ with respect to } \check{G}\}$.
  \item $\Sigma_A = \{(red, blue), (red, green), (blue, red), (blue, green), (green, red), (green, blue)\}$ and $\Sigma_B = \{red, blue, green\}$.
  \item For each $e = (u, v) \in \check{E} = A$, let $\pi_{(e, u)}: (c_1, c_2) \to c_1$ and $\pi_{(e, v)}: (c_1, c_2) \to c_2$, i.e., $\pi_{(e, u)}$ and $\pi_{(e, v)}$ are projections to the first and the second element of the tuple respectively.
  \end{itemize}

  It is obvious that $G = (A, B, E)$ is a planar graph since $A, B$ are $\check{E}, \check{V}$ respectively and there exists an edge between $a \in A$ and $b \in B$ if and only if node corresponding to $b$ in $\check{V}$ is an endpoint of an edge corresponding to $a$ in $\check{E}$. This means that we can use the same planar embedding from the original graph $\check{G}$ except that each node represent a node from $B$ and at each edge, we put in a node from $A$ corresponding to that edge. It is also clear that the size of the projection game is polynomial of the size of $\check{G}$.

  The only thing left to show is to prove that $(A, B, E, \Sigma_A, \Sigma_B, \{\pi_e\}_{e \in E})$ is satisfiable if and only if $\check{G}$ is 3-colorable.

  ($\Rightarrow$) Suppose that $(A, B, E, \Sigma_A, \Sigma_B, \{\pi_e\}_{e \in E})$ is satisfiable. Let $\sigma_u$ be the assignment for each vertex $u \in A \cup B$ that satisfies all the edges in the projection game. We will show that by assigning $\sigma_v$ to $v$ for all $v \in \check{V} = B$, we are able to color $\check{G}$ with 3 colors such that, for each edge, its endpoints are of different color.

  Since $\check{V} = B$, $\sigma_v \in \{red, blue, green\}$ for all $v \in \check{V}$. Thus, this is a valid coloring. To see that no two endpoints of any edge are of the same color, consider an edge $e = (u, v) \in \check{E} = A$. From definition of $E$, we have $(e, u) \in E$ and $(e, v) \in E$. Moreover, from definition of $\pi_{(e, u)}$ and $\pi_{(e, v)}$, we can conclude that $\sigma_e = (\sigma_u, \sigma_v)$. Since $\sigma_e \in \Sigma_A$, we can conclude that $\sigma_u \ne \sigma_v$ as desired.

  Thus, $\check{G}$ is 3-colorable.

  ($\Leftarrow$) Suppose that $\check{G}$ is 3-colorable. In a valid coloring scheme, let $c_v$ be a color of node $v$ for each $v \in \check{V} = B$. Define the assignment of the projection game $\varphi_A, \varphi_B$ as follows
  \begin{align*}
    \varphi_A(a) = (c_u, c_v) &\text{ for all } a = (u, v) \in A = \check{E},\\
    \varphi_B(b) = c_b &\text{ for all } b \in B = \check{V}.
  \end{align*}

  Since $c_u \ne c_v$ for all $(u, v) \in \check{E}$, we can conclude that the range of $\varphi_A$ is a subset of $\Sigma_A$. Moreover, it is clear that the range of $\varphi_B$ is a subset of $\Sigma_B$. As a result, the assignment defined above is valid. Moreover, it is obvious that $\pi_e(\varphi_A(a)) = \varphi_B(b)$ for all $e = (a, b) \in E$. Hence, the projection game $(A, B, E, \Sigma_A, \Sigma_B, \{\pi_e\}_{e \in E})$ is satisfiable.

  As a result, we can conclude that \textsc{Label Cover} on planar graph is NP-hard.

\end{proof}

\subsection{PTAS for Projection Games on Planar Graphs} \label{subsec:ptas-planar}

We use the standard Baker's technique for finding PTAS for problems on planar graphs~\cite{Baker94} to construct one for projection games. 
Although not realized in the original paper by Baker, the technique relies on the concept of {\em treewidth}, which we will review the definition next. In this perspective, Baker's technique constructs an algorithm based on two main steps:
\begin{enumerate}
\item {\bf Thinning Step:} given a graph $G = (V, E)$, partition $E$ into subsets $S_1, \dots, S_h$ such that, for each $i = 1, \dots, h$, we obtain a graph with bounded treewidth when all edges in $S_i$ are deleted from the original graph.
\item {\bf Dynamic Programming Step:} for each $i = 1, \dots, h$, use dynamic programming to solve the problem on $(V, E - S_i)$, which has bounded treewidth. Then, output the best solution among these $h$ solutions.
\end{enumerate}

\subsubsection{Tree Decomposition and Treewidth}

Before proceed to the algorithm, we first define \emph{tree decomposition}. A \emph{tree decomposition} of a graph $G = (V, E)$ is a collection of subsets $B_1, \dots, B_n \subseteq V$ and a tree $T$ on these subsets such that
\begin{enumerate}
\item $V = B_1 \cup \cdots \cup B_n.$
  \item For each edge $(u, v) \in E$, there exists $B_i$ such that $u, v \in B_i$.
  \item For each $B_i$ and $B_j$, if $v \in B_i \cap B_j$, then $v$ must be in every subset along the path in $T$ from $B_i$ to $B_j$.
\end{enumerate}

The \emph{width} of a tree decomposition $(\{B_1, \dots, B_n\}, T)$ is the largest size of $B_1, \dots, B_n$ minus one. The \emph{treewidth} of a graph is the minimum width across all possible tree decompositions.

\subsubsection{Thinning}

Even though a planar graph does not necessarily have a bounded treewidth, it is possible to delete a ``small'' set of edges from the graph to obtain a graph with bounded treewidth; by ``small'', we do not refer to the size but refer to the change in the optimal solution for the projection game after we delete the edges from the set.

To achieve this, we partition $E$ into $h$ sets such that, when deleting all edges from each set, the treewidth is bounded linearly on $h$. Later on, we will show that, for at least one of the sets, deleting all the edges from it affects the optimal solution of projection games by at most a factor of $1 - 1/h$.

Baker implicitly proved the following partitioning lemma in her paper~\cite{Baker94}. For a more explicit formulation, please see~\cite{Epp00}. \footnote{In both~\cite{Baker94} and~\cite{Epp00}, the vertices, not edges, are partitioned. However, it is obvious that the lemma works for edges as well.}
\begin{lemma}[\cite{Baker94}]
  For any planar graph $G = (V, E)$ and integer $h$, there is a $O(h(|V| + |E|))$-time algorithm that outputs a partition $(S_1, \dots, S_h)$ of $E$ ,and, for each $i$, a tree decomposition of $(V, E - S_i)$ having width at most $O(h)$.
\end{lemma}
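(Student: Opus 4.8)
The plan is to reconstruct Baker's layering argument. Fix an arbitrary root $r \in V$ and compute a layering of $G$ in which each vertex $v$ receives a level $\ell(v) \in \{0,1,2,\dots\}$; concretely, one may take $\ell(v)$ to be the breadth-first-search distance from $r$ (doing this per connected component if $G$ is disconnected), which costs $O(|V|+|E|)$ time. Every edge of such a layering joins two vertices whose levels differ by at most one, so we may assign to each edge $e = (u,v)$ the level $\ell(e) := \min(\ell(u),\ell(v))$. For $i \in \{0,1,\dots,h-1\}$ set
\[
S_i \ :=\ \{\, e \in E \;:\; \ell(e) \equiv i \pmod h \,\}.
\]
The sets $S_0,\dots,S_{h-1}$ are pairwise disjoint and cover $E$, so they form the desired partition, and they are produced in an additional $O(|E|)$ time.

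Next I would show that every connected component $C$ of $(V, E - S_i)$ is confined to a short window of consecutive levels. Suppose $C$ contains a vertex of level $p$ and a vertex of level $q$ with $q > p$, and take any path between them inside $C$. Since consecutive vertices on the path change level by at most one, for each integer $m$ with $p \le m < q$ the path must, the first time it reaches level $m+1$, traverse an edge whose endpoints have levels $m$ and $m+1$, i.e.\ an edge of level $m$. All such edges with $m \equiv i \pmod h$ have been deleted, so the set $\{p, p+1, \dots, q-1\}$ contains no integer congruent to $i$ modulo $h$; hence $q - p < h$. Therefore the vertex set of each component of $(V, E - S_i)$ lies within at most $h$ consecutive levels.

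The remaining ingredient is the classical fact underlying Baker's technique (made explicit in~\cite{Baker94,Epp00}): a planar graph equipped with a layering that uses only $O(h)$ consecutive levels has treewidth $O(h)$, and a corresponding tree decomposition of width $O(h)$ can be computed in linear time. Applying this to each component of $(V, E - S_i)$ and taking the disjoint union of the resulting decompositions yields a tree decomposition of $(V, E - S_i)$ of width $O(h)$, obtained in $O(|V|+|E|)$ time. Performing this for each of the $h$ values of $i$, on top of the single layering computation, gives total running time $O(h(|V|+|E|))$, as claimed.

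The only non-bookkeeping step — and the one I expect to be the main obstacle — is the treewidth bound invoked in the last paragraph: one must argue that a bounded number of consecutive layers forces bounded treewidth in a planar graph, and extract an explicit decomposition within the linear-time budget. This is standard and is exactly what Baker's original construction and Eppstein's more explicit treatment establish, so here we simply cite it; the layering construction and the component-width claim above, which are the only parts that need to be phrased for edges rather than vertices, are entirely routine.
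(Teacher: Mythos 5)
The paper does not actually prove this lemma; it cites Baker~\cite{Baker94} and Eppstein~\cite{Epp00}, with a footnote remarking that those sources partition vertices rather than edges and that the adaptation is straightforward. Your reconstruction of the layering argument---assigning $\ell(e)=\min(\ell(u),\ell(v))$ to each edge, deleting a residue class mod $h$, showing each surviving component spans fewer than $h$ consecutive BFS layers, and invoking the standard ``planar graph on $O(h)$ consecutive layers has treewidth $O(h)$'' fact from those references---is exactly the cited argument carried over to edges, and is correct.
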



Next, we will show that, for at least one of the $S_i$'s, removing all the edges from it affects the optimal solution by at most a factor of $1 - 1/h$:

\begin{lemma}
  For any projection games instance on graph $G = (V, E)$ and any partition $(S_1, \dots, S_h)$ of $E$, there exists $i \in \{1, \dots, h\}$ such that the projection game instance resulted from removing all the edges in $S_i$ has the optimal solution that is within $1 - 1/h$ factor of the optimal of the original instance.
\end{lemma}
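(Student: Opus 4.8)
The plan is to use an averaging argument. The optimal assignment $\varphi^{OPT}$ for the original instance satisfies all $|E|$ edges (the instance is satisfiable). Now consider what happens when we restrict $\varphi^{OPT}$ to the instance obtained by deleting $S_i$: it still satisfies all edges of $E - S_i$, i.e., exactly $|E| - |S_i|$ edges, so the optimum of the reduced instance is at least $|E| - |S_i|$. I would then average over $i$: since $(S_1, \dots, S_h)$ is a partition of $E$, we have $\sum_{i=1}^h |S_i| = |E|$, so by pigeonhole there exists some $i^\ast$ with $|S_{i^\ast}| \leq |E|/h$. For that $i^\ast$, the optimum of the instance with $S_{i^\ast}$ removed is at least $|E| - |E|/h = (1 - 1/h)|E| = (1 - 1/h)\,\mathrm{OPT}$, which is exactly the claimed bound.

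More concretely, the key steps in order are: (i) observe $\mathrm{OPT} = |E|$ for the original (satisfiable) instance; (ii) observe that for every $i$, the reduced instance on $(V, E - S_i)$ is still satisfiable in the sense that $\varphi^{OPT}$ satisfies all of its $|E| - |S_i|$ edges, hence its optimum equals $|E| - |S_i|$; (iii) invoke $\sum_i |S_i| = |E|$ and pick $i^\ast = \arg\min_i |S_i|$, giving $|S_{i^\ast}| \le |E|/h$; (iv) conclude the reduced optimum is at least $(1-1/h)|E|$. One should note that the optimum of the reduced instance could in principle exceed $|E| - |S_i|$ (if some assignment does even better on the surviving edges), but we only need the lower bound, so this causes no trouble; the stated ratio is against the original optimum $|E|$.

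I do not expect any serious obstacle here — this is a one-line pigeonhole/averaging argument once one notices that perfect completeness makes $\mathrm{OPT} = |E|$ and that deleting $|S_i|$ edges can cost at most $|S_i|$ satisfied edges. The only mild subtlety worth spelling out in the write-up is that we are comparing the reduced instance's optimum to the \emph{original} instance's optimum (both measured as counts of satisfied edges), and that "removing edges" here just means restricting the graph and the collection $\{\pi_e\}$ to $E - S_i$; the alphabets stay the same, so any assignment for the original instance is also a valid assignment for the reduced one. It may also be worth remarking that this lemma, combined with the preceding partitioning lemma and the bounded-treewidth dynamic program, is what yields the PTAS with the claimed $(nk)^{O(1/\varepsilon)}$ running time by taking $h = \Theta(1/\varepsilon)$.
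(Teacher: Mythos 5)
Your argument is the same pigeonhole/averaging argument as the paper's, and it is correct for satisfiable instances. The one discrepancy is that the lemma, as stated, applies to \emph{any} projection game instance, not just satisfiable ones, whereas your step (i) asserts $\mathrm{OPT} = |E|$, i.e.\ implicitly assumes satisfiability. The paper's proof avoids this by working with $E_{sat}$, the set of edges satisfied by an optimal assignment $\varphi_{OPT}$ (so $|E_{sat}| = \mathrm{OPT}$ whether or not $\mathrm{OPT} = |E|$): by pigeonhole there is some $i$ with $|S_i \cap E_{sat}| \le |E_{sat}|/h$, and since $\varphi_{OPT}$ still satisfies every edge of $E_{sat} \setminus S_i$ in the instance on $(V, E - S_i)$, the reduced optimum is at least $(1-1/h)|E_{sat}| = (1-1/h)\,\mathrm{OPT}$. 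In the paper's actual use of the lemma (the planar PTAS is for satisfiable instances) your specialization is harmless, but to match the lemma's own generality you should replace $|E|$ by $|E_{sat}|$ and $|S_i|$ by $|S_i \cap E_{sat}|$ throughout; the rest of your reasoning then goes through unchanged.
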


\begin{proof}
  Suppose that $E_{sat}$ is the set of all the edges satisfied by the optimal assignment $\varphi_{OPT}$ of the original instance. From the pigeonhole principle, there exists at least one $i \in \{1, \dots, h\}$ such that $|S_i \cap E_{sat}| \leq |E_{sat}|/h$. Since $\varphi_{OPT}$ still satisfies all the edges in $E - (S_i \cap E)$ in the projection game instance induced by $(V, E - S_i)$, we can conclude that the optimal assignment to this instance satisfies at least $(1 - 1/h)|E_{sat}|$ edges, which concludes the proof of this lemma.
\end{proof}

For the purpose of our algorithm, we select $h = 1 + \frac{1}{\varepsilon}$, which ensures that the treewidth of $(V, E - S_i)$ is at most $O(h) = O(1/\varepsilon)$ for each $i = 1, \dots, h$. Moreover, from the above lemma, we can conclude that, for at least one such $i$, the optimal solution of the projection game instance induced on $(V, E - S_i)$ satisfies at least $1 - 1/h = 1/(1 + \varepsilon)$ times as many edges satisfied by the optimal solution of the original instance.

\subsubsection{Dynamic Programming}

Next, we present a dynamic programming algorithm that solves a projection game in a bounded treewidth bipartite graph $G' = (A', B', E')$, given its tree decomposition $(\{B_1, \dots, B_n\}, T)$ of width $w$.


At a high level, the algorithm works as follows. We use depth-first search starting at $B_1$ to traverse the tree $T$. At each node $B_i$ and each assignment $\varphi : B_i \to (\Sigma_A \cup \Sigma_B)$ of $B_i$, the subproblem is to find the maximum number of satisfied edges with both endpoints in one of the nodes in the subtree of $T$ rooted at $B_i$ when all the vertices in $B_i$ are assigned according to $\varphi$. Let us call the answer to this subproblem $a_{B_i, \varphi}$. After finish solving all such subproblems, we go through all assignments $\varphi$'s for $B_1$ and output the maximum $a_{B_1, \varphi}$ among such $\varphi$'s. This is the optimum of the projection game.

Now, we give details on how to find each $a_{B_i, \varphi}$ based on the previously solved subproblems. We start by setting $a_{B_i, \varphi}$ to be the number of edges with both endpoints in $B_i$ that are satisfied by $\varphi$. Then, for each child $B_j$ of $B_i$, find the maximum value of $a_{B_j, \varphi'}$ among all ``compatible'' assignments $\varphi'$ of $B_j$, i.e., $\varphi'$ such that $\varphi(u) = \varphi'(u)$ for all $u \in B_i \cap B_j$. For each $B_j$, add this value to $a_{B_i, \varphi}$. Lastly, for each $B_j$, find the number of edges with both endpoints in $B_i \cap B_j$ that are satisfied by $\varphi$ and subtract this from $a_{B_i, \varphi}$.

The correctness of the algorithm follows easily from the description of the algorithm. To analyze the running time of the algorithm, first observe that there are $(|\Sigma_A|+|\Sigma_B|)^{|B_i|} \leq (|\Sigma_A|+|\Sigma_B|)^{w+1}$ possible assignments for $B_i$.Thus, there are at most $n(|\Sigma_A|+|\Sigma_B|)^{w+1}$ subproblems.

Moreover, observe that the running time of the algorithm is dominated by the time spent to compute the maximum value of $a_{B_j, \varphi'}$ for each child $B_j$. Since $B_i$ has at most $n$ children and there are at most $(|\Sigma_A|+|\Sigma_B|)^{w+1}$ assignments for $B_j$, there are at most $n(|\Sigma_A|+|\Sigma_B|)^{w+1}$ such $(B_i, \varphi)$ pairs to enumerate for each $(B_i, \varphi)$. Hence, the running time to solve each subproblem is $O(n(|\Sigma_A|+|\Sigma_B|)^{w+1})$.

As a result, the overall running time for the dynamic programming algorithm is $O(n^2(|\Sigma_A|+|\Sigma_B|)^{2w+2})$.


\subsubsection{Summary}

We use the dynamic programming algorithm presented above to solve the projection game instance induced by the graph $G_i = (V, E - S_i)$ for each $i = 1, \dots, h$. We then output the solution that satisfies most edges among all $i$'s. As shown earlier, since we select $h$ to be $1 + \frac{1}{\varepsilon}$, at least one of the solution found on $G_i$'s satisfies at least $\frac{1}{1 + \varepsilon}$ times as many edges as the optimal solution, which means that our algorithm is indeed an $(1 + \varepsilon)$-approximation algorithm.


Finally, since we use the dynamic programming algorithm $h$ times, the running time of the algorithm is $O(h n^2(|\Sigma_A|+|\Sigma_B|)^{2w+2}) = (nk)^{O(w + h)} = (nk)^{O(h)} = (nk)^{O(1/\varepsilon)}$. This gives us the desired PTAS for projection games on planar graphs.

\subsection{PTAS Running Time Lower Bound for Projection Games on Planar Graphs} \label{s:planar}

We devote this subsection to prove the running time lower bound for PTAS for projection games on planar graphs as formalized in a theorem below.
\begin{theorem} \label{thm-PTAS}
If ETH holds, then there is no PTAS for projection games on planar graphs running in time $2^{O(1/\varepsilon)^\gamma}(nk)^{O(1/\varepsilon)^{1-\delta}}$ for any constants $\gamma, \delta > 0$.
\end{theorem}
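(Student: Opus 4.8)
The plan is to reduce from the {\sc Grid Tiling} (aka {\sc Matrix Tiling}) problem of Marx~\cite{Mar07}, for which ETH rules out $f(k)n^{o(k)}$-time algorithms on $k \times k$ grids. In {\sc Grid Tiling} one is given, for each cell $(i,j)$ of a $k \times k$ grid, a set $S_{i,j} \subseteq [N] \times [N]$, and the task is to pick one pair $s_{i,j} \in S_{i,j}$ per cell so that horizontally adjacent picks agree in the first coordinate and vertically adjacent picks agree in the second coordinate. The key point is that the ``constraint graph'' of {\sc Grid Tiling} is itself a grid, hence planar, and this planarity must be preserved through the reduction. First I would place a vertex in $A$ for each grid cell $(i,j)$, with alphabet $\Sigma_A = S_{i,j}$ (so that an assignment to an $A$-vertex is exactly a choice of a pair from its set). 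For each edge of the grid — say between cells $(i,j)$ and $(i,j{+}1)$ — I would introduce a single vertex $b$ in $B$ adjacent to both, with $\Sigma_B = [N]$, and projections that extract the shared coordinate (first coordinate for a horizontal edge, second for a vertical edge). An assignment satisfies the edge $(a,b)$ precisely when $a$'s chosen pair projects to $b$'s label, so all edges at $b$ are simultaneously satisfiable iff the two incident cells agree on the relevant coordinate. Subdividing each grid edge with a $B$-vertex keeps the graph planar (it is the grid with each edge subdivided once), and the instance has size $\mathrm{poly}(k,N)$ with alphabet size $k_{\mathrm{alph}} = N^2$.

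The second step is the gap/parameter bookkeeping that ties $\varepsilon$ to $k$. The constructed projection game has $|E| = \Theta(k^2)$ edges. A YES instance of {\sc Grid Tiling} yields a fully satisfying assignment; in a NO instance I need to argue that any assignment falsifies at least one edge — but since the PTAS is an approximation scheme, I actually need a larger gap. The standard fix (following Marx) is to set $\varepsilon \approx 1/k^2$, so that a $(1+\varepsilon)$-approximation on a satisfiable-or-not instance with $\Theta(k^2)$ edges distinguishes ``all edges satisfiable'' from ``at least one edge unsatisfiable''; equivalently, run the PTAS with this $\varepsilon$ and accept iff it satisfies all edges. Thus $1/\varepsilon = \Theta(k^2)$, i.e. $k = \Theta(\sqrt{1/\varepsilon})$. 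Now suppose a PTAS ran in time $2^{O(1/\varepsilon)^\gamma}(nk_{\mathrm{alph}})^{O(1/\varepsilon)^{1-\delta}}$ for some constants $\gamma,\delta>0$. Substituting $n, k_{\mathrm{alph}} = \mathrm{poly}(k,N)$ and $1/\varepsilon = \Theta(k^2)$, the running time becomes $2^{O(k^{2\gamma})}(kN)^{O(k^{2-2\delta})} = 2^{O(k^{2\gamma})} N^{o(k^2)}$ after absorbing the $\mathrm{poly}(k)$ base into the exponent — wait, I must be careful: $2\gamma$ may exceed $2$, so the $2^{O(k^{2\gamma})}$ term is \emph{not} of the form $f(k)$ with $f$ independent of $N$ only if it is — and indeed it \emph{is} independent of $N$. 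So the whole bound is $f(k) \cdot N^{o(k^2)} = f(k)\cdot (N)^{o((k')^{})}$ where I should match {\sc Grid Tiling}'s parameter: viewing the {\sc Grid Tiling} instance as parameterized by $k$, this is $f(k)\cdot N^{o(k)}$-ish only if $k^{2-2\delta} = o(k^2)$, which holds for $\delta > 0$. Hence we would get an $f(k) N^{o(k)}$ algorithm for {\sc Grid Tiling} (after re-parameterizing the grid dimension appropriately), contradicting ETH via Marx's theorem.

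The step I expect to be the main obstacle is exactly this parameter-matching at the end: Marx's lower bound is stated for $k \times k$ grids ruling out $f(k)\,n^{o(k)}$ time, but the natural reduction above has $\Theta(k^2)$ edges and forces $1/\varepsilon = \Theta(k^2)$, so I need to check that $k^{2-2\delta}$, when expressed back in terms of the {\sc Grid Tiling} parameter, genuinely lands in the $o(k)$ regime — and that the $2^{O((1/\varepsilon)^\gamma)} = 2^{O(k^{2\gamma})}$ factor can be tolerated, i.e. it is a function of $k$ alone and does not involve $N$. One clean way to handle both issues simultaneously is to instead reduce from {\sc Grid Tiling} on a $t \times t$ grid where I choose $t$ as a function of the target $\varepsilon$ with $1/\varepsilon = \Theta(t^2)$, so that $(1/\varepsilon)^{1-\delta} = \Theta(t^{2-2\delta}) = t^{2 - 2\delta}$ and the claimed running time becomes $2^{O(t^{2\gamma})} (tN)^{O(t^{2-2\delta})}$; since $2-2\delta < 2$ one can further refine by tiling an $r \times r$ block structure to amplify $t$ relative to the effective parameter, pushing the exponent of $N$ below any $o(\text{param})$ threshold while keeping the $2^{\mathrm{poly}(t)}$ term FPT in the parameter. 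I would also need to double-check the planarity claim under any such amplification gadget, and verify that ``finding a satisfying assignment'' (which the PTAS yields when the optimum is $|E|$) is the right YES/NO certificate; these are routine but must be stated carefully. I would present the reduction first, then the parameter calculation, flagging the $\gamma$-term and the $\delta$-exponent as the two places where the constants interact with ETH.
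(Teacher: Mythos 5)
Your reduction at the graph level (one $A$-vertex per cell, one $B$-vertex subdividing each grid edge, projections extracting the shared coordinate) matches the paper's construction. But the proof strategy you wrap around it has a gap that you correctly sense but cannot patch with the tools you invoke.

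The crux is which Marx lower bound you reduce from. You invoke the parameterized ETH lower bound for \emph{exact} {\sc Grid Tiling}: no $f(k)\,N^{o(k)}$ algorithm. You then set $\varepsilon \approx 1/k^{2}$ so that a $(1+\varepsilon)$-approximation must find a fully satisfying assignment, and substitute to obtain a running time of roughly $2^{O(k^{2\gamma})}(kN)^{O(k^{2-2\delta})}$. As you note, this only contradicts the $f(k)N^{o(k)}$ bound when $k^{2-2\delta}=o(k)$, i.e.\ when $\delta>1/2$ — but the theorem must hold for \emph{every} constant $\delta>0$. Moreover this cannot be repaired by block-tiling or re-parameterization: $k\times k$ {\sc Grid Tiling} is solvable \emph{exactly} in $N^{O(k)}$ time, so an algorithm running in time $N^{O(k^{2-2\delta})}$ for $\delta<1/2$ simply does not contradict anything once you have coupled $1/\varepsilon$ to $k^{2}$. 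Any amplification that keeps the grid size tied to $1/\varepsilon$ inherits the same ceiling.

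What the paper actually uses (its Lemma 10, citing Marx's Theorem 2.3) is a different statement: an ETH-hardness result for the \emph{gap} version of {\sc Matrix Tiling}, where cells may be left blank ($\bigstar$), and the task is to distinguish optimum $\tilde k^{2}$ from optimum $\tilde k^{2}/(1+\varepsilon')$. The running time lower bound there is parameterized directly by $1/\varepsilon'$, not by $\tilde k$; in particular $\tilde k$ and $\tilde n$ are free to grow, and the hardness is of exactly the form $2^{O(1/\varepsilon')^{\gamma}}(\tilde k\tilde n)^{O(1/\varepsilon')^{1-\delta}}$. The paper's reduction is then \emph{gap-preserving}: if {\sc Matrix Tiling} has optimum $\tilde k^{2}-l$, at most $l/2$ edges can be violated in the projection game, and the two extra $\Sigma_B$ symbols $\blacksquare,\blacklozenge$ exist precisely to handle the case $s\notin S_{i,j}$ cleanly (which your restriction of $\Sigma_A$ to $S_{i,j}$ sidesteps, but at the cost of losing the gap structure). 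The PTAS is run at a precision $\varepsilon=\Theta(\varepsilon')$, \emph{independent of the grid size}, and the Marx lemma directly delivers the contradiction — there is no $1/\varepsilon\approx k^{2}$ coupling and hence no $\delta>1/2$ barrier. So the missing ingredient is not a cleverer amplification but the gap version of the source problem together with the matching PTAS-style Marx lower bound; your own flagged obstacle is real and is not a detail one can postpone.
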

The theorem essentially means that the PTAS in the previous subsection cannot be substantially improved in terms of running time unless ETH is false.

The main idea of our proof is a reduction from {\sc Matrix Tiling} problem introduced by Marx who has successfully used the problem as a basis for proving PTAS running time lower bounds for many problems~\cite{Mar07}. Before we proceed to the proof, we start by reviewing the definition of {\sc Matrix Tiling} and stating a theorem from~\cite{Mar07} that we will use in the proof.

The {\sc Matrix Tiling} problem can be defined as follows.

\textsc{Input:} Positive integers $\tilde{k}, \tilde{n}$ and sets $S_{i, j} \subseteq [\tilde{n}] \times [\tilde{n}]$ for each $i, j = 1, \dots, \tilde{k}$.

\textsc{Goal:} Select $s_{i, j} \in S_{i, j} \cup \{\bigstar\}$ for every $i, j = 1 \dots, \tilde{k}$ such that
\begin{itemize}
  \item for every $i \in [\tilde{k}], j \in [\tilde{k} - 1]$, if $s_{i, j}, s_{i, j + 1} \ne \bigstar$, then $(s_{i, j})_1 =  (s_{i, j + 1})_1$, and,
  \item for every $i \in [\tilde{k} - 1], j \in [\tilde{k}]$, if $s_{i, j}, s_{i + 1, j} \ne \bigstar$, then $(s_{i, j})_2 =  (s_{i + 1, j})_2$
 \end{itemize}
 that maximizes the number of $(i, j) \in [\tilde{k}] \times [\tilde{k}]$ such that $s_{i, j} \ne \bigstar$. \\

 Note here that $[\tilde{k}]$ denotes $\{1, \dots, \tilde{k}\}$ and $(s_{i, j})_1$ represents the value in the first coordinate of $s_{i, j}$. Similar notations in this section are defined in similar manners.

We extract a running time lower bound for approximating {\sc Matrix Tiling} from Theorem 2.3 in~\cite{Mar07} below.
\begin{lemma}[\cite{Mar07}] \label{lem-gridtilingoptlowerbound}
If ETH holds, then one cannot distinguish a {\sc Matrix Tiling} instance of optimum $\tilde{k}^2$ (i.e. none of $s_{i, j}$ is $\bigstar$) from that of optimum $\tilde{k}^2/(1 + \varepsilon')$ in time $2^{O(1/\varepsilon')^{\gamma}}(\tilde{k}\tilde{n})^{O(1/\varepsilon')^{1-\delta}}$ for any constants $\gamma, \delta > 0$. \footnote{In~\cite{Mar07}, the theorem is phrased as a running time lower bound for PTAS but it is clear from the proof that the theorem can be stated as our more specific version too.}
\end{lemma}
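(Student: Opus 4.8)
The plan is to read Lemma~\ref{lem-gridtilingoptlowerbound} off from Theorem~2.3 of~\cite{Mar07}: {\sc Matrix Tiling} as defined above is exactly the ``grid tiling'' problem Marx studies (the wildcard $\bigstar$, the ``equal first coordinate along each row'' / ``equal second coordinate along each column'' constraints, and the objective of maximizing the number of non-wildcard cells all coincide), so the only real work is to recast his lower bound in the ``distinguish optimum $\tilde k^2$ from optimum $\tilde k^2/(1+\varepsilon')$'' form with the claimed running time. I will describe the reduction underlying the bound so that the extraction is transparent.

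The starting point is the standard ETH-based lower bound for $k$-{\sc Clique}: under ETH (via the textbook reduction {\sc 3-SAT}\,$\to$\,{\sc Clique} together with the Sparsification Lemma) there is no $f(k)\,N^{o(k)}$-time algorithm deciding whether an $N$-vertex graph has a clique of size $k$, for any computable $f$. Marx's reduction takes such a graph $H$ and target $k$, sets $\tilde k = k$ and $\tilde n = \mathrm{poly}(N)$, and builds the sets $S_{i,j}\subseteq[\tilde n]^2$ so that the diagonal set $S_{i,i}$ forces the row value $x_i$ and column value $y_i$ of a tiling to encode one common vertex, while the off-diagonal set $S_{i,j}$ encodes adjacency of $x_i$ and $x_j$ (together with the gap amplification needed below). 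One checks that if $H$ has a $k$-clique then there is a full tiling, so the optimum is $\tilde k^2$; and if $H$ has no $k$-clique then every tiling must leave $\Omega(\tilde k)$ cells equal to $\bigstar$, so the optimum is at most $\tilde k^2-\Omega(\tilde k)$. I would import the precise gadget and the gap-amplification step from~\cite{Mar07} verbatim rather than re-derive them.

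Now set $\tilde k=\Theta(1/\varepsilon')$, chosen so that $\tilde k^2-\Omega(\tilde k)<\tilde k^2/(1+\varepsilon')$; then any procedure distinguishing {\sc Matrix Tiling} instances of optimum $\tilde k^2$ from those of optimum $\tilde k^2/(1+\varepsilon')$ decides $\tilde k$-{\sc Clique} on $H$ (after the $\mathrm{poly}(N)$-time reduction). If such a procedure ran in time $2^{O(1/\varepsilon')^{\gamma}}(\tilde k\tilde n)^{O(1/\varepsilon')^{1-\delta}}$ for some constants $\gamma,\delta>0$, then substituting $\tilde k=\Theta(1/\varepsilon')$ and $\tilde n=\mathrm{poly}(N)$ its running time would be
\begin{align*}
2^{O(\tilde k^{\gamma})}\,\bigl(\tilde k\cdot\mathrm{poly}(N)\bigr)^{O(\tilde k^{1-\delta})}
&= \Bigl(2^{O(\tilde k^{\gamma})}\,\tilde k^{O(\tilde k^{1-\delta})}\Bigr)\cdot N^{O(\tilde k^{1-\delta})}\\
&= f(\tilde k)\cdot N^{o(\tilde k)},
\end{align*}
since the bracketed factor depends only on $\tilde k$ and $O(\tilde k^{1-\delta})=o(\tilde k)$ as $\tilde k\to\infty$ (i.e.\ as $\varepsilon'\to 0$, which is exactly the regime relevant to a PTAS lower bound). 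This contradicts the ETH-based lower bound for $k$-{\sc Clique}, which proves the lemma.

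The main obstacle is faithfully extracting Marx's construction: I must confirm that his problem really is {\sc Matrix Tiling} as defined here and, crucially, pin down its \emph{soundness gap} -- that the absence of a $k$-clique leaves an \emph{additive} $\Omega(\tilde k)$ deficit, not merely $O(1)$. This additive-$\Omega(\tilde k)$ deficit is precisely what becomes the multiplicative $(1+\varepsilon')$ gap once $\tilde k=\Theta(1/\varepsilon')$; with only an $O(1)$ deficit one would be forced to take $\tilde k=\Theta(1/\sqrt{\varepsilon'})$, and then the exponent $(1/\varepsilon')^{1-\delta}$ would rule out only $\delta>1/2$ rather than all $\delta>0$. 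Verifying the gap amplification in~\cite{Mar07}, and the gadget that encodes the clique (equivalently, the sparsified {\sc 3-SAT}) constraints inside $[\tilde n]^2$, is the part I would work through carefully; everything else is the parameter bookkeeping above.
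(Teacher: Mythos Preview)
The paper does not give its own proof of this lemma: it is imported as a black box from Marx~\cite{Mar07}, with only the footnote remark that Marx phrases the result as a PTAS running-time lower bound but that the ``distinguish optimum $\tilde k^2$ from $\tilde k^2/(1+\varepsilon')$'' version can be read off from his proof. Your proposal therefore goes well beyond what the paper does---you are sketching the actual extraction from~\cite{Mar07} (the ETH lower bound for $k$-\textsc{Clique}, the encoding into grid tiling with $\tilde k=k$ and $\tilde n=\mathrm{poly}(N)$, the choice $\tilde k=\Theta(1/\varepsilon')$, and the running-time arithmetic showing $2^{O(1/\varepsilon')^\gamma}(\tilde k\tilde n)^{O(1/\varepsilon')^{1-\delta}}$ collapses to $f(\tilde k)\cdot N^{o(\tilde k)}$). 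That outline is faithful to Marx's argument, and your identification of the crux---that one needs the no-clique case to force an additive $\Omega(\tilde k)$ deficit so that $\tilde k=\Theta(1/\varepsilon')$ suffices---is exactly the point one must check when unpacking~\cite{Mar07}; the paper simply trusts Marx for this and does not rederive it.
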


Now that we have stated the preliminaries, we are ready to describe the reduction from {\sc Matrix Tiling} to the projection games problem:
\begin{lemma} \label{lem-gridprojred}
	There is a polynomial-time reduction from a {\sc Matrix Tiling} instance to a {\sc Label Cover} instance with alphabet size $k =O(\tilde{n}^2)$ on planar graph with $n = O(\tilde{k}^2)$ vertices and $|E| = 4\tilde{k}^2 - 4\tilde{k}$ edges such that
	\begin{itemize}
	\item if the {\sc Matrix Tiling} instance is of optimum $\tilde{k}^2$, then the projection game instance is satisfiable, and,
	\item for any $0 \leq l \leq \tilde{k}^2 - 1$, if the {\sc Matrix Tiling} instance is of optimum $\tilde{k}^2 - l$, then the projection game is of optimum at least $|E| - l/2$ (i.e. at most $l/2$ edges are not satisfied in the optimal solution).
	\end{itemize}
\end{lemma}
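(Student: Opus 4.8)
Identify the $\tilde k\times\tilde k$ array of cells with its rows and columns. Put a vertex $a_{i,j}\in A$ for each cell $(i,j)$, and a degree‑two vertex of $B$ on each grid adjacency: a vertex $b^h_{i,j}$ joined to $a_{i,j}$ and $a_{i,j+1}$ for $i\in[\tilde k],\,j\in[\tilde k-1]$, and a vertex $b^v_{i,j}$ joined to $a_{i,j}$ and $a_{i+1,j}$ for $i\in[\tilde k-1],\,j\in[\tilde k]$. The resulting bipartite graph is a subdivision of the $\tilde k\times\tilde k$ grid graph, hence planar, and it has $n=3\tilde k^2-2\tilde k=O(\tilde k^2)$ vertices and exactly $|E|=2(2\tilde k^2-2\tilde k)=4\tilde k^2-4\tilde k$ edges. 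Let the alphabet of $a_{i,j}$ be (essentially) $S_{i,j}$ plus a constant number of auxiliary ``defect'' symbols, and let $\Sigma_B=[\tilde n]$ plus one defect symbol, so $k=O(\tilde n^2)$. The projection on an edge $(a_{i,j},b^h_{\cdot})$ reads the first coordinate of the cell value, and on an edge $(a_{i,j},b^v_{\cdot})$ reads the second coordinate (defect symbols mapping to the defect $B$-symbol). Thus the edge $(a_{i,j},b^h_{i,j})$ is satisfied precisely when the values chosen for $a_{i,j}$ and $a_{i,j+1}$ agree in the first coordinate (or are both defect), which is exactly the horizontal {\sc Matrix Tiling} constraint, and symmetrically for vertical edges.

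\textbf{First bullet.} If the {\sc Matrix Tiling} optimum is $\tilde k^2$, fix a full tiling $\{s_{i,j}\}$, set $\varphi_A(a_{i,j})=s_{i,j}$, and set each $B$-vertex to the coordinate shared by its two cells (which exists by the tiling constraints). Every edge is then satisfied, so the projection game is satisfiable.

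\textbf{Second bullet.} Start from an optimal {\sc Matrix Tiling} solution with exactly $l$ star cells. Put $\varphi_A(a_{i,j})=s_{i,j}$ on the $\tilde k^2-l$ non‑star cells and set every $B$-vertex both of whose cells are non‑star to their common coordinate; all of those edges are satisfied, so every eventually unsatisfied edge lies on a $B$-vertex touching a star cell. It remains to pick values for the at most $l$ star cells and for the $B$-vertices incident to them so that at most $l/2$ edges stay unsatisfied. For each such $B$-vertex one chooses its value jointly with the value of an incident star cell, using the two coordinate directions (and, where helpful, the defect symbols) to absorb the ``defect'' of a star cell into a single grid adjacency rather than one adjacency per direction; an edge is left unsatisfied only when forced. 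The target $|E|-l/2$ is then obtained by a charging argument: each unsatisfied edge is charged to a nearby star cell so that a star cell which costs anything ``shares'' with the star cell at the other end of the single grid adjacency it pays through, making the total charge at most $l/2$.

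\textbf{Where the difficulty is.} The construction and the first bullet are routine; the whole weight of the lemma is in the second bullet, and specifically in the sharp constant $l/2$. A careless assignment around a star cell can be forced to leave a bounded but nonzero number of edges unsatisfied for \emph{each} star (one for every coordinate direction in which its two non‑star neighbours disagree), which only yields $|E|-O(l)$. Squeezing this down to $|E|-l/2$ is the main obstacle: it requires exploiting the planar grid structure to route the defect of each star into one shared adjacency, and in particular a global argument — a matching/charging over the $B$-vertices on the boundary of the star set — showing that the broken adjacencies can always be chosen so that, on average, each star cell is responsible for at most half of an unsatisfied edge.
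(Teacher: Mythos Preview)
Your construction and the first bullet are fine and essentially match the paper's. The gap is in the second bullet: you are proving the wrong direction.

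Read the lemma together with the proof and with how it is used in Theorem~\ref{thm-PTAS}. What is actually needed (and what the paper proves under the heading ``by contrapositive'') is the \emph{upper bound} on the Label Cover optimum: if the {\sc Matrix Tiling} optimum is at most $\tilde{k}^2-l$, then the projection-game optimum is at most $|E|-l/2$. Equivalently, any Label Cover assignment with fewer than $l/2$ unsatisfied edges yields a {\sc Matrix Tiling} solution with fewer than $l$ stars. The paper's argument for this direction is a one-line charging in the \emph{opposite} sense to yours: given an assignment $(\varphi_A,\varphi_B)$, set $s_{i,j}=\varphi_A(a_{i,j})$ if every edge incident to $N(a_{i,j})$ is satisfied, and $s_{i,j}=\bigstar$ otherwise; this is a feasible {\sc Matrix Tiling} solution, and each unsatisfied edge lies in $N(a_{i,j})$ for at most two cells $(i,j)$, so (\#stars) $\le 2\cdot$(\#unsatisfied edges). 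That is where the factor $1/2$ comes from, and there is no delicate matching or routing involved. The ``at least'' in the lemma statement is a slip; the parenthetical and the application in Theorem~\ref{thm-PTAS} both use the ``at most'' direction.

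By contrast, you try to start from a {\sc Matrix Tiling} solution with $l$ stars and build a Label Cover assignment with at most $l/2$ unsatisfied edges. Your own ``Where the difficulty is'' paragraph already signals that you could not close this: the charging you sketch is not carried out, and in fact the sharp bound $l/2$ in that direction is not attainable in general. For instance, with $l=1$ your claim would force the projection game to be satisfiable, but a single isolated star cell (say with $S_{i,j}=\emptyset$) cannot be made consistent with all four neighbouring $B$-vertices simultaneously: in the paper's encoding, an out-of-$S_{i,j}$ label projects to $\blacksquare$ on the two ``lower'' edges and to $\blacklozenge$ on the two ``higher'' edges, while the adjacent non-star cells project to elements of $[\tilde n]$, so at least one edge must fail. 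So the direction you attempted is both unnecessary and, with constant $1/2$, false; once you reverse the direction, the second bullet becomes the easy two-line counting argument above.
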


\begin{proof}
The reduction proceeds as follows:
\begin{itemize}
  \item Let $A$ be a set containing $\tilde{k}^2$ vertices; call the vertices $a_{i, j}$ for all $i, j \in [\tilde{k}]$.
  \item Let $B$ be a set containing $2\tilde{k}^2 - 2\tilde{k}$ vertices; call the vertices $b_{i + 0.5, j}$ for all $i \in [\tilde{k} - 1], j = [\tilde{k}]$ and $b_{i, j + 0.5}$ for all $i \in [\tilde{k}], j = [\tilde{k} - 1]$.
  \item Let $E$ be $\{(a_{x, y}, b_{z, t}) \in A \times B \mid |x - z| + |y - t| = 0.5\}$.
  \item Let $\Sigma_A$ be $[\tilde{n}] \times [\tilde{n}]$.
  \item Let $\Sigma_B$ be $[\tilde{n}] \cup \{\blacksquare, \blacklozenge\}$.
  \item The projections $\pi_e$'s where $e = (a_{x, y}, b_{z, t})$ can be defined as follows.
  \begin{align*}
    \pi_{e}(s) =
    \begin{cases}
      s_1 &\text{if } s \in S_{x, y} \text{ and } x = z, \\
      s_2 &\text{if } s \in S_{x, y} \text{ and } y = t, \\
      \blacksquare &\text{if } s \notin S_{x, y}, x \geq z \text{ and } y \geq t,\\
      \blacklozenge &\text{if } s \notin S_{x, y}, x \leq z \text{ and } y \leq t,\\
    \end{cases}
  \end{align*}
  for all $(a_{x, y}, b_{z, t}) \in E$ and for all $s \in \Sigma_A$.
\end{itemize}

For an illustration of the reduction, please refer to Figure~\ref{fig:gridtiling-reduction} below.

\begin{figure}[h]
  \begin{center}
    \begin{tabular}{ c | c }
      \centering \includegraphics[scale=0.18]{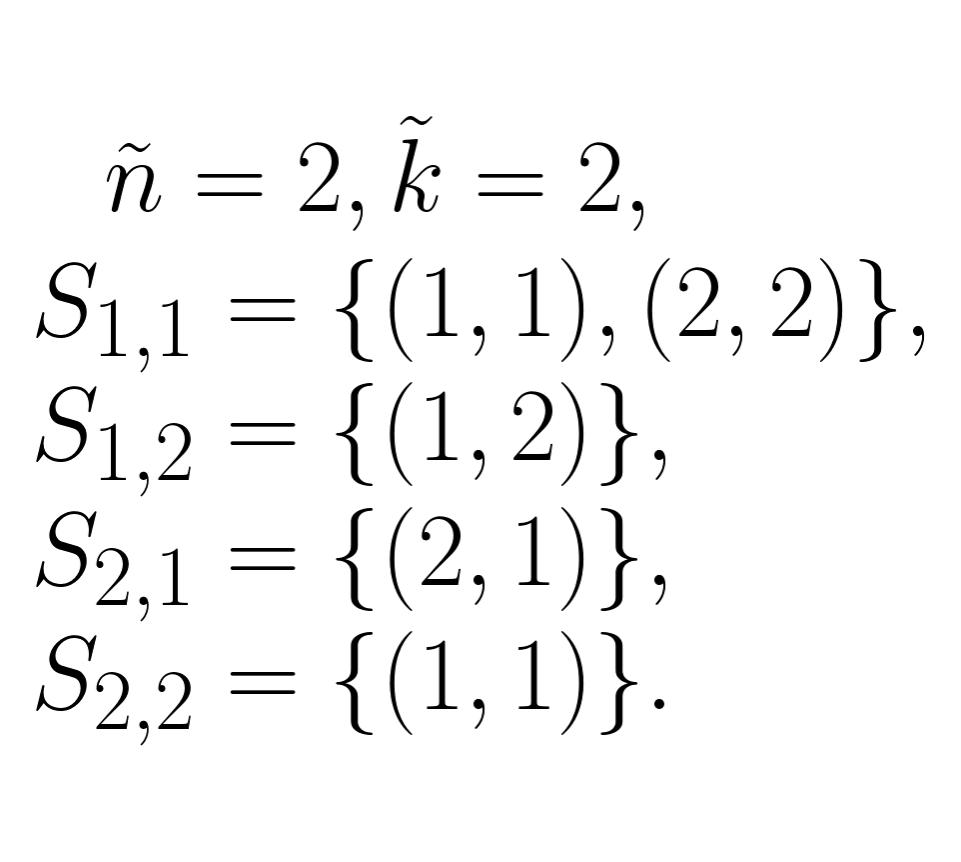} &  \includegraphics[scale=0.18]{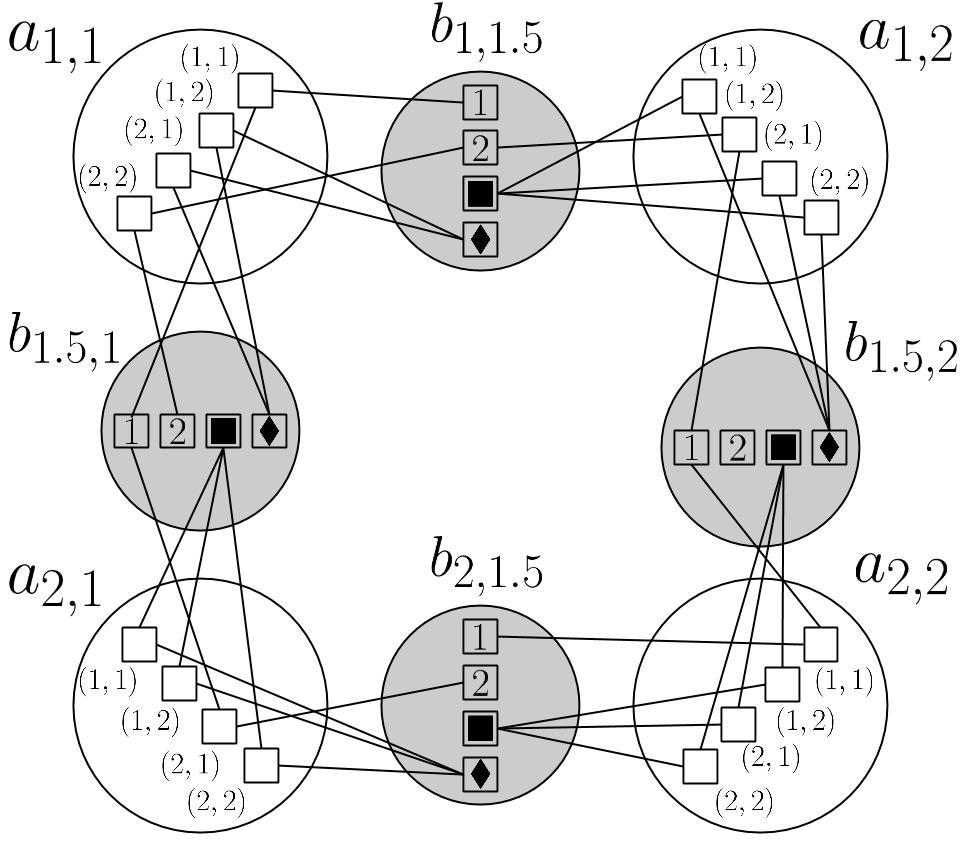} \\
      \\
    \end{tabular}
  \end{center}
  \caption{
    \textbf{An Example of the Reduction from {\sc Matrix Tiling} to {\sc Label Cover}.} The {\sc Matrix Tiling} instance is shown on the left and the projection game is on the right. For the projection game, circles represent vertices of the graph whereas squares represent alphabet symbols. Shaded circles are vertices in $B$ and white circles are those in $A$. Each line between two squares means that the corresponding projection maps the alphabet symbol from a vertex in $A$ to the alphabet symbol from a vertex in $B$.
  }
  \label{fig:gridtiling-reduction}
\end{figure}

It is obvious that the reduction runs in polynomial time, and that $k = O(\tilde{n}^2)$, $n = 3\tilde{k}^2 - 2\tilde{k} = O(\tilde{k}^2)$ and $|E| = 4\tilde{k}^2 - 4\tilde{k}$. Moreover, observe that $(A \cup B, E)$ is planar since, if we place $a_{x, y}$ on the plane at $(x, y)$ for all $a_{x, y} \in A$ and place $b_{z, t}$ at $(z, t)$ for all $b_{z, t} \in B$, then no edges intersect each other.

Now, we will prove the first property. Suppose that the {\sc Matrix Tiling} instance has optimum $\tilde{k}^2$. In other words, there exists $s_{i, j} \in S_{i, j}$ for all $i, j \in [\tilde{k}]$ such that, for each $i$, $(s_{i, j})_1$'s are equal for all $j  \in [\tilde{k}]$, and, for each $j$, $(s_{i, j})_2$'s are equal for all $i \in [\tilde{k}]$.

By simply picking $\varphi_A(a_{i, j}) = s_{i, j}$ for every $a_{i, j} \in A$, $\varphi_B(b_{i + 0.5, j}) = (s_{1, j})_2$ for every $b_{i + 0.5, j} \in B$ and, $\varphi_B(b_{i, j + 0.5}) = (s_{i, 1})_1$ for every $b_{i, j + 0.5} \in B$, we can conclude that the projection game is satisfiable.

Next, we will show the second property by contrapositive. Suppose that there is an assignment $\varphi_A: A \to \Sigma_A$ and $\varphi_A: B \to \Sigma_B$ that satisfies more than $|E| - l/2$ edges in the projection game. In other words, less than $l/2$ edges are not satisfied.

We create a solution to {\sc Matrix Tiling} instance as follows:
\begin{align*}
  s_{i, j} =
  \begin{cases}
    \varphi_A(a_{i, j}) &\text{if all edges with one endpoint in } N(a_{i, j}) \text{ are satisfied,} \\
    \bigstar &\text{otherwise.}
  \end{cases}
\end{align*}
for every $i, j \in [\tilde{k}]$. Note here that $N(a_{i, j})$ is a set of neighbors of $a_{i, j}$.

To see that this is a solution to the {\sc Matrix Tiling} instance, consider any $i, j$ such that $s_{i, j} \ne \bigstar$. If $i \leq \tilde{k} - 1$, from the definition of $s_{i, j}$, the edges $(a_{i, j}, b_{i + 0.5, j})$ and $(a_{i + 1, j}, b_{i + 0.5, j})$ are satisfied. This implies that $\varphi_A(a_{i, j}) \in S_{i, j}$ and $(\varphi_A(a_{i, j}))_2 = (\varphi_A(a_{i + 1, j}))_2$. In other words, $s_{i, j}$ and $s_{i + 1, j}$ will not contradict each other in {\sc Matrix Tiling}. Similarly, $s_{i, j}$ does not contradict with $s_{i - 1, j}, s_{i, j + 1}, s_{i, j - 1}$. Thus, the defined solution is a valid solution for {\sc Matrix Tiling}.

Next, since each unsatisfying edge can have an endpoint in $N(a_{i, j})$ for at most two pairs of $i, j \in [\tilde{k}]$, the number of $\bigstar$'s in a solution to {\sc Matrix Tiling} is at most $2$ times the number of unsatisfied edges in the projection game. Thus, the solution to {\sc Matrix Tiling} has less than $l$ $\bigstar$'s. In other words, the optimum of {\sc Matrix Tiling} instance is more than $\tilde{k}^2 - l$, which completes our proof.
\end{proof}

Finally, we will use Lemma~\ref{lem-gridtilingoptlowerbound} together with the reduction to prove Theorem~\ref{thm-PTAS}.
\begin{proof}[Proof of Theorem~\ref{thm-PTAS}]
  Suppose that there is a PTAS for planar projection games with running time $2^{O(1/\varepsilon)^\gamma}(nk)^{O(1/\varepsilon)^{1-\delta}}$ for some constants $\gamma, \delta > 0$.

  We will now use Lemma~\ref{lem-gridtilingoptlowerbound} to show that ETH fails. Given a {\sc Matrix Tiling} instance and $\varepsilon$. We use the reduction from Lemma~\ref{lem-gridprojred} to create a projection game instance with $k = O(\tilde{n}^2), n = O(\tilde{k}^2)$ and $|E| = 4\tilde{k}^2 - 4\tilde{k}$.

  From the first property of the reduction, if the {\sc Matrix Tiling} instance is of optimum $\tilde{k}^2$, then the projection game is satisfiable, i.e., is of optimum $4\tilde{k}^2 - 4\tilde{k}$. On the other hand, for any $\varepsilon' > 0$, from the second property, if the {\sc Matrix Tiling} instance is of optimum at most $\tilde{k}^2/(1 + \varepsilon') = \tilde{k}^2 - \varepsilon'\tilde{k}^2/(1 + \varepsilon')$, then the projection game is of optimum at most $4\tilde{k}^2 - 4\tilde{k} - \varepsilon'\tilde{k}^2/(2(1 + \varepsilon'))$.

  Pick $\varepsilon$ to be $\varepsilon'/20$. From our choice of $\varepsilon$, if $\varepsilon \leq 1$, then we have
  \begin{align*}
  	(4\tilde{k}^2 - 4\tilde{k})/(1 + \varepsilon) &= 4\tilde{k}^2 - 4\tilde{k} - \varepsilon (4\tilde{k}^2 - 4\tilde{k})/(1 + \varepsilon) \\
  	(\text{From our choice of } \varepsilon) &= 4\tilde{k}^2 - 4\tilde{k} - \varepsilon' (\tilde{k}^2 - \tilde{k})/(5(1 + \varepsilon'/20)) \\
  	&\geq 4\tilde{k}^2 - 4\tilde{k} - \varepsilon' \tilde{k}^2/(5(1 + \varepsilon'/20)) \\
  	(\varepsilon \leq 1) &= 4\tilde{k}^2 - 4\tilde{k} - \varepsilon' \tilde{k}^2/(2(1 + \varepsilon')).
  \end{align*}

  In other words, if we run the PTAS with the selected $\varepsilon$ on the projection game, then we are able to distinguish the game with optimum $4\tilde{k}^2 - 4\tilde{k}$ from that with optimum at most $4\tilde{k}^2 - 4\tilde{k} - \varepsilon'\tilde{k}^2/(2(1 + \varepsilon'))$. Hence, we can also distinguish a {\sc Matrix Tiling} instance of optimum $\tilde{k}^2$ from that of optimum $\tilde{k}^2/(1 + \varepsilon')$. Moreover, from our assumption, the running time of the PTAS is
  \begin{align*}
  	&2^{O(1/\varepsilon)^\gamma}(nk)^{O(1/\varepsilon)^{1-\delta}} \\
  	(\text{Since } k = O(\tilde{n}^2) \text{ and } n = O(\tilde{k}^2)) &= 2^{O(1/\varepsilon)^\gamma}(\tilde{n}\tilde{k})^{O(1/\varepsilon)^{1-\delta}} \\
  	(\text{Since } \varepsilon = \varepsilon'/20) &= 2^{O(1/\varepsilon')^\gamma}(\tilde{n}\tilde{k})^{O(1/\varepsilon')^{1-\delta}}.
  \end{align*}
  Thus, from Lemma~\ref{lem-gridtilingoptlowerbound}, ETH fails, which concludes the proof for Theorem~\ref{thm-PTAS}.
\end{proof}

\bibliographystyle{acm}

\bibliography{bi}

\newpage

{\huge \bf Appendix}

\appendix

\section{Polynomial-time Approximation Algorithms for Projection Games for Nonuniform Preimage Sizes}

In this section, we will describe a polynomial time $O((n_A|\Sigma_A|)^\frac{1}{4})$-approximation algorithm for satisfiable projection games, including those with nonuniform preimage sizes.

It is not hard to see that, if the $p_e$'s are not all equal, then ``know your neighbors' neighbors'' algorithm from Subsection~\ref{subsec-knowyourneighborsneighbors} does not necessarily end up with at least $h_{max}/\overline{p}$ fraction of satisfied edges anymore. The reason is that, for a vertex $a$ with large $|N_2(a)|$ and any assignment $\sigma_a \in \Sigma_A$ to the vertex, the number of preimages in $\pi_e^{-1}(\pi_{(a, b)}(\sigma_a))$ might be large for each neighbor $b$ of $a$ and each edge $e$ that has an endpoint $b$. We solve this issue, by instead of using all the edges for the algorithm, only using ``good'' edges whose preimage sizes for the optimal assignments are at most a particular value. However, this definition of ``good'' does not only depend on an edge but also on the assignment to the edge's endpoint in $B$, which means that we need to have some extra definitions to address the generalization of $h$ and $p$ as follows. \\

\begin{longtable}{l l}
$\sigma_b^{max}$ \hspace{8mm} & for each $b \in B$, denotes $\sigma_b \in \Sigma_B$ that maximizes the value of \\ & $\sum_{a \in N(b)} |\pi^{-1}_{(a, b)}(\sigma_b)|$. \\
$p^{max}_e$ & for each edge $e = (a, b)$, denotes $\left|\pi^{-1}_e(\sigma_b^{max})\right|$, the size of the \\ & preimage of $e$ if $b$ is assigned $\sigma_b^{max}$. \\
$\overline{p}^{max}$ & denotes the average of $p^{max}_e$ over all $e \in E$, i.e. $\frac{1}{|E|}\sum_{e \in E} p^{max}_e$. \\ & We will use $2\overline{p}^{max}$ as a threshold for determining ``good'' edges \\ & as we shall see below. \\
$E(S)$ & for each set of vertices $S$, denotes the set of edges with at least \\ & one endpoint in $S$, i.e. $\{(u, v) \in E \mid u \in S \text{ or } v \in S\}$. \\
$E_N^{max}$ & denotes the maximum number of edges coming out of $N(a)$ for \\ & all $a \in A$, i.e., $max_{a \in A}\{|E(N(a))|\}$. \\
$\Sigma^*_A(a)$ & for each $a \in A$, denotes the set of all assignments $\sigma_a$ to $a$ that, for \\ & every $b \in B$, there exists an assignment $\sigma_b$ such that, if $a$ is assigned $\sigma_a$, \\ & $b$ is assigned $\sigma_b$ and all $a$'s neighbors are assigned according to $a$, then \\ & there are still possible assignments left for all vertices in $N_2(a) \cap N(b)$, \\ & i.e., $\{\sigma_a \in \Sigma_A \mid \text{ for each } b \in B, \text{ there is } \sigma_b \in \Sigma_B  \text{ such that, for all } $ \\ & $ a' \in N_2(a) \cap N(b)\text{, } \left(\bigcap_{b' \in N(a') \cap N(a)} \pi^{-1}_{(a', b')}(\pi_{(a, b')}(\sigma_a))\right) \cap \pi^{-1}_{(a', b)}(\sigma_b) \ne \emptyset\}.$ \\ & Note that $\sigma^{OPT}_a \in \Sigma^*_A(a)$. In other words, if we replace $\Sigma_A$ with $\Sigma^*_A(a)$ \\ & for each $a \in A$, then the resulting instance is still satisfiable.  \\
$N^*(a, \sigma_a)$ & for each $a \in A$ and $\sigma_a \in \Sigma^*_A(a)$, denotes $\{b \in N(a) \mid |\pi^{-1}_{(a', b)}(\pi_{(a, b)}(\sigma_a))|$ \\ & $\leq 2\overline{p}^{max} \text{ for some } a' \in N(b)\}$. Provided that we assign $\sigma_a$ to $a$, this set \\ & contains all the neighbors of $a$ with at least one good edge as we \\ & discussed above. Note that $\pi_{(a, b)}(\sigma_a)$ is the assignment to $b$ \\ & corresponding to the assignment of $a$. \\
$N^*_2(a, \sigma_a)$ & for each $a \in A$ and $\sigma_a \in \Sigma^*_A(a)$, denotes all the neighbors of neighbors \\ & of $a$ with at least one good edge with another endpoint in $N(a)$ when $a$ \\ & is assigned $\sigma_a$, i.e., $\bigcup_{b \in N^*(a, \sigma_a)} \{a' \in N(b) \mid |\pi^{-1}_{(a', b)}(\pi_{(a, b)}(\sigma_a))| \leq 2\overline{p}^{max}\}$. \\
$h^*(a, \sigma_a)$ & for each $a \in A$ and $\sigma_a \in \Sigma^*_A(a)$, denotes $|E(N^*_2(a, \sigma_a))|$. In other words, \\ & $h^*(a, \sigma_a)$ represents how well $N^*_2(a, \sigma_a)$ spans the graph $G$. \\
$E^*(a, \sigma_a)$ & for each $a \in A$ and $\sigma_a \in \Sigma^*_A(a)$, denotes $\{(a', b) \in E \mid b \in N^*(a, \sigma_a),$ \\ & $a' \in N^*_2(a, \sigma_a) \text{ and } |\pi^{-1}_{(a', b)}(\pi_{(a, b)}(\sigma_a))| \leq 2\overline{p}^{max}\}$. When $a$ is assigned $\sigma_a$, \\ & this is the set of all good edges with one endpoint in $N(a)$. \\
$h^*_{max}$ & denotes $\max_{a \in A, \sigma_a \in \Sigma^*_A(a)} h^*(a, \sigma_a)$. \\
$E'$ & denotes the set of all edges $e \in E$ such that $p_e \leq 2\overline{p}^{max}$, i.e., \\ & $E' = \{e \in E \mid p_e \leq 2\overline{p}^{max}\}$. Recall that $p_e$ is defined earlier as $|\pi^{-1}(\sigma_b^{OPT})|$. \\ & Since $E'$ depends on $\sigma_b^{OPT}$, $E'$ will not be used in the algorithms below but \\ & only used in the analyses. Same goes for all the notations defined below. \\
$G'$ & denotes a subgraph of $G$ with its edges being $E'$. \\
$E'(S)$ & for each set of vertices $S$, denotes the set of all edges in $E'$ with \\ & at least one endpoint in $S$, i.e., $\{(u, v) \in E' \mid u \in S \text{ or } v \in S\}$. \\
$E'_S$ & for each set of vertices $S$, denotes the set of edges with both \\ & endpoints in $S$, i.e. $E'_S = \{(a, b) \in E' \mid a \in S \text{ and } b \in S\}$. \\
$N'(u)$ & for each vertex $u$, denotes the set of vertices that are neighbors of \\ & $u$ in the graph $G'$. \\
$N'(U)$ & for each set of vertices $U$, denotes the set of vertices that are \\ & neighbors of at least one vertex in $U$ in the graph $G'$. \\
$N'_2(u)$ & for each vertex $u$, denotes $N'(N'(u))$, the set of neighbors of \\ & neighbors of $u$ in $G'$. \\
\end{longtable}

From the definitions above, we can derive two very useful observations as stated below.

\begin{observation} \label{obs:edges}
  $|E'| \geq \frac{|E|}{2}$
\end{observation}

\begin{proof}
  Suppose for the sake of contradiction that $|E'| < \frac{|E|}{2}$. From the definition of $E'$, this means that, for more than $\frac{|E|}{2}$ edges $e$, we have $p_e > 2\overline{p}^{max}$. As a result, we can conclude that
  \begin{align*}
    |E|\overline{p}^{max} &< \sum_{e \in E} p_e \\
    &= \sum_{b \in B} \sum_{a \in N(b)} p_{(a, b)} \\
    &= \sum_{b \in B} \sum_{a \in N(b)} |\pi^{-1}_{(a, b)}(\sigma_b^{OPT})| \\
    &\leq \sum_{b \in B} \sum_{a \in N(b)} |\pi^{-1}_{(a, b)}(\sigma_b^{max})| \\
    &= |E|\overline{p}^{max}.
  \end{align*}
  This is a contradiction. Hence, $|E'| \geq \frac{|E|}{2}$.
\end{proof}

\begin{observation} \label{obs:optimalgprime}
  If $\sigma_a = \sigma_a^{OPT}$, then $N^*(a, \sigma_a) = N'(a)$, $N^*_2(a, \sigma_a) = N'_2(a)$ and $E^*(a, \sigma_a) = E'(N'(a))$.
\end{observation}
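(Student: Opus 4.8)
The plan is to reduce everything to one elementary fact and then chase definitions. The fact is that, since the optimal assignment $\{\sigma_v^{OPT}\}$ satisfies \emph{every} edge, it in particular satisfies every edge incident to $a$, so $\pi_{(a,b)}(\sigma_a^{OPT}) = \sigma_b^{OPT}$ for all $b \in N(a)$. Hence, for every $b \in N(a)$ and every $a' \in N(b)$, the quantity $\bigl|\pi^{-1}_{(a',b)}(\pi_{(a,b)}(\sigma_a^{OPT}))\bigr|$ occurring in the definitions of $N^*$, $N^*_2$ and $E^*$ is exactly $\bigl|\pi^{-1}_{(a',b)}(\sigma_b^{OPT})\bigr| = p_{(a',b)}$, so the ``good edge'' predicate ``$\le 2\overline{p}^{max}$'' becomes precisely the predicate ``$(a',b) \in E'$''. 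This is the only place where the hypothesis $\sigma_a = \sigma_a^{OPT}$ (rather than an arbitrary assignment) is used.

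With this predicate reduction in hand I would prove the three equalities in order, each using the previous one. First, $N^*(a, \sigma_a^{OPT}) = N'(a)$: after the substitution, $N^*(a,\sigma_a^{OPT}) = \{\, b \in N(a) : (a',b) \in E' \text{ for some } a' \in N(b)\,\}$ while $N'(a) = \{\, b \in N(a) : (a,b) \in E'\,\}$; the inclusion $N'(a) \subseteq N^*(a, \sigma_a^{OPT})$ is witnessed by $a' = a$, and the reverse inclusion is the one point needing care (see below). Second, $N^*_2(a, \sigma_a^{OPT}) = N'_2(a)$: unfolding, $N^*_2(a,\sigma_a^{OPT}) = \bigcup_{b \in N^*(a,\sigma_a^{OPT})} \{\, a' \in N(b) : (a',b) \in E'\,\} = \bigcup_{b \in N'(a)} N'(b) = N'(N'(a)) = N'_2(a)$, where the middle equality uses the first part and the predicate reduction. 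Third, $E^*(a, \sigma_a^{OPT}) = E'(N'(a))$: since $N'(a) \subseteq B$, we have $E'(N'(a)) = \{\,(a',b) \in E' : b \in N'(a)\,\}$, and unfolding $E^*$ and substituting gives $E^*(a,\sigma_a^{OPT}) = \{\,(a',b) \in E' : b \in N^*(a,\sigma_a^{OPT}),\ a' \in N^*_2(a,\sigma_a^{OPT})\,\}$; plugging in the first two equalities, together with the remark that $(a',b) \in E'$ with $b \in N'(a)$ already forces $a' \in N'(b) \subseteq N'_2(a)$, yields the claim.

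The main obstacle is the reverse inclusion $N^*(a,\sigma_a^{OPT}) \subseteq N'(a)$: one has to check that the existential quantifier over neighbours $a'$ of $b$ in the definition of $N^*$ does not enlarge the set beyond $N'(a)$, i.e., that whenever $b \in N(a)$ has some good incident edge then the edge $(a,b)$ itself is good, so that the natural witness for ``$b \in N^*$'' may always be taken to be $(a,b)$. Everything else is routine bookkeeping: once $N^*(a,\sigma_a^{OPT}) = N'(a)$ is settled, the equalities for $N^*_2$ and $E^*$ come out by peeling the nested definitions in the order above, with no further appeal to optimality beyond the predicate reduction of the first paragraph.
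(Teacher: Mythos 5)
Your reduction of the ``good edge'' predicate to ``$(a',b) \in E'$'' upon substituting $\sigma_a = \sigma_a^{OPT}$ is correct, and it is exactly the substitution the paper performs implicitly; you also correctly isolate it as the only place the hypothesis is used. The problem is what comes next: you name the reverse inclusion $N^*(a, \sigma_a^{OPT}) \subseteq N'(a)$ as ``the main obstacle,'' say ``one has to check'' it, and never check it. And I do not see how it could be checked. After your substitution, $N^*(a, \sigma_a^{OPT})$ collects every $b \in N(a)$ incident to \emph{some} edge of $E'$ --- the definition of $N^*(a,\sigma_a)$ carries an existential quantifier over $a' \in N(b)$ --- whereas $N'(a)$ collects only those $b$ for which the specific edge $(a,b)$ itself lies in $E'$. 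In the nonuniform setting nothing relates $p_{(a,b)}$ to $p_{(a',b)}$ for $a' \neq a$, so a vertex $b \in N(a)$ can have a good edge to some other neighbour $a'$ while $(a,b)$ is bad. Your argument rests on a step that is both unproved and, on the literal reading of the definitions, false.

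It is worth noting that the paper's own proof of this observation is a one-sentence assertion that the two definitions ``become the same'' after plugging in $\sigma_a^{OPT}$, which is precisely the point in dispute: they become the same only if the quantifier in $N^*(a, \sigma_a)$ is read as fixing $a' = a$. Under that reading, the three claimed equalities really are syntactic unfoldings, the $E^*$ case following once one notes that $(a',b)\in E'$ with $b \in N'(a)$ forces $a' \in N'_2(a)$, and no further bookkeeping is needed. Your draft stops at identifying the obstacle without resolving it, so it is not a complete argument. Either supply the missing reverse inclusion (which I do not believe follows from the stated definitions) or state explicitly that you are reading the definition of $N^*$ with $a' = a$, in which case the whole observation is immediate.
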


This observation is obvious since, when pluging in $\sigma_a^{OPT}$, each pair of definitions of $N^*(a, \sigma_a)$ and $N'(a)$, $N^*_2(a, \sigma_a)$ and $N'_2(a)$, and $E^*(a, \sigma_a)$ and $E'(N'(a))$ becomes the same.

Note also that from its definition, $G'$ is the graph with good edges when the optimal assignments are assigned to $B$. Unfortunately, we do not know the optimal assignments to $B$ and, thus, do not know how to find $G'$ in polynomial time. However, directly from the definitions above,  $\sigma_b^{max}, p_e^{max}, \overline{p}^{max}, E_N^{max}, \Sigma^*_A(a),$ $N^*(a, \sigma_a), N^*_2(a, \sigma_a)$, $h^*(a, \sigma_a)$ and $h^*_{max}$ can be computed in polynomial time. These notations will be used in the upcoming algorithms. Other defined notations we do not know how to compute in polynomial time and will only be used in the analyses.

For the nonuniform preimage sizes case, we use five algorithms as opposed to four algorithms used in uniform case. We will proceed to describe those five algorithms. In the end, by using the best of these five, we are able to produce a polynomial-time $O\left((n_A|\Sigma_A|)^{1/4}\right)$-approximation algorithm as desired.

We now list the algorithms along with their rough descriptions; detailed description and analysis of each algorithm will follow later on:
\begin{enumerate}
\item {\bf Satisfy one neighbor -- $|E|/n_B$-approximation.} Assign each vertex in $A$ an arbitrary assignment. Each vertex in $B$ is then assigned to satisfy one of its neighboring edges. This algorithm satisfies at least $n_B$ edges.

\item {\bf Greedy assignment -- ${|\Sigma_A|}/{\overline{p}^{max}}$-approximation.} Each vertex in $B$ is assigned an assignment $\sigma_b\in\Sigma_B$ that has the largest number of preimages across neighboring edges $\sum_{a \in N(b)} |\pi_{(a, b)}^{-1}(\sigma_b)|$. Each vertex in $A$ is then assigned so that it satisfies as many edges as possible. This algorithm works well when $\Sigma_B$ assignments have many preimages.

\item {\bf Know your neighbors -- $|E|/E_N^{max}$-approximation.} For a vertex $a_0 \in A$, pick an element of $\Sigma^*_A(a_0)$ and assign it to $a_0$. Assign its neighbors $N(a_0)$ accordingly. Then, for each node in $N_2(a_0)$, we find one assignment that satisfies all the edges between it and vertices in $N(a_0)$.

\item {\bf Know your neighbors' neighbors -- $O(|E|\overline{p}^{max}/h^*_{max})$-approximation.} For a vertex $a_0 \in A$, we go over all possible assignments in $\Sigma_A^*(a)$ to it. For each assignment, we assign its neighbors $N(a_0)$ accordingly. Then, for each node in $N_2(a_0)$, we keep only the assignments that satisfy all the edges between it and vertices in $N(a_0)$.

 When $a_0$ is assigned the optimal assignment, the number of choices for each node in $N^*_2(a_0)$ is reduced to at most $2\overline{p}^{max}$ possibilities. In this way, we can satisfy $1/2{\overline{p}^{max}}$ fraction of the edges that touch $N^*_2(a_0)$. This satisfies many edges when there exists $a_0\in A$ such that $N^*_2(a_0)$ spans many edges.

\item {\bf Divide and Conquer -- $O(n_A n_B (h^*_{max} + E_N^{max})/|E|^2)$-approximation.} For every $a\in A$, we can fully satisfy $N^*(a) \cup N^*_2(a)$ efficiently, and give up on satisfying other edges that touch this subset. Repeating this process, we can satisfy $\Omega(|E|^2/(n_A n_B (h^*_{max} + E_N^{max})))$ fraction of the edges. 
\end{enumerate}




Aside from the new ``know your neighbors'' algorithm, the main idea of each algorithm remains the same as in the uniform preimage sizes case. All the details of each algorithm are described below.

\subsubsection{Satisfy One Neighbor Algorithm.}

The algorithm is exactly the same as that of the uniform case.

\begin{lemma} \label{dBapprox-nonuniform}
  For satisfiable instances of projection games, an assignment that satisfies at least $n_B$ edges can be found in polynomial time, which gives the approximation ratio of $\frac{|E|}{n_B}$.
\end{lemma}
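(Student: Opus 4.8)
The plan is to observe that this lemma is word-for-word identical to Lemma~\ref{dBapprox}, and the very same algorithm works verbatim in the nonuniform setting, since nothing in it refers to preimage sizes. So I would simply reproduce the argument.

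First I would fix an arbitrary map $\varphi_A : A \to \Sigma_A$ (any assignment at all; e.g.\ pick the lexicographically first element of $\Sigma_A$ for every $a$). Next, for each $b \in B$, I would use connectivity of $G$ to guarantee that $b$ has at least one neighbor $a \in N(b)$, pick one such $a$, and set $\varphi_B(b) = \pi_{(a,b)}(\varphi_A(a))$. By construction the edge $(a,b)$ is then satisfied. Since the choices of $b$ range over distinct vertices of $B$, the satisfied edges obtained this way are pairwise distinct (each uses a different $B$-endpoint), so the assignment $(\varphi_A,\varphi_B)$ satisfies at least $n_B$ edges.

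Finally I would conclude the approximation ratio: the optimum of any projection game instance is at most $|E|$, so an assignment satisfying at least $n_B$ edges is an $\frac{|E|}{n_B}$-approximation. Everything runs in linear time in the size of the input, hence polynomial time.

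I do not expect any genuine obstacle here: the only point requiring a word of justification is that every $b \in B$ has a neighbor, which is immediate from the standing assumption (Section~``Conventions'') that $G$ is connected (and the instance is nontrivial). No property of the projections $\pi_e$ beyond being well-defined functions is used, so the uniform/nonuniform distinction is irrelevant to this lemma.
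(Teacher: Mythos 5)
Your proposal is correct and follows exactly the same approach as the paper, which simply notes that the proof of Lemma~\ref{dBapprox} carries over verbatim: assign $A$ arbitrarily, then assign each $b\in B$ to satisfy one incident edge, yielding at least $n_B$ satisfied edges. Nothing about preimage sizes enters, so the uniform/nonuniform distinction is indeed irrelevant here.
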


\begin{proof}
  The proof is exactly the same as that of Lemma~\ref{dBapprox}.
\end{proof}

\subsubsection{Greedy Assignment Algorithm.}

The algorithm is exactly the same as that of the uniform case.

\begin{lemma} \label{pickbest-nonuniform}
  There exists a polynomial-time $\frac{|\Sigma_A|}{\overline{p}^{max}}$-approximation algorithm for satisfiable instances of projection games.
\end{lemma}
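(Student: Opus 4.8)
The plan is to reuse, essentially verbatim, the greedy algorithm and the averaging analysis from the proof of Lemma~\ref{pickbest}, the only change being that one tracks $\overline{p}^{max}$ in place of $\overline{p}$. Concretely, the algorithm is: (1) for each $b \in B$, assign it $\sigma_b^{max}$, the element of $\Sigma_B$ maximizing $\sum_{a \in N(b)} |\pi^{-1}_{(a,b)}(\sigma_b)|$; (2) for each $a \in A$, assign it the element of $\Sigma_A$ that satisfies the largest number of edges incident to $a$, given the labels already fixed on $B$. Both steps run in polynomial time, and step (1) is exactly the definition of $\sigma_b^{max}$ recorded in the table of notation.

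For the analysis, let $e^*$ be the number of edges satisfied by the output. First I would note that the greedy choice in step (2) does at least as well as a uniformly random choice of $\sigma_a \in \Sigma_A$, so
\begin{align*}
e^* \geq \sum_{a \in A} \frac{1}{|\Sigma_A|}\sum_{\sigma_a \in \Sigma_A} |\{b \in N(a) \mid \pi_{(a,b)}(\sigma_a) = \sigma_b^{max}\}| = \frac{1}{|\Sigma_A|}\sum_{a \in A}\sum_{b \in N(a)} |\pi^{-1}_{(a,b)}(\sigma_b^{max})|.
\end{align*}
Then I would swap the order of summation to group the terms by $b$, which gives $\frac{1}{|\Sigma_A|}\sum_{b \in B}\sum_{a \in N(b)} |\pi^{-1}_{(a,b)}(\sigma_b^{max})| = \frac{1}{|\Sigma_A|}\sum_{e \in E} p_e^{max} = \frac{|E|\,\overline{p}^{max}}{|\Sigma_A|}$, using the definitions of $p_e^{max}$ and $\overline{p}^{max}$. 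Hence the algorithm satisfies at least a $\overline{p}^{max}/|\Sigma_A|$ fraction of the edges, and since the instance is satisfiable (so the optimum is $|E|$), this is a $\frac{|\Sigma_A|}{\overline{p}^{max}}$-approximation.

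There is essentially no obstacle here. The one point worth flagging, relative to Lemma~\ref{pickbest}, is that in the nonuniform setting we no longer route the bound through the optimal labeling $\sigma_b^{OPT}$ of $B$ — there is no longer a clean identity $p_e = \overline{p}$ to invoke — and instead obtain the bound directly from the greedy choice $\sigma_b^{max}$, which is precisely why $\overline{p}^{max}$, rather than $\overline{p}$, is the quantity that appears. Everything else is identical to the uniform case, so the final write-up can simply refer to the proof of Lemma~\ref{pickbest} for the remaining routine manipulations.
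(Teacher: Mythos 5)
Your proposal is correct and matches the paper's proof essentially verbatim: the same greedy two-step algorithm, the same averaging bound on step (2), and the same observation that the greedy choice in step (1) coincides with $\sigma_b^{max}$, which lets one substitute $p_e^{max}$ and sum to $|E|\,\overline{p}^{max}$. Your remark about bypassing $\sigma_b^{OPT}$ in the nonuniform setting is exactly the small adaptation the paper makes relative to Lemma~\ref{pickbest}.
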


\begin{proof}
  The proof of this lemma differs only slightly from the proof of Lemma~\ref{pickbest}.

  The algorithm works as follows:
  \begin{enumerate}
  \item For each $b$, assign it $\sigma^*_b$ that maximizes $\sum_{a \in N(b)} |\pi_{(a, b)}^{-1}(\sigma_b)|$.
  \item For each $a$, assign it $\sigma^*_a$ that maximizes the number of edges satisfied, $|\{b \in N(a) \mid \pi_{(a, b)}(\sigma_a) = \sigma^*_b\}|$.
  \end{enumerate}

  Let $e^*$ be the number of edges that get satisfied by this algorithm. We have
  \begin{align*}
    e^* &= \sum_{a \in A} |\{b \in N(a) \mid \pi_{(a, b)}(\sigma^*_a) = \sigma^*_b\}|.
  \end{align*}

  Due to the second step, for each $a \in A$, the number of edges satisfied is at least an average of the number of edges satisfied over all assignments in $\Sigma_A$. This can be written as follows.
  \begin{align*}
    e^* &= \sum_{a \in A} |\{b \in N(a) \mid \pi_{(a, b)}(\sigma^*_a) = \sigma^*_b\}| \\
    &\geq \sum_{a \in A} \frac{\sum_{\sigma_a \in \Sigma_A} |\{b \in N(a) \mid \pi_{(a, b)}(\sigma_a) = \sigma^*_b\}|}{|\Sigma_A|} \\
    &= \sum_{a \in A} \frac{\sum_{b \in N(a)} |\pi^{-1}_{(a, b)}(\sigma^*_b)|}{|\Sigma_A|} \\
    &= \frac{1}{|\Sigma_A|} \sum_{a \in A} \sum_{b \in N(a)} |\pi^{-1}_{(a, b)}(\sigma^*_b)|. \\
  \end{align*}

  From the definition of $\sigma_b^{max}$, we can conclude that $\sigma_b^* = \sigma_b^{max}$ for all $b \in B$. As a result, we can conclude that
  \begin{align*}
    e^* &\geq \frac{1}{|\Sigma_A|} \sum_{a \in A} \sum_{b \in N(a)} |\pi^{-1}_{(a, b)}(\sigma^*_b)| \\
    &= \frac{1}{|\Sigma_A|} \sum_{a \in A} \sum_{b \in N(a)} |\pi^{-1}_{(a, b)}(\sigma^{max}_b)| \\
    &= \frac{1}{|\Sigma_A|} \sum_{a \in A} \sum_{b \in N(a)} p^{max}_{(a, b)} \\
    &= \frac{1}{|\Sigma_A|} |E||\overline{p}^{max}| \\
    &= \frac{\overline{p}^{max}}{|\Sigma_A|} |E|.
  \end{align*}

  Hence, this algorithm satisfies at least $\frac{\overline{p}^{max}}{|\Sigma_A|}$ fraction of the edges, which concludes our proof.
\end{proof}

\subsection*{Know Your Neighbors Algorithm}
The next algorithm shows that one can satisfy all the edges with one endpoint in the neighbors of a vertex $a_0\in A$.

\begin{lemma} \label{knowyourneighbor-nonuniform}
  For each $a_0 \in A$, there exists a polynomial time $\frac{|E|}{|E(N(a_0))|}$-approximation algorithm for satisfiable instances of projection games.
\end{lemma}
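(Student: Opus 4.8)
The plan is to give a simple polynomial-time algorithm that, for a fixed $a_0 \in A$, satisfies \emph{every} edge with an endpoint in $N(a_0)$, i.e. at least $|E(N(a_0))|$ edges out of $|E|$, which immediately yields the claimed $\frac{|E|}{|E(N(a_0))|}$-approximation ratio. Note that $E(N(a_0))$ is exactly the set of edges incident to $N(a_0)$, and since every such edge has its $B$-endpoint in $N(a_0)$ and its $A$-endpoint in $N_2(a_0)$, satisfying all of them amounts to consistently assigning $\{a_0\} \cup N(a_0) \cup N_2(a_0)$.

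First I would iterate over all assignments $\sigma_{a_0} \in \Sigma^*_A(a_0)$ to $a_0$; this is where $\Sigma^*_A(a_0)$ earns its keep, since by its definition $\sigma^{OPT}_{a_0} \in \Sigma^*_A(a_0)$, so the optimal assignment to $a_0$ is among those tried. For a given $\sigma_{a_0}$, assign each $b \in N(a_0)$ the forced value $\sigma_b = \pi_{(a_0,b)}(\sigma_{a_0})$. Then for each $a \in N_2(a_0)$, form the set $S_a = \bigcap_{b \in N(a) \cap N(a_0)} \pi^{-1}_{(a,b)}(\sigma_b)$ of assignments to $a$ consistent with all already-assigned neighbors in $N(a_0)$; if some $S_a$ is empty, move on to the next $\sigma_{a_0}$, otherwise pick any $\sigma_a \in S_a$. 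By construction this satisfies every edge of $E(N(a_0))$. The key point is that when $\sigma_{a_0} = \sigma^{OPT}_{a_0}$, each $S_a$ contains $\sigma^{OPT}_a$ and is therefore nonempty, so the algorithm does reach the end of the loop for at least one choice, and we output the best assignment found over all $\sigma_{a_0}$ (extended arbitrarily on the remaining vertices). Hence at least $|E(N(a_0))|$ edges are satisfied.

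The running time is polynomial: there are at most $|\Sigma_A|$ choices for $\sigma_{a_0}$ (and $\Sigma^*_A(a_0)$ is computable in polynomial time, as noted in the excerpt), and for each choice the work of propagating to $N(a_0)$ and intersecting preimages over $N_2(a_0)$ is polynomial in $n$, $|\Sigma_A|$, $|\Sigma_B|$. This establishes Lemma~\ref{knowyourneighbor-nonuniform}.

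I do not anticipate a serious obstacle here — this is the "easy" building block among the five algorithms, essentially the $N_2(a_0)$-localization idea of Lemma~\ref{reduction} but without any counting over preimage sizes, since we only need \emph{one} consistent assignment per vertex of $N_2(a_0)$ rather than an averaging argument. The one subtlety worth stating carefully is the role of $\Sigma^*_A(a_0)$ versus $\Sigma_A$: restricting $a_0$'s guesses to $\Sigma^*_A(a_0)$ is harmless because it still contains $\sigma^{OPT}_{a_0}$, and in the nonuniform setting this restriction is what the later "divide and conquer" and "know your neighbors' neighbors" analyses will want to be consistent with; but for this lemma alone one could equally well iterate over all of $\Sigma_A$. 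I would phrase the proof so that the satisfied-edge count $|E(N(a_0))|$ is derived from the observation that $\{(a',b') \in E : b' \in N(a_0)\} = \{(a',b') \in E : a' \in N_2(a_0)\}$, mirroring the identity used in the divide-and-conquer proof.
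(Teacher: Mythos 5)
Your proposal is correct and matches the paper's proof in all essentials: assign $a_0$, propagate to $N(a_0)$, take one consistent label per vertex of $N_2(a_0)$, and observe that this satisfies all of $E(N(a_0))$. The only (harmless) difference is that you iterate over $\Sigma^*_A(a_0)$ with a fallback when some $S_a$ is empty and rely on $\sigma^{OPT}_{a_0}$ being tried, whereas the paper just picks \emph{any} single $\sigma_{a_0} \in \Sigma^*_A(a_0)$ and notes that the definition of $\Sigma^*_A(a_0)$ already guarantees every $S_a$ is nonempty, so no iteration or fallback is needed.
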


\begin{proof}

  The algorithm works as follows:
  \begin{enumerate}
  \item Pick any assignment $\sigma_{a_0} \in \Sigma^*_A(a_0)$ and assign it to $a_0$:
  \item Assign $\sigma_b = \pi_{(a_0, b)}(\sigma_{a_0})$ to $b$ for all $b \in N(a_0)$.
  \item For each $a \in N_2(a_0)$, find the set of plausible assignments to $a$, i.e., $S_a = \{\sigma_a\in\Sigma_A \mid \forall b \in N(a) \cap N(a_0),  \pi_{(a, b)}(\sigma_a) =  \sigma_b\}$. Pick one $\sigma^*_a$ from this set and assign it to $a$. Note that $S_a \ne \emptyset$ from the definition of $\Sigma^*_A(a_0)$.
  \item Assign any assignment to unassigned vertices.
  \item Output the assignment $\{\sigma^*_{a}\}_{a\in A}$, $\{\sigma^*_b\}_{b\in B}$ from the previous step.
    \end{enumerate}

  From step 3, we can conclude that all the edges in $E(N(a_0))$ get statisfied. This yields $\frac{|E|}{|E(N(a_0))|}$ approximation ratio as desired.
\end{proof}

\subsection*{Know Your Neighbors' Neighbors Algorithm}

The next algorithm shows that if the neighbors of neighbors of a vertex $a_0\in A$ expand, then one can satisfy many of the (many!) edges that touch the neighbors of $a_0$'s neighbors. While the core idea is similar to the uniform version, in this version, we will need to consider $N^*_2(a_0, \sigma_{a_0})$ instead of $N_2(a_0)$ in order to ensure that the number of possible choices left for each vertex in this set is at most $2\overline{p}^{max}$.

\begin{lemma} \label{reduction-nonuniform}
  For each $a_0 \in A$ and $\sigma_{a_0} \in \Sigma_A^*(a_0)$, there exists a polynomial-time $O\left(\frac{|E|\overline{p}^{max}}{h^*(a_0, \sigma_{a_0})}\right)$-approximation algorithm for satisfiable instances of projection games.
\end{lemma}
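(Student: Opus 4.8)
The plan is to mimic the proof of Lemma~\ref{reduction} from the uniform case, but using the ``good edge'' machinery developed in the appendix so that the per-vertex choice bound $\overline{p}$ is replaced by $2\overline{p}^{max}$ on the set $N^*_2(a_0,\sigma_{a_0})$. First I would describe the algorithm: iterate over all $\sigma_{a_0}\in\Sigma^*_A(a_0)$; for each, set $\sigma_b=\pi_{(a_0,b)}(\sigma_{a_0})$ for $b\in N(a_0)$; for every $a\in A$ compute the plausible set $S_a=\{\sigma_a\in\Sigma_A\mid \forall b\in N(a)\cap N(a_0),\ \pi_{(a,b)}(\sigma_a)=\sigma_b\}$ (skipping this $\sigma_{a_0}$ if some $S_a=\emptyset$); then for each $b\in B$ pick $\sigma^*_b$ maximizing $\sum_{a\in N(b)\cap N^*_2(a_0,\sigma_{a_0})}|\pi^{-1}_{(a,b)}(\sigma_b)\cap S_a|$; then for each $a$ pick $\sigma^*_a\in S_a$ maximizing the number of satisfied edges; finally output the best assignment found over all choices of $\sigma_{a_0}$.

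Next I would argue correctness by focusing on the iteration where $\sigma_{a_0}=\sigma^{OPT}_{a_0}$, which is in $\Sigma^*_A(a_0)$ by the remark following its definition; here every $S_a$ is nonempty (it contains $\sigma^{OPT}_a$), so the algorithm runs to the end, and by Observation~\ref{obs:optimalgprime} we have $N^*_2(a_0,\sigma_{a_0})=N'_2(a_0)$ and $E^*(a_0,\sigma_{a_0})=E'(N'(a_0))$. The averaging step over $S_a$ gives, exactly as in Lemma~\ref{reduction},
\begin{align*}
e^* \geq \sum_{b\in B}\sum_{a\in N(b)\cap N^*_2(a_0,\sigma_{a_0})}\frac{|\pi^{-1}_{(a,b)}(\sigma^*_b)\cap S_a|}{|S_a|}.
\end{align*}
The key new point is the bound $|S_a|\le 2\overline{p}^{max}$ for $a\in N^*_2(a_0,\sigma_{a_0})$: by definition of $N^*_2$, such an $a$ lies in $N(b_0)$ for some $b_0\in N^*(a_0,\sigma_{a_0})$ with $|\pi^{-1}_{(a,b_0)}(\pi_{(a_0,b_0)}(\sigma_{a_0}))|\le 2\overline{p}^{max}$, and since $b_0\in N(a)\cap N(a_0)$ we get $S_a\subseteq\pi^{-1}_{(a,b_0)}(\sigma_{b_0})$, whence $|S_a|\le 2\overline{p}^{max}$. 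Substituting, using the maximality of $\sigma^*_b$ against the choice $\sigma^{OPT}_b$, and noting $\sigma^{OPT}_a\in\pi^{-1}_{(a,b)}(\sigma^{OPT}_b)\cap S_a$, yields
\begin{align*}
e^* \geq \frac{1}{2\overline{p}^{max}}\sum_{b\in B}\sum_{a\in N(b)\cap N^*_2(a_0,\sigma_{a_0})}1 = \frac{1}{2\overline{p}^{max}}\,|E(N^*_2(a_0,\sigma_{a_0}))| = \frac{h^*(a_0,\sigma_{a_0})}{2\overline{p}^{max}},
\end{align*}
which gives the claimed $O(|E|\overline{p}^{max}/h^*(a_0,\sigma_{a_0}))$ ratio; the running time is polynomial since $|\Sigma^*_A(a_0)|\le|\Sigma_A|$ and everything inside the loop is polynomial.

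The main obstacle, and the place where care is needed beyond a verbatim port of Lemma~\ref{reduction}, is double-checking that restricting the $\sigma^*_b$-maximization to $N(b)\cap N^*_2(a_0,\sigma_{a_0})$ (rather than all of $N(b)\cap N_2(a_0)$) does not break the chain of inequalities — it does not, because we only ever lower-bound $e^*$ by the contribution of vertices in $N^*_2(a_0,\sigma_{a_0})$ anyway — and that the bound $|S_a|\le 2\overline{p}^{max}$ genuinely uses membership in $N^*_2$ and not merely $N_2$, which is exactly why the appendix introduces the starred sets. A minor subtlety is that the quantity $h^*(a_0,\sigma_{a_0})$ in the statement matches $|E(N^*_2(a_0,\sigma_{a_0}))|$ by definition, so no extra accounting (e.g.\ relating to $E^*$ or $E'$) is required for this particular lemma; those relations will matter only when combining the five algorithms.
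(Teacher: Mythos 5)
There is a genuine gap. The lemma must hold for an \emph{arbitrary} $\sigma_{a_0}\in\Sigma_A^*(a_0)$, and indeed the proof of Theorem~\ref{t:approx} invokes it with the $(a_0,\sigma_{a_0})$ that maximizes the (computable) quantity $h^*(a_0,\sigma_{a_0})$, which need not be $\sigma^{OPT}_{a_0}$. Your analysis, however, restricts to the iteration $\sigma_{a_0}=\sigma^{OPT}_{a_0}$: you need $\sigma^{OPT}_a\in S_a$ to conclude $\pi^{-1}_{(a,b)}(\sigma^{OPT}_b)\cap S_a\neq\emptyset$, and both that and the comparison of $\sigma^*_b$ against $\sigma^{OPT}_b$ hold only in that iteration. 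You therefore only obtain a lower bound of $h^*(a_0,\sigma^{OPT}_{a_0})/(2\overline{p}^{max})$ satisfied edges, and there is no reason $h^*(a_0,\sigma^{OPT}_{a_0})\geq h^*(a_0,\sigma_{a_0})$ for the $\sigma_{a_0}$ you were actually given (nor can you detect which iteration corresponds to $\sigma^{OPT}_{a_0}$). Your use of Observation~\ref{obs:optimalgprime} is another symptom of the same issue: it too holds only at $\sigma^{OPT}_{a_0}$, and is in fact not needed for this lemma.

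The paper's proof fixes the given $\sigma_{a_0}$ (no iteration) and replaces both appeals to $\sigma^{OPT}$ by the defining property of $\Sigma^*_A(a_0)$: since $\sigma_{a_0}\in\Sigma^*_A(a_0)$, every $S_a$ is nonempty, and, for each $b\in B$, there exists \emph{some} $\sigma_b\in\Sigma_B$ with $\pi^{-1}_{(a,b)}(\sigma_b)\cap S_a\neq\emptyset$ for every $a\in N_2(a_0)\cap N(b)$ (hence for every $a\in N^*_2(a_0,\sigma_{a_0})\cap N(b)$). Comparing the algorithm's greedy $\sigma^*_b$ against \emph{this} $\sigma_b$ gives $|\pi^{-1}_{(a,b)}(\sigma^*_b)\cap S_a|\geq 1$ and the chain of inequalities goes through for the given $\sigma_{a_0}$, yielding $e^*\geq h^*(a_0,\sigma_{a_0})/(2\overline{p}^{max})$ as required. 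Your bound $|S_a|\leq 2\overline{p}^{max}$ for $a\in N^*_2(a_0,\sigma_{a_0})$ is correct and matches the paper; the fix you need is only to drop the $\sigma^{OPT}$-dependence from the final two steps by using the existential guarantee in the definition of $\Sigma^*_A(a_0)$.
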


\begin{proof}
  To prove Lemma~\ref{reduction-nonuniform}, we first fix $a_0 \in A$ and $\sigma_{a_0} \in \Sigma_A^*(a_0)$. We will describe an algorithm that satisfies $\Omega\left(\frac{h^*(a_0, \sigma_{a_0})}{\overline{p}^{max}}\right)$ edges, which implies the lemma.

  The algorithm works as follows:
  \begin{enumerate}
  \item Assign $\sigma_b = \pi_{(a_0, b)}(\sigma_{a_0})$ to $b$ for all $b \in N(a_0)$.
  \item For each $a \in A$, find the set of plausible assignments to $a$, i.e., $S_a = \{\sigma_a \in \Sigma_A \mid \forall b \in N(a) \cap N(a_0),  \pi_{(a, b)}(\sigma_a) =  \sigma_b\}$. Note that $S_a \ne \emptyset$ from the definition of $\Sigma^*_A(a_0)$.
  \item For all $b \in B$, pick an assignment $\sigma^*_b$ for $b$ that maximizes the average number of satisfied edges over all assignments in $S_a$ to vertices $a$ in $N(b) \cap N^*_2(a_0)$, i.e., maximizes $\sum_{a \in N(b) \cap N^*_2(a_0)} |\pi^{-1}_{(a, b)}(\sigma_b) \cap S_a|$.
  \item For each vertex $a \in A$, pick an assignment $\sigma^*_a \in S_a$ that maximizes the number of satisfied edges, $|\{b \in N(a) \mid \pi_{(a, b)}(\sigma_a) = \sigma_b^*\}|$ over all $\sigma_a \in S_a$.
  \end{enumerate}

  We will prove that this algorithm indeed satisfies at least $\frac{h^*(a_0, \sigma_{a_0})}{\overline{p}^{max}}$ edges.

  Let $e^*$ be the number of edges satisfied by the algorithm.  We have
  \begin{align*}
    e^* &= \sum_{a \in A} |\{b \in N(a) \mid \pi_{(a, b)}(\sigma^*_a) = \sigma^*_b\}|.
  \end{align*}

  Since for each $a \in A$, the assignment $\sigma^*_a$ is chosen to maximize the number of edges satisfied, we can conclude that the number of edges satisfied by selecting $\sigma^*_a$ is at least the average of the number of edges satisfied over all $\sigma_a \in S_a$.

  As a result, we can conclude that
  \begin{align*}
    e^* &\geq \sum_{a \in A} \frac{\sum_{\sigma_a \in S_a} |\{b \in N(a) \mid \pi_{(a, b)}(\sigma_a) = \sigma^*_b\}|}{|S_a|} \\
    &= \sum_{a \in A} \frac{\sum_{\sigma_a \in S_a} \sum_{b \in N(a)} 1_{\pi_{(a, b)}(\sigma_a) = \sigma^*_b}}{|S_a|} \\
    &= \sum_{a \in A} \frac{\sum_{b \in N(a)} \sum_{\sigma_a \in S_a} 1_{\pi_{(a, b)}(\sigma_a) = \sigma^*_b}}{|S_a|} \\
    &= \sum_{a \in A} \frac{\sum_{b \in N(a)} |\pi_{(a, b)}^{-1}(\sigma^*_b) \cap S_a|}{|S_a|} \\
    &= \sum_{b \in B} \sum_{a \in N(b)} \frac{|\pi_{(a, b)}^{-1}(\sigma^*_b) \cap S_a|}{|S_a|} \\
    &\geq \sum_{b \in B} \sum_{a \in N(b) \cap N^*_2(a_0, \sigma_{a_0})} \frac{ |\pi_{(a, b)}^{-1}(\sigma^*_b) \cap S_a|}{|S_a|}
  \end{align*}

  From the definition of $N^*_2(a_0, \sigma_{a_0})$, we can conclude that, for each $a \in N^*_2(a_0, \sigma_{a_0})$, there exists $b' \in N^*(a_0) \cap N(a)$ such that $|\pi^{-1}_{(a, b')}(\sigma_{b'})| \leq 2\overline{p}^{max}$. Moreover, from the definition of $S_a$, we have $S_a \subseteq \pi^{-1}_{(a, b')} (\sigma_{b'})$. As a result, we can arrive at the following inequalities.
  \begin{align*}
    |S_a| &\leq |\pi^{-1}_{(a, b')} (\sigma_{b'})| \\
    &\leq 2\overline{p}^{max}.
  \end{align*}

  This implies that
  \begin{align*}
    e^* &\geq \frac{1}{2\overline{p}^{max}} \sum_{b \in B} \sum_{a \in N(b) \cap N^*_2(a_0, \sigma_{a_0})} |\pi_{(a, b)}^{-1}(\sigma^*_b) \cap S_a|.
  \end{align*}

  From the definition of $\Sigma_A^*(a_0)$, we can conclude that, for each $b \in B$, there exists $\sigma_b \in B$ such that $\pi_{(a, b)}^{-1}(\sigma_b) \cap S_a \ne \emptyset$ for all $a \in N_2(a_0) \cap N(b)$. Since $N^*_2(a_0, \sigma_{a_0}) \subseteq N_2(a_0)$, we can conclude that $|\pi_{(a, b)}^{-1}(\sigma_b) \cap S_a| \geq 1$ for all $a \in N^*_2(a_0, \sigma_{a_0}) \cap N(b)$.

  Since we pick the assignment $\sigma^*_b$ that maximizes $\sum_{a \in N(b) \cap N^*_2(a_0)} |\pi^{-1}_{(a, b)}(\sigma^*_b) \cap S_a|$ for each $b \in B$, we can conclude that
  \begin{align*}
    e^* &\geq \frac{1}{2\overline{p}^{max}} \sum_{b \in B} \sum_{a \in N(b) \cap N^*_2(a_0, \sigma_{a_0})} |\pi_{(a, b)}^{-1}(\sigma^*_b) \cap S_a|\\
    &\geq \frac{1}{2\overline{p}^{max}} \sum_{b \in B} \sum_{a \in N(b) \cap N^*_2(a_0, \sigma_{a_0})} |\pi_{(a, b)}^{-1}(\sigma_b) \cap S_a| \\
    &\geq \frac{1}{2\overline{p}^{max}} \sum_{b \in B} \sum_{a \in N(b) \cap N^*_2(a_0, \sigma_{a_0})} 1.
  \end{align*}

  The last term can be rewritten as
  \begin{align*}
    \frac{1}{2\overline{p}^{max}} \sum_{b \in B} \sum_{a \in N(b) \cap N^*_2(a_0, \sigma_{a_0})} 1 &= \frac{1}{2\overline{p}^{max}} \sum_{a \in N^*_2(a_0, \sigma_{a_0})} \sum_{b \in N(a)} 1 \\
    &= \frac{1}{2\overline{p}^{max}} \sum_{a \in N^*_2(a_0, \sigma_{a_0})} d_a \\
    &= \frac{h^*(a_0, \sigma_{a_0})}{2\overline{p}^{max}}.
  \end{align*}

  As a result, we can conclude that this algorithm gives an assignment that satisfies at least $\frac{h^*(a_0, \sigma_{a_0})}{2\overline{p}^{max}}$ edges out of all the $|E|$ edges. Hence, this is a polynomial-time $O\left(\frac{|E|\overline{p}^{max}}{h^*(a_0, \sigma_{a_0})}\right)$-approximation algorithm as desired.
\end{proof}

\subsubsection{Divide and Conquer Algorithm.}

We will present an algorithm that separates the graph into disjoint subgraphs for which we can find the optimal assignments in polynomial time. We shall show below that, if $h^*(a, \sigma_a)$ is small for all $a \in A$ and $\sigma_a \in \Sigma^*_A(a)$, then we are able to find such subgraphs that contain most of the graph's edges.

\begin{lemma} \label{separation-nonuniform} There exists a polynomial-time $O\left(\frac{n_An_B(h^*_{max} + E_N^{max})}{|E|^2}\right)$-approximation algorithm for satisfiable instances of projection games.
\end{lemma}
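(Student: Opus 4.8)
The plan is to run a greedy peeling procedure analogous to the one in the proof of Lemma~\ref{separation}, but now peeling the ``good'' sub-neighborhoods $N^*(a,\sigma_a)\cup N^*_2(a,\sigma_a)$ rather than $N(a)\cup N_2(a)$, and paying an additional $E_N^{max}$ term for the edges hanging off $N^*(a,\sigma_a)\subseteq N(a)$. Fix a small absolute constant $c_0<1/4$ (e.g.\ $c_0=1/16$), and maintain a collection $\mathcal{P}$ of pairwise vertex-disjoint sets, each with a committed partial assignment. Starting from $\mathcal{P}=\emptyset$: while there exist $a\in A$ and $\sigma_a\in\Sigma_A^*(a)$ such that $P:=(N^*(a,\sigma_a)\cup N^*_2(a,\sigma_a))-V_\mathcal{P}$ spans at least $c_0|E|^2/(n_An_B)$ edges (i.e.\ $|E_P|\ge c_0|E|^2/(n_An_B)$, where $E_P$ is the set of edges with both endpoints in $P$), add such a $P$ to $\mathcal{P}$ and commit the assignment that puts $\sigma_a$ on $a$, puts $\pi_{(a,b)}(\sigma_a)$ on each $b\in N^*(a,\sigma_a)$, and picks, for every $a'\in N^*_2(a,\sigma_a)$, an arbitrary $\sigma_{a'}\in S_{a'}$ (this set is nonempty by the definition of $\Sigma_A^*(a)$). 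When the loop stops, complete the assignment arbitrarily. Every edge of $E_P$ joins a vertex of $N^*(a,\sigma_a)\subseteq N(a)$ to a vertex of $N^*_2(a,\sigma_a)\subseteq A$, and $\sigma_{a'}\in S_{a'}$ forces $\pi_{(a',b)}(\sigma_{a'})=\pi_{(a,b)}(\sigma_a)$ for every $b\in N(a')\cap N(a)$, so the committed assignment satisfies all of $E_P$; since the pieces are vertex-disjoint, the final assignment satisfies all of $E_\mathcal{P}:=\bigcup_{P\in\mathcal{P}}E_P$. As $N^*(a,\sigma_a),N^*_2(a,\sigma_a),S_{a'}$ and the threshold are polynomial-time computable, so is the whole algorithm.

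It then suffices to prove $|E_\mathcal{P}|=\Omega\!\left(|E|^3/(n_An_B(h^*_{max}+E_N^{max}))\right)$; since the optimum is at most $|E|$, this gives the claimed ratio. The argument has two parts mirroring Lemma~\ref{separation}. First, let $r:=|E'_{(A\cup B)-V_\mathcal{P}}|$ be the number of good edges with both endpoints outside $V_\mathcal{P}$ at termination. Applying the termination condition with $\sigma_a=\sigma_a^{OPT}\in\Sigma_A^*(a)$ for each $a\in A-V_\mathcal{P}$, Observation~\ref{obs:optimalgprime} gives $N^*(a,\sigma_a^{OPT})=N'(a)$ and $N^*_2(a,\sigma_a^{OPT})=N'_2(a)$, and a good edge incident to a vertex $b\in N'(a)$ has its $A$-endpoint in $N'_2(a)$ — precisely because the $2\overline{p}^{max}$ threshold in the definitions of $E'$ and $N^*_2$ coincide — so $P=(N'(a)\cup N'_2(a))-V_\mathcal{P}$ contains at least $\sum_{b\in N'(a)-V_\mathcal{P}}|N'(b)-V_\mathcal{P}|$ edges. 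Hence this quantity is $<c_0|E|^2/(n_An_B)$ for every $a\in A-V_\mathcal{P}$; summing over $a\in A-V_\mathcal{P}$, rewriting the double sum as $\sum_{b\in B-V_\mathcal{P}}|N'(b)-V_\mathcal{P}|^2$, and applying Jensen's inequality gives $r^2/n_B<c_0|E|^2/n_B$, i.e.\ $r<\sqrt{c_0}\,|E|\le|E|/4$. With $|E'|\ge|E|/2$ from Observation~\ref{obs:edges}, we get $|E'|-r>|E|/4$.

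Second, I would prove by induction on the execution that at every point $|E_\mathcal{P}|\ge \dfrac{c_0|E|^2}{n_An_B(h^*_{max}+E_N^{max})}\bigl(|E'|-|E'_{(A\cup B)-V_\mathcal{P}}|\bigr)$. The base case is trivial. When a piece is peeled around $(a,\sigma_a)$, $|E_\mathcal{P}|$ grows by $|E_P|\ge c_0|E|^2/(n_An_B)$, while $|E'_{(A\cup B)-V_\mathcal{P}}|$ drops by at most $h^*_{max}+E_N^{max}$: every good edge it loses touches $N^*(a,\sigma_a)\cup N^*_2(a,\sigma_a)$, the edges touching $N^*_2(a,\sigma_a)\subseteq A$ number $h^*(a,\sigma_a)\le h^*_{max}$, and those touching $N^*(a,\sigma_a)\subseteq N(a)$ number at most $|E(N(a))|\le E_N^{max}$; since $c_0|E|^2/(n_An_B)$ equals the coefficient times $h^*_{max}+E_N^{max}$, the invariant is preserved. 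Plugging $|E'|-r>|E|/4$ into the invariant at termination yields $|E_\mathcal{P}|>c_0|E|^3/(4n_An_B(h^*_{max}+E_N^{max}))$, completing the proof.

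The step I expect to demand the most care is the first part: the algorithm has to enumerate all $\sigma_a\in\Sigma_A^*(a)$ (to guarantee $S_{a'}\ne\emptyset$ and to include the unknown $\sigma_a^{OPT}$ among its candidates), while the analysis is conducted against the fixed-but-unknown ``good-edge'' graph $G'$; one must check carefully, via Observations~\ref{obs:edges} and~\ref{obs:optimalgprime}, that peeling the $\sigma_a^{OPT}$-piece captures enough good edges and that the accounting of lost edges matches $h^*(a,\sigma_a)+|E(N(a))|$. A secondary point is that, unlike in Lemma~\ref{separation}, the peeling threshold must be $\Theta(|E|^2/(n_An_B))$ rather than $\Theta(|E'|^2/(n_An_B))$ (since the algorithm does not know $E'$), which forces $c_0$ to be chosen slightly below $1/4$ so that termination still delivers $r<|E|/4<|E'|$; all remaining steps are routine adaptations of the uniform case.
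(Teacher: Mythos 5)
Your proposal is correct and follows essentially the same strategy as the paper's own proof: iteratively peel $(N^*(a,\sigma_a)\cup N^*_2(a,\sigma_a))-V_\mathcal{P}$ while it spans enough edges, then argue termination via Observations~\ref{obs:edges} and~\ref{obs:optimalgprime} plus Jensen's inequality, and couple this with an inductive invariant whose per-step loss is bounded by $h^*_{max}+E_N^{max}$. The only (immaterial) deviations are using $|E_P|$ rather than $|E^*(a,\sigma_a)\cap E_{(A\cup B)-V_\mathcal{P}}|$ as the peeling criterion and tracking $|E'|$ rather than $|E|$ in the invariant, both of which still yield the claimed $\Omega\left(|E|^3/(n_An_B(h^*_{max}+E_N^{max}))\right)$ bound.
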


\begin{proof}
  To prove this lemma, we will present an algorithm that gives an assignment that satisfies $\Omega\left(\frac{|E|^3}{n_An_B(h^*_{max} + E_N^{max})}\right)$ edges.

  We use $\mathcal{P}$ to represent the collection of subgraphs we find. The family $\mathcal{P}$ consists of disjoint sets of vertices. Let $V_\mathcal{P}$ be $\bigcup_{P \in \mathcal{P}} P$.

  For any set $S$ of vertices, define $G_S$ to be the graph induced on $S$ with respect to $G$. Moreover, define $E_S$ to be the set of edges of $G_S$. We also define $E_\mathcal{P} = \bigcup_{P \in \mathcal{P}} E_P$. Note that $E_S$ is similar to $E'_S$ defined earlier in the appendix. The only difference is that $E'_S$ is with respect to $G'$ instead of $G$.

  The algorithm works as follows.
  \begin{enumerate}
  \item Set $\mathcal{P} \leftarrow \emptyset$.
  \item\label{a:while-nonuniform} While there exists a vertex $a \in A$ and $\sigma_a \in \Sigma^*_A(a)$ such that $$|E^*(a, \sigma_a) \cap E_{(A \cup B) - V_\mathcal{P}}| \geq \frac{1}{16} \frac{|E|^2}{n_An_B}:$$

    \begin{enumerate}
    \item\label{a:update-P-nonuniform} Set $\mathcal{P} \leftarrow \mathcal{P} \cup \{(N^*_2(a, \sigma_a) \cup N^*(a, \sigma_a)) - V_\mathcal{P}\}$.

    \end{enumerate}
  \item\label{a:assign-nonuniform} For each $P \in \mathcal{P}$, find in time $poly(|\Sigma_A|, |P|)$ an assignment to the vertices in $P$ that satisfies all the edges spanned by $P$. This can be done easily by assigning $\sigma_a$ to $a$ and $\pi_{(a, b)} (\sigma_a)$ to $b \in B \cap P$. Then assign any plausible assignment to all the other vertices in $A \cap P$.
  \end{enumerate}

  We will divide the proof into two parts. First, we will show that when we cannot find a vertex $a$ and an assignment $\sigma_a \in \Sigma^*_A(a)$ in step~\ref{a:while-nonuniform}, $\left|E_{(A \cup B) - V_\mathcal{P}} \right| \leq \frac{3|E|}{4}$. Second, we will show that the resulting assignment from this algorithm satisfies $\Omega\left(\frac{|E|^3}{n_An_B(h^*_{max} + E_N^{max})}\right)$ edges.

  We will start by showing that, if no vertex $a$ and an assignment $\sigma_a \in \Sigma^*_A(a)$ in step~\ref{a:while-nonuniform} exist, then $\left|E_{(A \cup B) - V_\mathcal{P}} \right| \leq \frac{3|E|}{4}$.

  Suppose that we cannot find a vertex $a$ and an assignment $\sigma_a \in \Sigma^*_A(a)$ in step~\ref{a:while-nonuniform}. In other words, $|E^*(a, \sigma_a) \cap E_{(A \cup B) - V_\mathcal{P}}| < \frac{1}{16} \frac{|E|^2}{n_An_B}$ for all $a \in A$ and $\sigma_a \in \Sigma^*_A(a)$.

  Since $\sigma^{OPT}_a \in \Sigma^*_A(a)$ for all $a \in A$, we can conclude that
  \begin{align*}
    |E^*(a, \sigma^{OPT}_a) \cap E_{(A \cup B) - V_\mathcal{P}}| < \frac{1}{16} \frac{|E|^2}{n_An_B}.
  \end{align*}

  From Observation~\ref{obs:optimalgprime}, we have $E^*(a, \sigma^{{OPT}}) = E'(N'(a))$. As a result, we have
  \begin{align*}
    \frac{1}{16} \frac{|E|^2}{n_An_B} &> |E^*(a, \sigma^{OPT}_a) \cap E_{(A \cup B) - V_\mathcal{P}}| \\
    &= |E'(N'(a)) \cap E_{(A \cup B) - V_\mathcal{P}}|
  \end{align*}
  for all $a \in A$.

  Since $E'(N'(a)) = E'_{N'(a) \cup N'_2(a)}$, we can rewrite the last term as
  \begin{align*}
    |E'(N'(a)) \cap E_{(A \cup B) - V_\mathcal{P}}| &= |E'_{N'(a) \cup N'_2(a)} \cap E_{(A \cup B) - V_\mathcal{P}}| \\
    &= |E'_{N'(a) \cup N'_2(a) - V_\mathcal{P}}|.
  \end{align*}

  Consider $\sum_{a \in A} |E'_{N'(a) \cup N'_2(a) - V_\mathcal{P}}|$. Since $|E'_{N'(a) \cup N'_2(a) - V_\mathcal{P}}| < \frac{1}{16} \frac{|E|^2}{n_An_B}$ for all $a \in A$, we have the following inequality:
  \begin{align*}
    \frac{|E|^2}{16n_B} > \sum_{a \in A} |E'_{N'(a) \cup N'_2(a) - V_\mathcal{P}}|.
  \end{align*}


  Let $N^p(v) = N'(v) - V_\mathcal{P}$ and $N_2^p(v) = N'_2(v) - V_\mathcal{P}$. Similary, define $N^p(S)$ for a subset $S \subseteq A \cup B$. It is easy to see that $N_2^p(v) \supseteq N^p(N^p(v))$. This implies that, for all $a \in A$, we have $|E'_{N^p(a) \cup N^p_2(a)}| \geq |E'_{N^p(a) \cup N^p(N^p(a))}|$. Moreover, it is easy to see that, for all $a \in A - V_\mathcal{P}$, we have $|E'_{N^p(a) \cup N^p(N^p(a))}| = \sum_{b \in N^p(a)} |N^p(b)|$.

  Thus, the following holds:
  \begin{align*}
    \sum_{a \in A} |E'_{(N'(a) \cup N'_2(a)) - V_\mathcal{P}}| &= \sum_{a \in A} |E_{(N^p(a) \cup N^p_2(a))}| \\
    &\geq \sum_{a \in A - V_\mathcal{P}} |E_{(N^p(a) \cup N^p_2(a))}| \\
    &= \sum_{a \in A - V_\mathcal{P}} \sum_{b \in N^p(a)} |N^p(b)| \\
    &= \sum_{b \in B - V_\mathcal{P}} \sum_{a \in N^p(b)} |N^p(b)| \\
    &= \sum_{b \in B - V_\mathcal{P}} |N^p(b)|^2. \\
  \end{align*}

  From Jensen's inequality, we have
  \begin{align*}
    \sum_{a \in A} |E'_{(N'(a) \cup N'_2(a)) - V_\mathcal{P}}| &\geq \frac{1}{|B- V_\mathcal{P}|} \left(\sum_{b \in B - V_\mathcal{P}} |N^p(b)| \right)^2 \\
    &= \frac{1}{|B- V_\mathcal{P}|} \left|E'_{\left(A \cup B\right) - V_\mathcal{P}}\right|^2 \\
    &\geq \frac{1}{n_B} \left|E'_{\left(A \cup B\right) - V_\mathcal{P}}\right|^2.
  \end{align*}

  Since $\frac{|E|^2}{16n_B} \geq \sum_{a \in A} |E_{(N'(a) \cup N'_2(a)) - V_\mathcal{P}}|$ and $\sum_{a \in A} |E_{(N'(a) \cup N'_2(a)) - V_\mathcal{P}}| \geq  \frac{1}{n_B} \left|E'_{\left(A \cup B\right) - V_\mathcal{P}}\right|^2$, we can conclude that
  \begin{align*}
    \frac{|E|}{4} \geq \left|E'_{\left(A \cup B\right) - V_\mathcal{P}}\right|.
  \end{align*}

  Consider $E'_{\left(A \cup B\right) - V_\mathcal{P}}$ and $E_{\left(A \cup B\right) - V_\mathcal{P}}$. We have
  \begin{align*}
    E'_{\left(A \cup B\right) - V_\mathcal{P}} \cup (E - E') &\supseteq E_{\left(A \cup B\right) - V_\mathcal{P}} \\
    \left|E'_{\left(A \cup B\right) - V_\mathcal{P}}\right| + \left|E - E'\right| &\geq \left|E_{\left(A \cup B\right) - V_\mathcal{P}}\right| \\
    \frac{|E|}{4} + \left|E - E'\right| &\geq \left|E_{\left(A \cup B\right) - V_\mathcal{P}}\right|.
  \end{align*}

  From Observation \ref{obs:edges}, we have $|E'| \geq \frac{|E|}{2}$. Thus, we have
  \begin{align*}
    \frac{3|E|}{4} &\geq \left|E_{\left(A \cup B\right) - V_\mathcal{P}}\right|,
  \end{align*}
  which concludes the first part of the proof.

  Next, we will show that the assignment the algorithm finds satisfies at least $\Omega\left(\frac{|E|^3}{n_An_B(h^*_{max} + E_N^{max})}\right)$ edges. Since we showed that $\frac{3|E|}{4} \geq \left|E_{\left(A \cup B\right) - V_\mathcal{P}}\right|$ when the algorithm terminates, it is enough to prove that $|E_\mathcal{P}| \geq \frac{|E|^2}{16n_An_B(h^*_{max} + E_N^{max})} \left(|E| - \left|E_{\left(A \cup B\right) - V_\mathcal{P}}\right|\right)$. Note that the algorithm guarantees to satisfy all the edges in $E_\mathcal{P}$.

  We will prove this by using induction to show that at any point in the algorithm, $|E_\mathcal{P}| \geq \frac{|E|^2}{16n_An_B(h^*_{max} + E_N^{max})} \left(|E| - \left|E_{\left(A \cup B\right) - V_\mathcal{P}}\right|\right)$.

  \emph{Base Case.} At the beginning, we have $|E_\mathcal{P}| = 0 = \frac{|E|^2}{16n_An_B(h^*_{max} + E_N^{max})} \left(|E| - \left|E_{\left(A \cup B\right) - V_\mathcal{P}}\right|\right)$, which satisfies the inequality.

  \emph{Inductive Step.} The only step in the algorithm where any term in the inequality changes is step~\ref{a:update-P-nonuniform}. Let $\mathcal{P}_{old}$ and $\mathcal{P}_{new}$ be the set $\mathcal{P}$ before and after step~\ref{a:update-P-nonuniform} is executed, respectively. Let $a$ be the vertex selected in step~\ref{a:while-nonuniform}. Suppose that $\mathcal{P}_{old}$ satisfies the inequality.

  Since $|E_{\mathcal{P}_{new}}| = |E_{\mathcal{P}_{old}}| + |E_{(N^*(a, \sigma_a) \cup N^*_2(a, \sigma_a)) - V_{\mathcal{P}_{old}}}|$, we have
  \begin{align*}
    |E_{\mathcal{P}_{new}}| &= |E_{\mathcal{P}_{old}}| + |E_{(N^*(a, \sigma_a) \cup N^*_2(a, \sigma_a)) - V_{\mathcal{P}_{old}}}| \\
    &= |E_{\mathcal{P}_{old}}| + |E_{(N^*(a, \sigma_a) \cup N^*_2(a, \sigma_a))} \cap E_{(A \cup B) - V_{\mathcal{P}_{old}}}|.
  \end{align*}

  From the condition in step~\ref{a:while-nonuniform}, we have $|E^*(a, \sigma_a) \cap E_{(A \cup B) - V_{\mathcal{P}_{old}}}| \geq \frac{1}{16} \frac{|E|^2}{n_An_B}$. Moreover, $E_{(N^*(a, \sigma_a) \cup N^*_2(a, \sigma_a))} \supseteq E^*(a, \sigma_a)$ holds. As a result, we have
  \begin{align*}
    |E_{\mathcal{P}_{new}}| &= |E_{\mathcal{P}_{old}}| + |E_{(N^*(a, \sigma_a) \cup N^*_2(a, \sigma_a))} \cap E_{A \cup B - V_{\mathcal{P}_{old}}}| \\
    &\geq |E_{\mathcal{P}_{old}}| + |E^*(a, \sigma_a) \cap E_{(A \cup B) - V_{\mathcal{P}_{old}}}| \\
    &\geq |E_{\mathcal{P}_{old}}| + \frac{1}{16} \frac{|E|^2}{n_An_B}.
  \end{align*}

  Now, consider $\left(|E| - |E_{\left(A \cup B\right) - V_{\mathcal{P}_{new}}}|\right) - \left(|E| - |E_{\left(A \cup B\right) - V_{\mathcal{P}_{old}}}|\right)$. We have
  \begin{align*}
    \left(|E| - |E_{\left(A \cup B\right) - V_{\mathcal{P}_{new}}}|\right) - \left(|E| - |E_{\left(A \cup B\right) - V_{\mathcal{P}_{old}}}|\right) = |E_{\left(A \cup B\right) - V_{\mathcal{P}_{old}}}| - |E_{\left(A \cup B\right) - V_{\mathcal{P}_{new}}}|
  \end{align*}

  Since $V_{\mathcal{P}_{new}} = V_{\mathcal{P}_{old}} \cup \left(N^*_2(a, \sigma_a) \cup N^*(a, \sigma_a)\right)$, we can conclude that
  \begin{align*}
    \left((A \cup B) - V_{\mathcal{P}_{old}}\right) \subseteq \left((A \cup B) - V_{\mathcal{P}_{new}}\right) \cup \left(N^*_2(a, \sigma_a) \cup N^*(a, \sigma_a)\right).
  \end{align*}

  Thus, we can also derive
  \begin{align*}
    E_{(A \cup B) - V_{\mathcal{P}_{old}}} &\subseteq E_{\left((A \cup B) - V_{\mathcal{P}_{new}}\right) \cup \left(N^*_2(a, \sigma_a) \cup N^*(a, \sigma_a)\right)} \\
    &= E_{(A \cup B) - V_{\mathcal{P}_{new}}} \cup \{(a', b') \in E \mid a' \in N^*_2(a, \sigma_a) \text{ or } b' \in N^*(a, \sigma_a)\}. \\
  \end{align*}

  Moreover, we can write $\{(a', b') \in E \mid a' \in N^*_2(a, \sigma_a) \text{ or } b' \in N^*(a, \sigma_a)\}$ as $\{(a', b') \in E \mid a' \in N^*_2(a, \sigma_a)\} \cup \{(a', b') \in E \mid b' \in N^*(a, \sigma_a)\}$. Since $N^*(a, \sigma_a) \subseteq N(a)$, we can conclude that
  \begin{align*}
    \{(a', b') \in E \mid a' \in N^*_2(a, \sigma_a) \text{ or } b' \in N^*(a, \sigma_a)\} \subseteq & \{(a', b') \in E \mid a' \in N^*_2(a, \sigma_a)\} \\ & \cup \{(a', b') \in E \mid b' \in N(a)\}.
  \end{align*}

  Thus, we can conclude that
  \begin{align*}
    |\{(a', b') \in E \mid a' \in N^*_2(a, \sigma_a) \text{ or } b' \in N^*(a, \sigma_a)\}| &\leq |\{(a', b') \in E \mid a' \in N^*_2(a, \sigma_a)\}| \\ & \text{ } \text{ } \text{ } + |\{(a', b') \in E \mid b' \in N(a)\}| \\
    &= h^*(a, \sigma_a) + |E(N(a))|.
  \end{align*}

  Hence, we can conclude that
  \begin{align*}
    \left|E_{(A \cup B) - V_{\mathcal{P}_{old}}}\right| &\leq \left|E_{(A \cup B) - V_{\mathcal{P}_{new}}} \cup \{(a', b') \in E \mid a' \in N_2(a) \text{ or } b' \in N(a)\}\right| \\
    &\leq \left|E_{(A \cup B) - V_{\mathcal{P}_{new}}}\right| + \left|\{(a', b') \in E \mid a' \in N_2(a) \text{ or } b' \in N(a)\}\right| \\
    &\leq \left|E_{(A \cup B) - V_{\mathcal{P}_{new}}}\right| + h^*(a, \sigma_a) + |E(N(a))| \\
    &\leq \left|E_{(A \cup B) - V_{\mathcal{P}_{new}}}\right| + h^*_{max} + E_N^{max}.
  \end{align*}

  This implies that $\left(|E| - \left|E_{\left(A \cup B\right) - V_{\mathcal{P}}}\right|\right)$ increases by at most $h^*_{max} + E_N^{max}$.

  Hence, since $\left(|E| - \left|E_{\left(A \cup B\right) - V_{\mathcal{P}}}\right|\right)$ increases by at most $h^*_{max} + E_N^{max}$ and $\left|E_\mathcal{P}\right|$ increases by at least $\frac{1}{16}\frac{|E|^2}{n_An_B}$ and from the inductive hypothesis, we can conclude that
  \begin{align*}
    |E_{\mathcal{P}_{new}}| \geq \frac{|E|^2}{16n_An_B(h^*_{max} + E_N^{max})} \left(|E| - \left|E_{\left(A \cup B\right) - V_{\mathcal{P}_{new}}}\right|\right).
  \end{align*}

  Thus, the inductive step is true and the inequality holds at any point during the execution of the algorithm.

  When the algorithm terminates, since $|E_\mathcal{P}| \geq \frac{|E|^2}{16n_An_B(h^*_{max} + E_N^{max})} \left(|E| - \left|E_{\left(A \cup B\right) - V_\mathcal{P}}\right|\right)$ and $\frac{3|E|}{4} \geq \left|E_{\left(A \cup B\right) - V_\mathcal{P}}\right|$, we can conclude that $|E_{\mathcal{P}}| \geq \frac{|E|^3}{64n_An_B(h^*_{max} + E_N^{max})}$. Since the algorithm guarantees to satisfy every edge in $E_{\mathcal{P}}$, it yields an $O\left(\frac{n_An_B(h^*_{max} + E_N^{max})}{|E|^2}\right)$ approximation ratio, which concludes our proof of Lemma~\ref{separation-nonuniform}.
\end{proof}

\subsection*{Proof of Theorem~\ref{t:approx}}

\begin{proof}
  Using Lemma~\ref{reduction-nonuniform} with $a_0$ and $\sigma_{a_0}$ that maximizes the value of $h^*(a_0, \sigma_{a_0})$, i.e., $h^*(a_0, \sigma_{a_0}) = h^*_{max}$, we can conclude that there exists a polynomial-time $O\left(\frac{|E|\overline{p}^{max}}{h^*_{max}}\right)$-approximation algorithm for satisfiable instances of projection games.

  Similarly, from Leamma~\ref{knowyourneighbor-nonuniform} with $a_0$ that maximizes the value of $E(N(a_0))$, i.e., $|E(N(a_0))| = E_N^{max}$, there exists a polynomial-time $\frac{|E|}{E_N^{max}}$-approximation algorithm for satisfiable instances of projection games.

  Moreover, from Lemmas~\ref{dBapprox-nonuniform},~\ref{pickbest-nonuniform} and~\ref{separation-nonuniform}, there exists a polynomial-time $\frac{|E|}{n_B}$-approximation algorithm, a polynomial-time $\frac{|\Sigma_A|}{\overline{p}^{max}}$-approximation algorithm and a polynomial time $O\left(\frac{n_An_B(h^*_{max} + E_N^{max})}{|E|^2}\right)$-approximation algorithm for satisfiable instances of the projection game.

  Consider the following two cases.

  First, if $h^*_{max} \geq E_N^{max}$, we have $O(n_A n_B (h^*_{max} + E_N^{max})/|E|^2) = O(n_An_Bh^*_{max}/|E|^2)$. Using the best of the first, second, fourth and fifth algorithms, the smallest of the four approximation factors is at most as large as their geometric mean, i.e., $$O\left(\sqrt[4]{\frac{|E|}{n_B}\cdot\frac{|\Sigma_A|}{\overline{p}^{max}}\cdot \frac{|E|\overline{p}^{max}}{h^*_{max}}\cdot\frac{n_A n_B h^*_{max}}{|E|^2}}\right)= O((n_A|\Sigma_A|)^{1/4}).$$

  Second, if $E_N^{max} > h^*_{max}$, we have $O(n_A n_B (h^*_{max} + E_N^{max})/|E|^2) = O(n_An_BE_N^{max}/|E|^2)$. We use the best answer we get from the first, second, third and fifth algorithms. The smallest of the four approximation factors is at most as large as their geometric mean, i.e., $$O\left(\sqrt[4]{\frac{|E|}{n_B}\cdot\frac{|\Sigma_A|}{\overline{p}^{max}}\cdot \frac{|E|}{E_N^{max}} \cdot\frac{n_A n_B E_N^{max}}{|E|^2}}\right)= O\left(\left(\frac{n_A|\Sigma_A|}{\overline{p}^{max}}\right)^{1/4}\right).$$
  It is obvious that $\overline{p}^{max}$ is at least one. Thus, we can conclude that the approximation factor is at most $O((n_A|\Sigma_A|)^{\frac{1}{4}})$.

  This concludes the proof of Theorem~\ref{t:approx} for the nonuniform preimage sizes case.
\end{proof}

\end{document}